\renewcommand*{\@textcolor}[3]{%
  \protect\leavevmode
  \begingroup
    \color#1{#2}#3%
  \endgroup
}
\newtheorem{teo}{Theorem}[section]
\newtheorem{lem}{Lemma}[section]
\newtheorem{pro}{Proposition}[section]
\theoremstyle{remark}
\newtheorem{rem}{Remark}[section]
\newcommand{\R}{\mathbb{R}}
\newcommand{\C}{\mathbb{C}}
\newcommand{\flap}{(-\Delta)^s}
\newcommand{\flaph}{(-\Delta)^{s/2}}
\newcommand{\dom}{\mathcal{D}}
\newcommand{\ham}{\mathcal{H}_s^\alpha}
\newcommand{\hamz}{\mathcal{H}_s^0}
\newcommand{\hamn}{\mathcal{H}_s^{n}}
\newcommand{\form}{\mathcal{F}_s^\alpha}
\newcommand{\formz}{\mathcal{F}_s^0}
\newcommand{\propz}{\mathcal{U}_s}
\newcommand{\app}{\mathcal{H}_s^\varepsilon}
\newcommand{\green}{\mathcal{G}_s^\lambda}
\newcommand{\greeno}{\mathcal{G}_s^\omega}
\newcommand{\sgn}{\mathrm{sgn}}
\newcommand{\ep}{\varepsilon}
\newcommand{\LL}{\mathcal{L}}
\newcommand{\tr}[1]{\widehat{{#1}}}
\newcommand{\f}[2]{\frac{#1}{#2}}
\newcommand{\tf}[2]{\tfrac{#1}{#2}}
\newcommand{\re}[1]{\mathrm{Re}\left\{#1\right\}}
\newcommand{\im}[1]{\mathrm{Im}\left\{#1\right\}}
\newcommand{\ov}[1]{\overline{#1}}
\newcommand{\dx}{\,dx}
\newcommand{\dy}{\,dy}
\newcommand{\dk}{\,dk}
\newcommand{\dt}{\,dt}
\newcommand{\dz}{\,dz}
\newcommand{\dpi}{\,dp}
\newcommand{\dtau}{\,d\tau}
\newcommand{\dro}{\,d\rho}
\newcommand{\dome}{\,d\omega}
\newcommand{\deta}{\,d\eta}
\newcommand{\rafa}[1]{\added[id=]{#1}}
\newcommand{\rafd}[1]{\deleted[id=]{#1}}
\newcommand{\rafr}[2]{\replaced[id=]{#1}{#2}}
\newcommand{\lora}[1]{\added[id=]{#1}}
\newcommand{\lord}[1]{\deleted[id=]{#1}}
\newcommand{\lorr}[2]{\replaced[id=]{#1}{#2}}
\newcommand{\doma}[1]{\added[id=]{#1}}
\begin{document}
 
 \title[Nonlinear singular perturbations of the FSE in d=1]{Nonlinear singular perturbations of the \\ fractional Schr\"odinger equation in dimension one}

\author[R. Carlone]{Raffaele Carlone}
\address{Universit\`{a} degli Studi di Napoli ``Federico II'', Dipartimento di Matematica e Applicazioni ``R. Caccioppoli'', MSA, via Cinthia, I-80126, Napoli, Italy.}
\email{raffaele.carlone@unina.it}
\author[D. Finco]{Domenico Finco}
\address{Universit\`a Telematica Internazionale Uninettuno, Facolt\`a di Ingegneria, corso Vittorio Emanuele II, 00186, Roma, Italy.}
\email{d.finco@uninettunouniversity.net}
\author[L. Tentarelli]{Lorenzo Tentarelli}
\address{Universit\`{a} degli Studi di Napoli ``Federico II'', Dipartimento di Matematica e Applicazioni ``R. Caccioppoli'', MSA, via Cinthia, I-80126, Napoli, Italy.}
\email{lorenzo.tentarelli@unina.it}

\date{\today}

\begin{abstract} 
 The paper discusses nonlinear singular perturbations of delta type of the fractional Schr\"odinger equation $\imath\partial_t\psi=\flap\psi$, with $s\in(\f{1}{2},1]$, in dimension one. Precisely, we investigate local and global well posedness (in a strong sense), conservations laws and existence of blow-up solutions and standing waves.
\end{abstract}

\maketitle

%%%%%%%%%%%%%%%%%%%%%%%%%%%%%%%%%%%%%%%%%%%%%%%%%%%%%%%%%%%%%%%%%%%%%%%%%%%%%%%%%%
%%%%%%%%%%%%%%%%%%%%%%%%%%%%%%%%%%%%%%%%%%%%%%%%%%%%%%%%%%%%%%%%%%%%%%%%%%%%%%%%%%
%%%%%%%%%%%%%%%%%%%%%%%%%%%%%%%%%%%%%%%%%%%%%%%%%%%%%%%%%%%%%%%%%%%%%%%%%%%%%%%%%%

\section{Introduction}
The appearance of the term ``fractional Schr\"odinger equation'' (FSE) traces back to  the pioneering papers  by Laskin (\cite{L1,L2}). In those papers this equation, i.e.
\begin{equation}
 \label{eq-free}
 \imath\partial_t\psi=\flap\psi,
\end{equation}
turns out when  the Feynman path integral is extended from the Brownian-like to the L\'evy-like quantum mechanical paths. On the other hand, in potential theory, the fractional \rafr{Laplacian}{laplacian} was also introduced as the infinitesimal generator of some L\'evy processes (\cite{BE,LA}).

In the most recent physics literature the fractional nonlinear Schr\"odinger  equation (FNSE) have been exploited in many different frameworks.

The FNSE arises, for instance, in the investigation of quantum effects in Bose-Einstein Condensation (\cite{UB}). The condensation is usually described by a Gross-Pitaevskii equation when a boson system is ideal, i.e. at low temperature and weakly interacting. If the medium is inhomogeneous  with long-range  interactions, it was proved that the dynamics can \rafa{be} described by a fractional Gross-Pitaevskii equation. This phenomenon is due to a general feature of fractional equations, which are able to take into the account decoherence and turbulence effects. In fact, \rafr{these}{this} equations can be also used to investigate decoherence effects in several physical models (\cite{KZ}). 

However, physical models based on the FNSE as effective equation are not restricted to the quantum domain. The study of the long-time behavior of the solutions of the water waves equation in $\mathbb{R}^{2}$ relies, for instance, on the FNSE (see \cite{IP} and references therein). Furthermore, it has been recently used in optics (\cite{L}), in solid state physics (\cite{PV}) and biological systems (\cite{DV,MV}).

\medskip
From the mathematical point of view, the FNSE has been widely studied in the last years. We mention, for instance, \cite{BHL,GH,KLR} for the standard FNSE and \cite{CHH1,CHKL1,CHKL2,HY} for the variant (relevant for physical applications) provided by Hartree-type nonlinearities.

On the other hand, the first discussion of a FSE in presence of delta potentials is due to \cite{LRSRM}: the main result contained in this paper is the analysis of an anomalous spreading of the solutions of the Cauchy problem characterized by a power-law behavior and related to the order of the fractional operator.

More recently \cite{MOS,MS,S} addressed the problem of  singular perturbations of the fractional \rafr{Laplacian}{laplacian}. In particular, they show that rank-one linear singular perturbations of the $d$-dimensional fractional Laplacian can be obtained as the norm-resolvent limit of fractional Schr\"odinger operators with shrinking potentials.

\medskip
In this paper we discuss \emph{nonlinear} singular perturbations of delta type of \eqref{eq-free} in dimension one, with $s\in(\f{1}{2},1]$. The main purpose is to investigate the well-posedness and the dynamical features of the associated Cauchy problem.

The major characteristic  of this kind of equations is that, albeit nonlinear, they fall under the so-called \emph{solvable models}: the investigation of the time evolution can be reduced to that of an ODE-type equation (see, e.g., \cite{AGH-KH}). 
For the ordinary \rafr{Laplacian}{laplacian} in $\mathbb{R}$ the case of a concentrated nonlinearity has been studied in \cite{AT} (see also \cite{CFN}) and then extended to $\mathbb{R}^{2}$ and $\mathbb{R}^{3}$ in \cite{ACCT,CCF,CCT,CFT2} and \cite{ADFT,ADFT2} (respectively), and to the $1$-dimensional Dirac equation in \cite{CCNP}.

\medskip
The paper is organized as follows:
\begin{itemize}
 \item in Section \ref{sec:setting} we make a brief review on the linear FSE in the free case and in the case of a singular perturbation of delta type, and then we present the main results on nonlinear perturbations;
 \item in Section \ref{sec:proof} we present the proofs of the main results of the paper (i.e., Theorem \ref{teo-nonlinear} and Theorem \ref{teo-standing});
 \item in Appendix \ref{sec:linear} we show the proof of some further results on linear perturbations (i.e., Proposition \ref{pro-linear}).
\end{itemize}

%%%%%%%%%%%%%%%%%%%%%%%%%%%%%%%%%%%%%%%%%%%%%%%%%%%%%%%%%%%%%%%%%%%%%%%%%%%%%%%%%%
%%%%%%%%%%%%%%%%%%%%%%%%%%%%%%%%%%%%%%%%%%%%%%%%%%%%%%%%%%%%%%%%%%%%%%%%%%%%%%%%%%
%%%%%%%%%%%%%%%%%%%%%%%%%%%%%%%%%%%%%%%%%%%%%%%%%%%%%%%%%%%%%%%%%%%%%%%%%%%%%%%%%%

\section{Setting and main results}
\label{sec:setting}

Before stating the main results of the paper it is necessary to fix some notation and recall some well known facts about the linear FSE both in the free case and in the case of delta interactions.

%%%%%%%%%%%%%%%%%%%%%%%%%%%%%%%%%%%%%%%%%%%%%%%%%%%%%%%%%%%%%%%%%%%%%%%%%%%%%%%%%%

\subsection{The free case}

Preliminarily, we recall that, for every $s\in(0,1)$, the \emph{fractional} Laplace operator on (an open set) $\Omega\subset\R$ is defined by
\[
 (-\Delta)_{\Omega}^s u(x):=c(s)P.V.\int_\Omega\f{u(x)-u(y)}{|x-y|^{1+2s}}\dy
\]
where $c(\,\cdot\,)$ is a specific normalization constant such that
\begin{equation}
 \label{eq-flaptr}
 \tr{\flap u}\,(k):=|k|^{2s}\,\tr{u}(k),
\end{equation}
with $\flap=(-\Delta)_{\R}^s$ and $\tr{(\,\cdot\,)}$ representing the unitary Fourier transform on $\R$ (see, for instance, \cite{DPV} for an explicit expression). We also recall the definition of the fractional Sobolev space $H^\mu(\Omega)$, with $\mu\in(0,1)$, i.e.
\begin{equation}
 \label{eq-defsob1}
 H^\mu(\Omega):=\left\{u\in L^2(\Omega):u\in\dot{H}^\mu(\Omega)\right\}=\left\{u\in L^2(\Omega):[u]_{\dot{H}^{\mu}(\Omega)}<\infty\right\}
\end{equation}
where $[u]_{\dot{H}^{\mu}(\R)}$ denotes the usual Gagliardo semi-norm, i.e.
\[
 [u]_{\dot{H}^{\mu}(\Omega)}^2:=\int_{\Omega\times\Omega}\f{|u(x)-u(y)|^2}{|x-y|^{1+2\mu}}\dx\dy.
\]
When $\Omega=\R$, it is (norm-)equivalent to
\begin{equation}
 \label{eq-defsob2}
 H^\mu(\R)=\left\{u\in \mathcal{S}'(\R):\int_{\R}(1+|k|^2)^{\mu}\,|\tr{u}(k)|^2\dk<\infty\right\},
\end{equation}
where $\mathcal{S}'(\R)$ denotes the space of the tempered distributions and
\begin{equation}
 \label{eq-gagliardo_fou}
 [u]_{\dot{H}^{\mu}(\R)}^2:=\|(-\Delta)^{\mu/2} u\|_{L^2(\R)}^2=\int_{\R}|k|^{2\mu}\,|\tr{u}(k)|^2\dk
\end{equation}
(see again \cite{DPV}). Moreover, \eqref{eq-defsob2} holds for $\mu>1$, as well, whereas \eqref{eq-defsob1} has to be modified in
\[
 H^\mu(\Omega):=\left\{u\in H^{[\mu]}(\Omega):\f{d^{[\mu]}u}{dx^{[\mu]}}\in\dot{H}^{\mu-[\mu]}(\Omega)\right\}
\]
(with $[\mu]$ the integer part of $\mu$).

\medskip
On the other hand, it is well known that, for every $s\in(0,1]$, the operator $\hamz:L^2(\R)\to L^2(\R)$ defined by
\[
 \dom(\hamz):=H^{2s}(\R)\qquad\text{and}\qquad\hamz\psi:=\flap\psi,\qquad\forall\psi\in\dom(\hamz),
\]
is self-adjoint and hence by Stone's Theorem the Cauchy problem
\[
 \left\{
 \begin{array}{l}
  \displaystyle \imath\f{\partial\psi}{\partial t}=\hamz\psi\\[.4cm]
  \displaystyle \psi(0,\,\cdot\,)=\psi_0(\,\cdot\,)\in H^{2s}(\R)
 \end{array}
 \right.
\]
has a unique solution
\[
 \psi(t,x):=(\propz(t)\psi_0)(x)\in C^0([0,T];H^{2s}(\R))\cap C^1([0,T];L^{2}(\R)),\quad \forall T>0,
\]
 with $\propz(t)$ the convolution unitary operator defined by the integral kernel
\[
 \propz(t,x):=\f{1}{2\pi}\int_\R e^{\imath kx}\,e^{-\imath|k|^{2s}t}\dk,
\]
namely
\[
 \tr{\propz(t,\,\cdot\,)}(k)=\f{e^{-\imath|k|^{2s}t}}{\sqrt{2\pi}}.
\]
In addition, we recall that the quadratic form associated with $\hamz$ is
\[
 \dom(\formz):=H^s(\R)\qquad\text{and}\qquad\formz(\psi):=[\psi]_{\dot{H}^{s}(\R)}^2,\quad\forall\psi\in\dom(\formz).
\]

\medskip
Finally, we point out that throughout the paper we denote by $(\,\cdot\,,\,\cdot\,)$ and $\|\,\cdot\,\|$ the standard scalar product and norm of $L^2(\R)$.

%%%%%%%%%%%%%%%%%%%%%%%%%%%%%%%%%%%%%%%%%%%%%%%%%%%%%%%%%%%%%%%%%%%%%%%%%%%%%%%%%%

\subsection{Linear delta perturbations}
\label{subsec:lin}

The problem of linear singular perturbations of delta type of the fractional \rafr{Laplacian}{laplacian} has been widely discussed in \cite{MS}. Here we limit ourselves to mention some known facts and notation, which are useful in the following.

For every $\lambda>0$ and $s\in(0,1]$, we denote by $\green$ the Green's function of $\flap+\lambda$, namely the unique solution in $L^2(\R)$ of
\begin{equation}
 \label{eq-green_eq}
 (\flap+\lambda)\,\green=\delta,
\end{equation}
namely
\begin{equation}
 \label{eq-green}
 \tr{\green}(k)=\f{1}{\sqrt{2\pi}\,(|k|^{2s}+\lambda)}.
\end{equation}
Note that $\green\in L^2(\R)$ for every $s>1/4$ and $\green\in L^\infty(\R)\cap C^0(\R)$ for every $s>\f{1}{2}$.

\medskip
In addition, for all $s\in(\f{1}{2},1]$ and $\alpha\in\R$ fixed, we define the operator $\ham:L^2(\R)\to L^2(\R)$
\begin{gather}
 \label{eq-dom1}
 \dom(\ham):=\left\{\psi\in L^2(\R):\psi=\phi_\lambda-\alpha\green\psi(0),\,\phi_\lambda\in H^{2s}(\R),\,\lambda>0\right\},\\[.3cm]
 \label{eq-act1}
 (\ham+\lambda)\psi:=(\flap+\lambda)\phi_\lambda,\qquad\forall\psi\in\dom(\ham).
\end{gather}
It is self-adjoint and hence
\begin{equation}
 \label{eq-cauchy_lin}
 \left\{
 \begin{array}{l}
  \displaystyle \imath\f{\partial\psi}{\partial t}=\ham\psi\\[.4cm]
  \displaystyle \psi(0,\,\cdot\,)=\psi_0(\,\cdot\,)\in\dom(\ham)
 \end{array}
 \right.
\end{equation}
has a unique solution
\[
 \psi\in C^0([0,T];\dom(\ham))\cap C^1([0,T];L^{2}(\R)),
\]
where $\dom(\ham)$ is endowed with the graph norm. Notice that the definition of the operator domain \eqref{eq-dom1} does not depend on $\lambda$, \lora{provided that $\lambda>0$}, which then plays the role of a mere regularizing parameter.

\medskip
It is also worth mentioning that \cite{MS} provides an approximation result for the operator $\ham$, which actually explains the reason for which we consider this operator a singular perturbation of the fractional \rafr{Laplacian}{laplacian} of delta type. Precisely, the paper shows that there exist (a wide class of) compactly supported potentials $V$, with $\alpha=\int_\R V(x)\dx$, such that the sequence of self-adjoint operators
\[
 \dom(\app):=H^{2s}(\R)\qquad\text{and}\qquad\app\,\psi:=\left(\flap+\tf{1}{\ep}V\left(\tf{\cdot}{\ep}\right)\right)\psi
\]
converges in the norm resolvent sense to $\ham$. Therefore, $\ham$ defined by \eqref{eq-dom1} and \eqref{eq-act1} provides a rigorous version of the informal expression $\flap+\delta$.

\begin{rem}
 It is clear that the case $\alpha=0$ represents the unperturbed case.
\end{rem}

Finally, we provide here a different representation of the operator $\ham$. Let us introduce the fractional differential operator
\begin{equation}
 \label{eq-der}
 D^\mu u\,(x):=\f{\imath}{\sqrt{2\pi}}\int_\R e^{ikx}\,|k|^\mu\,\sgn(k)\,\tr{u}(k)\dk,\qquad\mu\in(0,1],
\end{equation}
that is
\[
 \tr{D^\mu u}\,(k)=\imath|k|^\mu\,\sgn(k)\,\tr{u}(k).
\]
Note that this operator is bounded $H^{\mu+\delta}(\R)\to H^{\delta}(\R)$, for all $\delta\geq0$.

\begin{rem}
 Note that, when $\mu=1$, there results $D^{\mu}\equiv\frac{d}{dx}$ (in a distributional sense).
\end{rem}

\begin{pro}
 \label{pro-linear}
 Let $s\in(\f{1}{2},1]$ and $\alpha\in\R$. Then, the following representation of the operator $\ham$ is equivalent to \eqref{eq-dom1}-\eqref{eq-act1}:
  \begin{gather}
  \label{eq-dom2}
  \dom(\ham)=\left\{\psi\in H^s(\R):D^{2s-1}\psi\in H^1(\R\backslash\{0\}),\,[D^{2s-1}\psi](0)=\alpha\psi(0)\right\}\\[.3cm]
  \label{eq-act2}
  \ham\psi=\flap\psi,\qquad\forall x\neq0,
 \end{gather}
 where $[D^{2s-1}\psi](0):=D^{2s-1}\psi(0^+)-D^{2s-1}\psi(0^-)$. Moreover, the quadratic form associated to $\ham$ is defined as
 \begin{equation}
  \label{eq-form}
  \form(\psi):=(\psi,\ham\psi)=\|(-\Delta)^{s/2}\psi\|^2+\alpha|\psi(0)|^2=[\psi]_{\dot{H}^{s}(\R)}^2+\alpha|\psi(0)|^2
 \end{equation}
 with domain $\dom(\form):=H^s(\R)$.
\end{pro}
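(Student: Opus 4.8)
The plan is to prove the equivalence by a double inclusion between the descriptions \eqref{eq-dom1}--\eqref{eq-act1} and \eqref{eq-dom2}--\eqref{eq-act2} of $\dom(\ham)$, checking along the way that the two actions agree, and then to read \eqref{eq-form} off from the representation \eqref{eq-dom2}--\eqref{eq-act2}. I would use throughout that, since $s>\f{1}{2}$, the Sobolev embedding $H^s(\R)\hookrightarrow C^0(\R)\cap L^\infty(\R)$ makes $\psi\mapsto\psi(0)$ a bounded functional on $H^s(\R)$; that $\green\in H^s(\R)$, which follows at once from \eqref{eq-green} because $\int_\R|k|^{2s}(|k|^{2s}+\lambda)^{-2}\dk<\infty$ for $s>\f{1}{2}$; and the mapping property recalled after \eqref{eq-der}, which for $\mu=2s-1$ and $\delta=1$ reads $D^{2s-1}\colon H^{2s}(\R)\to H^1(\R)\hookrightarrow C^0(\R)$.

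The computational heart of the argument is the identity $[D^{2s-1}\green](0)=-1$, and this is the step I expect to be the main obstacle. Passing to Fourier variables via \eqref{eq-der} and \eqref{eq-flaptr} and using $k\,\sgn(k)=|k|$, the distribution $\f{d}{dx}D^{2s-1}\green$ has Fourier transform $-|k|^{2s}\tr{\green}(k)$, so that, by \eqref{eq-green_eq},
\[
 \f{d}{dx}\,D^{2s-1}\green=-\flap\green=\lambda\green-\delta\qquad\text{in }\mathcal{S}'(\R).
\]
On the other hand $D^{2s-1}\green\in L^2(\R)$ (again by \eqref{eq-green}, since $\int_\R|k|^{2(2s-1)}(|k|^{2s}+\lambda)^{-2}\dk<\infty$ for $s>\f{1}{4}$), and restricting the displayed identity to $\R\backslash\{0\}$ shows that its distributional derivative there is $\lambda\green\in L^2(\R)$; hence $D^{2s-1}\green\in H^1(\R\backslash\{0\})$ and, in particular, it has finite one-sided limits at the origin. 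Comparing the two expressions for the full distributional derivative of $D^{2s-1}\green$ — the one above and the one given by the jump formula, $\lambda\green+[D^{2s-1}\green](0)\,\delta$ — one reads off $[D^{2s-1}\green](0)=-1$. (As a sanity check, for $s=1$ this reduces to $\green(x)=\tf{1}{2\sqrt{\lambda}}\,e^{-\sqrt{\lambda}|x|}$ and $[\green'](0)=-1$.) The delicate point is the regularity $D^{2s-1}\green\in H^1(\R\backslash\{0\})$ needed to apply the jump formula, together with the correct bookkeeping of the Dirac mass from $(\flap+\lambda)\green=\delta$; as an alternative I could instead evaluate $\lim_{x\to0^+}D^{2s-1}\green(x)$ directly from \eqref{eq-der} and \eqref{eq-green}, which after the substitution $k\mapsto k/x$ becomes $-\f{1}{\pi}\int_0^\infty\f{u^{2s-1}\sin u}{u^{2s}+\lambda x^{2s}}\,du\to-\f{1}{\pi}\int_0^\infty\f{\sin u}{u}\,du=-\f{1}{2}$, and then invoke the oddness of $D^{2s-1}\green$; but passing to the limit under that oscillatory integral requires some care.

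With the identity at hand, I would first take $\psi$ as in \eqref{eq-dom1}, so $\psi=\phi_\lambda-\alpha\green\psi(0)$ with $\phi_\lambda\in H^{2s}(\R)$; then $\psi\in H^s(\R)$ and $D^{2s-1}\psi=D^{2s-1}\phi_\lambda-\alpha\psi(0)\,D^{2s-1}\green$, where $D^{2s-1}\phi_\lambda\in H^1(\R)$ and $D^{2s-1}\green\in H^1(\R\backslash\{0\})$, whence $D^{2s-1}\psi\in H^1(\R\backslash\{0\})$ with $[D^{2s-1}\psi](0)=-\alpha\psi(0)[D^{2s-1}\green](0)=\alpha\psi(0)$ (the first term being continuous). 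Moreover, using $\flap\green=\delta-\lambda\green$,
\[
 \flap\psi=\flap\phi_\lambda-\alpha\psi(0)(\delta-\lambda\green)=(\flap+\lambda)\phi_\lambda-\lambda\psi-\alpha\psi(0)\delta=\ham\psi-\alpha\psi(0)\delta
\]
in $\mathcal{S}'(\R)$, so $\ham\psi=\flap\psi$ away from the origin and \eqref{eq-dom2}--\eqref{eq-act2} hold. Conversely, given $\psi$ as in \eqref{eq-dom2}, I would fix $\lambda>0$, set $\phi_\lambda:=\psi+\alpha\psi(0)\green\in H^s(\R)$, and write $g\in L^2(\R)$ for the a.e.\ derivative of $D^{2s-1}\psi$ on $\R\backslash\{0\}$; the jump formula (with jump $[D^{2s-1}\psi](0)=\alpha\psi(0)$) together with the fact that $\f{d}{dx}D^{2s-1}\psi$ has Fourier transform $-|k|^{2s}\tr{\psi}(k)$ gives $\flap\psi=-g-\alpha\psi(0)\delta$, hence $(\flap+\lambda)\phi_\lambda=(\flap+\lambda)\psi+\alpha\psi(0)\delta=-g+\lambda\psi\in L^2(\R)$, and since $\flap+\lambda$ is an isomorphism of $H^{2s}(\R)$ onto $L^2(\R)$ for $\lambda>0$, this forces $\phi_\lambda\in H^{2s}(\R)$; thus $\psi=\phi_\lambda-\alpha\green\psi(0)$ lies in $\dom(\ham)$ in the sense of \eqref{eq-dom1}, and the actions again agree.

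Finally, for the quadratic form I would argue that the symmetric sesquilinear form $\mathcal{B}(u,v):=((-\Delta)^{s/2}u,(-\Delta)^{s/2}v)+\alpha\,\ov{u(0)}\,v(0)$ on $H^s(\R)\times H^s(\R)$ is bounded and — thanks to the interpolation bound $|u(0)|^2\le\varepsilon\,[u]_{\dot{H}^{s}(\R)}^2+C_\varepsilon\|u\|^2$ (which follows from Cauchy--Schwarz in Fourier space, splitting $|\tr{u}|=(|k|^{2s}+\mu)^{-1/2}(|k|^{2s}+\mu)^{1/2}|\tr{u}|$ and sending $\mu\to\infty$, using that $\int_\R(|k|^{2s}+\mu)^{-1}\dk<\infty$ for $s>\f{1}{2}$) — closed and bounded below on $H^s(\R)$; hence it is the form of a unique self-adjoint operator $\widetilde A$ with $\dom(\widetilde A)\subseteq H^s(\R)$. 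For $\psi\in\dom(\ham)$ and any $v\in H^s(\R)$, the identity $\ham\psi=\flap\psi+\alpha\psi(0)\delta$ established above yields, after pairing with $v$ through the $H^{-s}$--$H^s$ duality and using Plancherel, $\mathcal{B}(v,\psi)=(v,\ham\psi)$, so that $\psi\in\dom(\widetilde A)$ with $\widetilde A\psi=\ham\psi$; since $\ham$ and $\widetilde A$ are both self-adjoint, $\ham=\widetilde A$. In particular $\dom(\form)=H^s(\R)$ and, recalling \eqref{eq-gagliardo_fou}, $\form(\psi)=\mathcal{B}(\psi,\psi)=\|(-\Delta)^{s/2}\psi\|^2+\alpha|\psi(0)|^2=[\psi]_{\dot{H}^{s}(\R)}^2+\alpha|\psi(0)|^2$, which is \eqref{eq-form}.
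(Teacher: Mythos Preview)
Your argument is correct and the overall skeleton---double inclusion plus a separate treatment of the form---matches the paper, but two of your key steps take a genuinely different route. For the jump $[D^{2s-1}\green](0)=-1$, you derive it by comparing the full distributional derivative $\f{d}{dx}D^{2s-1}\green=\lambda\green-\delta$ with the jump formula, which is clean and requires only the $H^1(\R\backslash\{0\})$ regularity you establish; the paper instead computes the one-sided limits $D^{2s-1}\green(0^\pm)=\mp\tf{1}{2}$ directly, via the substitution $p=kx$ and a careful splitting of the resulting oscillatory integral (essentially your ``alternative'' route, with the dominated-convergence justification you flag as delicate). Your approach is shorter and avoids the oscillatory-integral analysis entirely. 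For the quadratic form, the paper expands $(\psi,\ham\psi)$ using the decomposition $\psi=\phi_\lambda-\alpha\psi(0)\green$ and closes the computation with the algebraic identity $\green(0)=\lambda\|\green\|^2+\|\flaph\green\|^2$, which it proves as a separate lemma; you bypass this identity by invoking the representation theorem for closed semibounded forms, showing that the form $\mathcal{B}$ on $H^s(\R)$ determines a self-adjoint operator that contains, hence equals, $\ham$. The paper's approach is self-contained and elementary but needs the auxiliary Green's-function identity; yours is more structural and yields $\dom(\form)=H^s(\R)$ as a direct consequence of the closedness of $\mathcal{B}$, at the cost of importing the Kato representation theorem.
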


For the proof, see Appendix \ref{sec:linear}. \lora{Here we limit ourselves to observe that the previous Proposition allows to write also in the fractional case delta perturbations as \emph{``jump conditions''}. However, while in the integer case this fact arises as a natural consequence of the definition of the action and the domain of the operator, in this case it is required to detect (through a careful analysis of the properties of the Green's function -- see Appendix \ref{sec:linear}) the proper fractional differential operator whose jump encodes the delta interaction.} 

%%%%%%%%%%%%%%%%%%%%%%%%%%%%%%%%%%%%%%%%%%%%%%%%%%%%%%%%%%%%%%%%%%%%%%%%%%%%%%%%%%

\subsection{Main results: nonlinear delta perturbations}
\label{subsec:main}

Let us introduce the nonlinear \rafr{analogue}{analogous} of \eqref{eq-cauchy_lin} by posing
\[
 \alpha=\alpha(\psi)=\beta|\psi(0)|^{2\sigma},\qquad\sigma>0,\quad\beta\in\R,
\]
in \eqref{eq-dom1}, that is
\begin{gather}
 \label{eq-domn1}
 \dom(\hamn):=\left\{\psi\in H^s(\R):\psi=\phi_\lambda-\beta|\psi(0)|^{2\sigma}\psi(0)\green,\,\phi_\lambda\in H^{2s}(\R),\,\lambda>0\right\},\\[.3cm]
 (\hamn+\lambda)\psi:=(\flap+\lambda)\phi_\lambda,\qquad\forall\psi\in\dom(\hamn)\nonumber.
\end{gather}
On the other hand, in view or Proposition \ref{pro-linear}, $\hamn$ can be equivalently represented as
\begin{gather*}
 \dom(\hamn):=\left\{\psi\in H^s(\R):D^{2s-1}\psi\in H^1(\R\backslash\{0\}),\,[D^{2s-1}\psi](0)=\beta|\psi(0)|^{2\sigma}\psi(0)\right\},\\[.3cm]
 \hamn\psi=\flap\psi,\qquad\forall x\neq0.
\end{gather*}

A solution of the nonlinear \rafr{analogue}{analogous} of \eqref{eq-cauchy_lin} is a function $\psi$, such that $\psi(t,\,\cdot\,)\in\dom(\hamn)$, for all $t>0$, and 
\begin{equation}
 \label{eq-cauchy}
 \left\{
 \begin{array}{l}
  \displaystyle \imath\f{\partial\psi}{\partial t}=\hamn\psi\\[.4cm]
  \displaystyle \psi(0,\,\cdot\,)=\psi_0(\,\cdot\,)\in\dom(\hamn).
 \end{array}
 \right.
\end{equation}
In fact, we discuss the integral form of the problem, provided by the Duhamel formula, which reads
\begin{equation}
 \label{eq-Duahmel}
 \psi(t,x)=(\propz(t)\psi_0)(x)-\imath\beta\int_0^t\propz(t-\tau,x)|\psi(\tau,0)|^{2\sigma}\psi(\tau,0)\dtau.
\end{equation}
Furthermore, one can set
\begin{equation}
 \label{eq-charge_def}
 q(t):=\psi(t,0),
\end{equation}
which is usually called \emph{charge} and, from \eqref{eq-Duahmel}, has to satisfy
\begin{equation}
 \label{eq-charge_eq}
 q(t)+\imath a(s)\beta\int_0^t\f{|q(\tau)|^{2\sigma}q(\tau)}{(t-\tau)^{\f{1}{2s}}}\dtau=f(t)
\end{equation}
(i.e. a singular nonlinear Volterra integral equation of the second kind), where
\[
 f(t):=(\propz(t)\psi_0)(0)
\]
and
\begin{equation}
 \label{eq-a}
 \lord{\propz(t,0)=\f{a(s)}{t^{\f{1}{2s}}},\qquad }a(s):=\f{1}{2\pi}\int_\R e^{-\imath|\rho|^{2s}}\dro=\propz(1,0)\in\C
\end{equation}
\lora{(note that $\propz(t,0)=a(s)/t^{\f{1}{2s}}$)}. As a consequence, if \eqref{eq-charge_eq} uniquely determines $q$, then $\psi$ is defined by
\begin{equation}
 \label{eq-ansatz}
 \psi(t,x):=(\propz(t)\psi_0)(x)-\imath\beta\int_0^t\propz(t-\tau,x)|q(\tau)|^{2\sigma}q(\tau)\dtau.
\end{equation}
This entails that ``solving the nonlinear \lorr{analogue}{analogous} of \eqref{eq-cauchy_lin}'' means searching for a solution of \eqref{eq-charge_eq} such that \eqref{eq-ansatz} satisfies \eqref{eq-cauchy}.

\begin{teo}[Well posedness]
 \label{teo-nonlinear}
 Let $s\in(\f{1}{2},1]$ and $\psi_0\in\dom(\hamn)$. Then:
 \begin{itemize}
  \item[(i)] \emph{Local well-posedness.} There exists $T>0$ such that the function $\psi$ defined by \eqref{eq-charge_eq}-\eqref{eq-ansatz} is the unique solution of \eqref{eq-cauchy} (where the former is meant as an equality in $L^2(\R)$). In addition,
  \[
   \psi\in C^0([0,T];\dom(\hamn))\cap C^1([0,T];L^{2}(\R)).
  \]
  \item[(ii)] \emph{Conservation laws.} The mass and the energy, namely
  \begin{gather}
   M(t)=M(\psi(t,\,\cdot\,)):=\|\psi(t,\,\cdot\,)\|\nonumber\\[.1cm]
   \label{eq-energy} E(t)=E(\psi(t,\,\cdot\,)):=[\psi(t,\,\cdot\,)]_{\dot{H}^{\rafr{s}{s/2}}(\R)}^2+\frac{\beta}{\sigma+1}|\psi(t,0)|^{2\sigma+2},
  \end{gather}
 are preserved quantities along the flow.
 \item[(iii)] \emph{Global well-posedness.} If one of the following conditions is satisfied:
 \begin{itemize}
  \item[-] $\beta\geq0$,
  \item[-] $\beta<0$ and $\sigma<\sigma_c(s):=2s-1$,
 \end{itemize}
 then the solution is global in time. In addition, when $\beta<0$ and $\sigma=\sigma_c(s)$ there exists $C(s,\beta)>0$ such that, if $\|\psi_0\|<C(s,\beta)$, then the solution is global in time, as well.
 \item[(iv)] \emph{Blow-up solutions.} If $\beta<0$, $\sigma\geq\sigma_c(s)$ and $\psi_0$ has the regular part $\phi_{\lambda,0}$ in $\mathcal{S}(\R)$ and satisfies
 \[
  E(\psi_0)<0,
 \]
 then there exists $T^*\in[0,\infty)$ such that
 \[
  \limsup_{t\to T^*}|q(t)|=+\infty,
 \]
 namely the solution blows-up in a finite time.
 \end{itemize}
\end{teo}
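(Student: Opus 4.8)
\emph{Overall strategy and local well-posedness.}
The plan is to follow the by-now standard scheme for concentrated nonlinearities, reducing the Cauchy problem \eqref{eq-cauchy} to the scalar singular Volterra equation \eqref{eq-charge_eq} for the charge $q(t)=\psi(t,0)$. Since $s>\f12$, the Abel kernel $(t-\tau)^{-1/2s}$ is locally integrable and $z\mapsto|z|^{2\sigma}z$ is locally Lipschitz, so the map associated to the right-hand side of \eqref{eq-charge_eq} is a contraction on a short interval $[0,T]$ in a space of H\"older-continuous functions with prescribed value $q(0)=\psi_0(0)$ and exponent matched to the $(1-\f{1}{2s})$-smoothing of the Abel operator; this yields a unique local charge $q$. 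Preliminarily one checks that the source $f(t)=(\propz(t)\psi_0)(0)$ is a well-defined H\"older function: splitting $\psi_0=\phi_{\lambda,0}-\beta|\psi_0(0)|^{2\sigma}\psi_0(0)\green$ with $\phi_{\lambda,0}\in H^{2s}(\R)$, the term $\propz(t)\phi_{\lambda,0}$ is continuous into $H^{2s}(\R)\hookrightarrow C^0(\R)$, while $(\propz(t)\green)(0)$ is controlled through $\green\in L^1(\R)\cap L^\infty(\R)$ and the dispersive bound $\|\propz(t,\cdot)\|_{L^\infty}\le C\,t^{-1/2s}$ implied by \eqref{eq-a}.

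\emph{From the charge back to the solution.}
The decisive step in (i) is to verify that the function $\psi$ reconstructed through \eqref{eq-ansatz} belongs to $\dom(\hamn)$ for every $t$ and solves \eqref{eq-cauchy}. Concretely, one must show that the Duhamel integral in \eqref{eq-ansatz} splits, modulo a term which combines with $\propz(t)\psi_0$ into a function $\phi_\lambda(t)\in H^{2s}(\R)$, precisely as $-\beta|q(t)|^{2\sigma}q(t)\green$; equivalently, by Proposition \ref{pro-linear}, that $D^{2s-1}\psi(t,\cdot)\in H^1(\R\backslash\{0\})$ with jump $[D^{2s-1}\psi(t,\cdot)](0)=\beta|q(t)|^{2\sigma}q(t)$. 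This identification --- which rests on the fine behaviour of $\propz(t,\cdot)$ near $x=0$, on the properties of $\green$ established in Proposition \ref{pro-linear}, and on the H\"older regularity of $q$ --- is the main technical obstacle of part (i). Once it is in place, the continuity of $q$, $\phi_\lambda$ and $\propz$ give $\psi\in C^0([0,T];\dom(\hamn))\cap C^1([0,T];L^2(\R))$ and $\psi(t,0)=q(t)$, while uniqueness follows because any solution of \eqref{eq-cauchy} produces, via \eqref{eq-Duahmel}, a charge solving \eqref{eq-charge_eq}, which is unique. A byproduct of the construction is a blow-up alternative: the solution continues in time as long as $|q(t)|$ stays bounded.

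\emph{Conservation laws.}
For (ii) both quantities are seen to be constant by differentiating along the flow (legitimate thanks to the regularity from (i), together with the extra time-regularity of $q$ inherent to the local theory). For the mass, $\f{d}{dt}\|\psi\|^2$ is a multiple of $\im{(\psi,\hamn\psi)}$, which vanishes because by \eqref{eq-form} one has $(\psi,\hamn\psi)=[\psi]_{\dot{H}^{s}(\R)}^2+\beta|\psi(0)|^{2\sigma+2}\in\R$. For the energy, writing $\psi(t)=\phi_\lambda(t)-\beta|q(t)|^{2\sigma}q(t)\green$ and using $\flap\green=\delta-\lambda\green$, the pairing $2\re{(\flap\psi,\partial_t\psi)}$ (which equals $\f{d}{dt}[\psi]_{\dot{H}^{s}(\R)}^2$) splits as $2\re{(\hamn\psi,\partial_t\psi)}-2\re{\beta|q|^{2\sigma}q\,\ov{\partial_t q}}$; the first summand vanishes because $(\hamn\psi,\partial_t\psi)$ is purely imaginary, and the second is exactly cancelled by $\f{d}{dt}\!\big(\tf{\beta}{\sigma+1}|q|^{2\sigma+2}\big)=2\re{\beta|q|^{2\sigma}\ov q\,\partial_t q}$.

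\emph{Global well-posedness and blow-up.}
By the blow-up alternative it suffices, for (iii), to bound $|q(t)|$ a priori, and the main tool is the trace inequality
\[
  |\psi(0)|^2\le C_s\,[\psi]_{\dot{H}^{s}(\R)}^{1/s}\,\|\psi\|^{2-1/s},
\]
valid for $s>\f12$ (Fourier--Cauchy--Schwarz together with a dilation optimisation). If $\beta\ge0$, energy conservation gives $[\psi(t)]_{\dot{H}^{s}(\R)}^2\le E(\psi_0)$, so with mass conservation $\|\psi(t)\|_{H^s(\R)}$, hence $|q(t)|$, stays bounded. If $\beta<0$, energy conservation and the trace inequality yield
\[
  [\psi(t)]_{\dot{H}^{s}(\R)}^2=E(\psi_0)+\tf{|\beta|}{\sigma+1}\,|q(t)|^{2\sigma+2}\le E(\psi_0)+C\,[\psi(t)]_{\dot{H}^{s}(\R)}^{(\sigma+1)/s},
\]
with $C=C(s,\beta,\|\psi_0\|)$; since $(\sigma+1)/s<2\Leftrightarrow\sigma<\sigma_c(s)=2s-1$, this bounds $[\psi(t)]_{\dot{H}^{s}(\R)}$, hence $|q(t)|$, in the subcritical regime, and also when $\sigma=\sigma_c(s)$ provided $C<1$, i.e.\ provided $\|\psi_0\|<C(s,\beta)$. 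Finally, for (iv) one argues by contradiction: were the solution global, energy conservation with $E(\psi_0)<0$ and $\beta<0$ would force $\tf{|\beta|}{\sigma+1}|q(t)|^{2\sigma+2}=[\psi(t)]_{\dot{H}^{s}(\R)}^2-E(\psi_0)\ge|E(\psi_0)|$, so $|q(t)|$ would be bounded below by a fixed positive constant for all $t$; since $\phi_{\lambda,0}\in\mathcal{S}(\R)$ makes $f$ smooth and dispersively decaying, for large times \eqref{eq-charge_eq} behaves as a superlinear Abel--Volterra equation whose nonlinearity is critical/supercritical ($\sigma\ge\sigma_c(s)$) relative to the kernel $(t-\tau)^{-1/2s}$, and a suitable integral inequality for a moment of $q$ shows this to be incompatible with $|q|$ remaining bounded below globally. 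Hence the maximal existence time $T^*$ is finite, and the blow-up alternative gives $\limsup_{t\to T^*}|q(t)|=+\infty$. Turning the supercriticality into this quantitative finite-time mechanism is, together with the domain identification in (i), the hardest point of the theorem.
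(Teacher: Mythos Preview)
Your outline for (i) and (iii) is essentially the paper's strategy (contraction for the charge equation, then domain identification via the Fourier side, then the fractional Gagliardo--Nirenberg/trace inequality for the a priori bound). Two points, however, are genuine gaps.

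\textbf{Energy conservation.} Your differentiation argument is only formal. Since $\psi(t)=\phi_\lambda(t)-r(t)\green$ with $\phi_\lambda(t)\in H^{2s}(\R)$ but $\green\in H^s(\R)\setminus H^{2s}(\R)$, one has $\flap\psi(t)\notin L^2(\R)$, so the pairing $(\flap\psi,\partial_t\psi)$ and the identity $\f{d}{dt}[\psi]_{\dot H^s}^2=2\re{(\flap\psi,\partial_t\psi)}$ are not defined as written; likewise $\partial_t\psi\in L^2(\R)$ only, so the ``pointwise'' term $\partial_t q=\partial_t\psi(t,0)$ that you split off via the delta is not justified at this regularity. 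The paper avoids this by not differentiating at all: it expands $\int|k|^{2s}|\tr\psi(t,k)|^2\dk$ directly from the Duhamel representation \eqref{eq-psi_tr}, integrates by parts in $\tau$ (using that $r=\beta|q|^{2\sigma}q\in H^1(0,T)$, which comes from the bootstrap $q\in H^{3/2-1/(4s)}$), and recombines the pieces with the charge equation to obtain $E(t)=E(0)$ identically. Your mass argument via \eqref{eq-form} is fine and in fact slicker than the paper's.

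\textbf{Blow-up.} This is the serious gap. From $E(\psi_0)<0$ you correctly extract a uniform lower bound on $|q(t)|$, but that by itself does not produce a finite-time obstruction, and the vague ``integral inequality for a moment of $q$'' is not a known mechanism here (nor is it clear how supercriticality $\sigma\ge\sigma_c(s)$ would enter such an inequality on $q$ alone). The paper instead runs a Glassey-type virial argument on the full wave function, with the \emph{fractional} moment of inertia
\[
 I(t):=\big\||k|^{1-s}\partial_k\tr\psi(t,\,\cdot\,)\big\|^2,
\]
proves $I\in C^2[0,T^*)$, and computes
\[
 \ddot I(t)=8s^2E(0)+\f{4s\beta(\sigma-\sigma_c(s))}{\sigma+1}|q(t)|^{2\sigma+2}.
\]
Under $\beta<0$, $\sigma\ge\sigma_c(s)$ and $E(\psi_0)<0$ this gives $\ddot I\le 8s^2E(0)<0$, so $I$ is uniformly concave and nonnegative, forcing $T^*<\infty$; the blow-up alternative then yields $\limsup_{t\to T^*}|q(t)|=+\infty$. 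The Schwartz hypothesis on $\phi_{\lambda,0}$ is used not to make $f$ ``dispersively decaying'' but precisely in the second-derivative computation, to ensure that the boundary term $R(\tr\psi(\tau,R)+\tr\psi(\tau,-R))$ arising from an integration by parts in $k$ vanishes as $R\to\infty$. Without the virial identity (and the correct choice of fractional weight $|k|^{1-s}$, which is what makes the critical exponent $\sigma_c(s)=2s-1$ appear), part (iv) is not proved.
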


\begin{rem}
 Recall that the case $\beta>0$ is usually called \emph{defocusing} or \emph{repulsive} case, whereas the case $\beta<0$ is usually called \emph{focusing} or \emph{attractive} case, in analogy with the standard NLS. In addition, when $\beta<0$, the case $\sigma<\sigma_c(s)$ is said \emph{sub-critical}, the case $\sigma=\sigma_c(s)$ is said \emph{critical} and the case $\sigma>\sigma_c(s)$ is said \emph{super-critical}. According to this, $\sigma_c(s)$ is called \emph{critical exponent}.
\end{rem}

\begin{rem}
\rafd{ Observe that, for $\sigma=0$, items (i)--(iii) of Theorem \ref{teo-nonlinear} provide an explicit representation for the solution of \eqref{eq-cauchy_lin}}.
\end{rem}

\begin{rem}
 Notice that, if $\|\psi_0\|<C(s,\beta)$, then $E(0)\geq0$ (see \eqref{eq-epos}) and thus (consistently) the blow-up condition is not fulfilled.
\end{rem}

\begin{rem}
 Throughout the paper we consider the case of an interaction based at $x=0$. However, this is not restrictive since all the results are valid as well as for any other choice. In addition, one could also consider the case of $N$ distinct singular perturbations of $\delta$-type, as well as more general nonlinear dependence for $\alpha(\psi)$. Anyway, this would produce only computational issues and hence, for the sake of simplicity, we limit ourselves to the case of a single perturbation with a power-type nonlinearity.
\end{rem}

\rafr{Two comments on the item (iv) of the previous theorem are in order. First, as highlighted in Remark \ref{rem-second_der}, the extra assumption on the smoothness of the regular part of the initial datum is just a technical point required in the proof of Proposition \ref{pro-second_der} . %This could be easily weakened, but by means of some ad hoc assumptions which are not particularly relevant in this context. 
In addition, blow-up is a phenomenon linked to the magnitude of the initial singular part rather than to the smoothness of the regular part. Hence, it is not that relevant a detailed discussion of the minimal regularity assumptions.}{Two comments on  item (iv) of the previous theorem are in order. First, the extra assumption on the smoothness or the regular part of the initial datum is just a technical point which is required in the proof of Proposition \ref{pro-second_der} (as highlighted in Remark \ref{rem-second_der}). This could be easily weakened, but by means of some ad hoc assumptions which are not particularly relevant in this context. In addition, blow-up is a phenomenon linked to the magnitude of the initial singular part rather than to the smoothness of the regular part. Hence, it is not that relevant a detailed discussion of the minimal regularity assumptions.}

On the other hand, it is clear that in the super-critical focusing case, it is possible to choose an initial datum with $E(0)<0$. For instance, consider a generic function $u\in\dom(\hamn)$ with regular part in the class of Schwartz functions and such that $u(0)\neq0$. Then, for every $\nu>0$, define
\[
 u_\nu(x)=u(\nu x).
\]
Recalling that $\tr{u(\nu\,\cdot\,)}(k)=\f{1}{\nu}\tr{u}(\f{k}{\nu})$, a simple scaling argument shows that (in the focusing case)
\[
 E(u_\nu)=\nu^{2s-1}[u]_{\dot{H}^{\rafr{s}{s/2}}(\R)}^2-\frac{|\beta|}{\sigma+1}|u(0)|^{2\sigma+2}
\]
and hence, if $\nu$ is sufficiently small, then $E(u_\nu)<0$.

Before stating the second result, we recall that a \emph{standing wave} is a function $\psi^\omega(t,x)=e^{\imath\omega t}u^\omega(x)$, with $\omega\in\R$ and $0\not\equiv u^\omega\in\dom(\hamn)$, that satisfies \eqref{eq-cauchy}, namely such that
\begin{equation}
 \label{eq-stationary}
 \hamn u^\omega+\omega u^\omega=0.
\end{equation}

\begin{teo}[Standing waves]
 \label{teo-standing}
 Let $s\in(\f{1}{2},1]$. Every positive standing wave of \eqref{eq-cauchy} is given by
 \begin{equation}
  \label{eq-standing}
  u^{\omega}(x):=|\beta|^{\f{1}{2\sigma}}\left(2s\sin\left(\tf{\pi}{2s}\right)\right)^{\f{2\sigma+1}{2\sigma}}\omega^{\f{(2s-1)(2\sigma+1)}{4s\sigma}}\greeno(x),
 \end{equation}
 with $\omega>0$ and $\beta<0$. Any other standing wave with frequency $\omega$ differs from $u^\omega$ in \eqref{eq-standing} by a constant phase factor.
 
 In addition:
 \begin{itemize}
  \item[(i)] if $\sigma<\sigma_c(s)$, then $E(u^\omega)<0$, for all $\omega>0$, and $\inf_{\omega\in\R^+}E(u^\omega)=-\infty$
  \item[(ii)] if $\sigma>\sigma_c(s)$, then $E(u^\omega)>0$, for all $\omega>0$, and $\inf_{\omega\in\R^+}E(u^\omega)=0$
  \item[(iii)] if $\sigma=\sigma_c(s)$, then $E(u^\omega)=0$.
 \end{itemize}
\end{teo}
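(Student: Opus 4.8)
The plan is to first pin down the \emph{shape} of every standing wave and then evaluate its energy; the explicit constants of the fractional Green's function $\greeno$ do all the work. Suppose $e^{\imath\omega t}u^\omega$ solves \eqref{eq-cauchy} with $0\not\equiv u^\omega\in\dom(\hamn)$, i.e.\ it satisfies \eqref{eq-stationary}. Using the jump-condition representation of $\hamn$ one has $\flap u^\omega=-\omega u^\omega$ on $\R\setminus\{0\}$, so $(\flap+\omega)u^\omega$ is a distribution supported at the origin; since $u^\omega\in H^s(\R)$ gives $(\flap+\omega)u^\omega\in H^{-s}(\R)$ and $\delta'\notin H^{-s}(\R)$ when $s\le1$, this distribution is a multiple of $\delta$. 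Inspecting $\tr{u^\omega}(k)=c\,(|k|^{2s}+\omega)^{-1}/\sqrt{2\pi}$ shows $u^\omega\in L^2(\R)$ is impossible for $\omega\le0$ (a non-integrable pole if $\omega<0$, a $|k|^{-2s}$ singularity at $k=0$ if $\omega=0$, since $4s>1$); hence $\omega>0$ and $u^\omega=c\,\greeno$ with $c\in\C\setminus\{0\}$. (Alternatively one can take $\lambda=\omega$ in \eqref{eq-domn1}: then $(\hamn+\omega)u^\omega=(\flap+\omega)\phi_\omega=0$ and injectivity of $\flap+\omega$ on $H^{2s}(\R)$ force $\phi_\omega\equiv0$, whence $u^\omega=-\beta|u^\omega(0)|^{2\sigma}u^\omega(0)\greeno$.)

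Next I would impose the boundary condition $[D^{2s-1}u^\omega](0)=\beta|u^\omega(0)|^{2\sigma}u^\omega(0)$ on $u^\omega=c\,\greeno$. Using $[D^{2s-1}\greeno](0)=-1$ -- a short computation from \eqref{eq-green} reducing to $\int_0^\infty\sin(kx)\,k^{-1}\,dk=\tf{\pi}{2}$ -- this reads $-c=\beta|c|^{2\sigma}\greeno(0)^{2\sigma}\,c\,\greeno(0)$. Since $c\neq0$ (otherwise $u^\omega\equiv0$) and $\greeno(0)>0$, dividing by $c$ gives the scalar identity $-1=\beta|c|^{2\sigma}\greeno(0)^{2\sigma+1}$, which forces $\beta<0$, fixes $|c|$, and leaves the phase of $c$ arbitrary: this is exactly the ``up to a constant phase factor'' claim, and the choice $c>0$ selects the positive standing wave. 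To reach the explicit form \eqref{eq-standing} one then substitutes the closed value
\[
 \greeno(0)=\f{1}{2\pi}\int_\R\f{dk}{|k|^{2s}+\omega}=\f{\omega^{\f{1-2s}{2s}}}{2s\sin\!\left(\tf{\pi}{2s}\right)},
\]
obtained by the rescaling $k\mapsto\omega^{1/(2s)}k$ and the value $\int_0^\infty(1+t^{2s})^{-1}\,dt=\pi/\bigl(2s\sin(\pi/(2s))\bigr)$, finite precisely because $2s>1$.

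For the energy I would use $u^\omega=c\,\greeno$ in \eqref{eq-energy}, obtaining $E(u^\omega)=|c|^2[\greeno]_{\dot{H}^{s}(\R)}^2+\tf{\beta}{\sigma+1}|c|^{2\sigma+2}\greeno(0)^{2\sigma+2}$. The constraint $\beta|c|^{2\sigma}\greeno(0)^{2\sigma+1}=-1$ turns the nonlinear term into $-\tf{1}{\sigma+1}|c|^2\greeno(0)$, while pairing $(\flap+\omega)\greeno=\delta$ with $\greeno$ (and using the analogous Beta-function value of $\|\greeno\|^2$) yields the identity $[\greeno]_{\dot{H}^{s}(\R)}^2=\greeno(0)/(2s)$; hence
\[
 E(u^\omega)=|c|^2\greeno(0)\left(\f{1}{2s}-\f{1}{\sigma+1}\right)=\f{|c|^2\greeno(0)}{2s(\sigma+1)}\bigl(\sigma-\sigma_c(s)\bigr),
\]
which reads off the sign of $E(u^\omega)$ in the sub-, super- and critical regimes at once. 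Finally, writing $|c|^2$ and $\greeno(0)$ as explicit powers of $\omega$ gives $E(u^\omega)=A_{s,\sigma,\beta}\,\omega^{(2s-1)(\sigma+1)/(2s\sigma)}$ with $A_{s,\sigma,\beta}$ of the sign of $\sigma-\sigma_c(s)$ and a \emph{strictly positive} exponent (again $s>\tf{1}{2}$); letting $\omega\to+\infty$ in the subcritical case and $\omega\to0^+$ in the supercritical case then gives $\inf_{\omega>0}E(u^\omega)=-\infty$ and $\inf_{\omega>0}E(u^\omega)=0$ respectively, the critical case being trivial.

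The genuinely fractional -- and only non-routine -- ingredients are establishing that $(\flap+\omega)u^\omega$ is a bare multiple of $\delta$ (equivalently, $\phi_\omega\equiv0$) and computing the three constants $\greeno(0)$, $[\greeno]_{\dot{H}^{s}(\R)}^2$ and $[D^{2s-1}\greeno](0)=-1$, each of which needs $s>\tf{1}{2}$ for the relevant integral to converge; everything else is elementary algebra.
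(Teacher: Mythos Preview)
Your argument is correct and follows essentially the same strategy as the paper's proof: identify $u^\omega$ as a scalar multiple of $\greeno$, extract the scalar constraint from the boundary/jump condition, and then compute the energy using the Beta-function values of $\greeno(0)$ and $\|\greeno\|^2$. The differences are only in packaging. Where the paper works throughout with the resolvent decomposition \eqref{eq-domn1} (first keeping $\lambda$ generic to force $\omega>0$ from $\phi_\lambda^\omega\in H^{2s}(\R)$, then setting $\lambda=\omega$ to get $\phi_\omega^\omega\equiv0$ and reading off the constraint by evaluating at $0$), you instead use the jump representation of Proposition~\ref{pro-linear}, deduce $\omega>0$ directly from $L^2$-integrability of $(|k|^{2s}+\omega)^{-1}$, and impose the jump condition via $[D^{2s-1}\greeno](0)=-1$; your parenthetical alternative is exactly the paper's Part~2. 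For the energy, the paper expands $K(u^\omega)$ and $P(u^\omega)$ separately through \eqref{eq-conto1}--\eqref{eq-conto2} and then combines them, whereas you first substitute the scalar constraint into the nonlinear term and then use the clean identity $[\greeno]_{\dot H^s(\R)}^2=\greeno(0)/(2s)$ (which is \eqref{eq-green_prop3} together with \eqref{eq-conto1}--\eqref{eq-conto2}) to obtain $E(u^\omega)=\tfrac{|c|^2\greeno(0)}{2s(\sigma+1)}(\sigma-\sigma_c(s))$ in one line; this is a slightly more transparent organization of the same computation.
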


Theorem \ref{teo-standing} has some important consequences. First, it states that no standing wave may exist in the defocusing case. On the other hand, one can immediately sees that in the supercritical focusing case the infimum level of the energies of the standing waves is exactly the threshold under which there is blow-up.

\medskip
\lora{The assumption $s\in(\f{1}{2},1]$ in Theorem \ref{teo-nonlinear} and Theorem \ref{teo-standing} is explained as follows. First, one can observe that linear $\delta$-type perturbations of the fractional Laplacian do exist only if $s>\f{1}{4}$, since otherwise the Green's function of $\flap+\lambda$ is not square integrable (for details see \cite{MOS,MS}). On the other hand, as we stressed in Section \ref{subsec:lin}, when $s\in(\f{1}{2},1)$ the Green's function is also bounded and continuous and this entails that the strategy for the proofs of the main results (up to several technical modifications) retraces the one of \cite{AT}, since the problem still displays the specific features of the so-called models in \emph{codimension one}. On the contrary, when $s\in(\f{1}{4},\f{1}{2}]$, the issue presents the structure of higher codimensional models, such as nonlinear $\delta$-perturbations of the Laplacian in $\R^2$ and $\R^3$ (precisely, $s\in(\f{1}{4},\f{1}{2})$ retraces delta in $3d$, while $s=\f{1}{2}$ retraces delta in $2d$). In these cases the strategy for the proof of global and local well-posedness (and blow-up) is considerably different (for instance, it is no more true that $q(t)=\psi(t,0)$) and, mainly in the $2d$ case (\cite{ACCT,CCT}), requires a more refined analysis of the kernel of the associated charge equation, which is not clear how to adapt to the fractional case at the moment. Hence, we think that in a first work on nonlinear perturbations of the fractional Laplacian, it is worth starting from one-codimensional problems, leaving higher-codimensional ones to forthcoming papers.}

\lora{However, also in this one-codimensional case, although the strategy is analogous, some relevant and interesting differences can be detected between this work and \cite{AT}. First, here we present a finer discussion of the regularizing properties of the integral kernel $t^{-\f{1}{2s}}$. In particular, such discussion allows to prove a far better regularity for the solution of the charge equation, which is subsequently exploited to show that the wave function defined in \eqref{eq-ansatz} is a solution of \eqref{eq-cauchy} in a \emph{strong} sense. In view of this, the papers contains a stronger result (which holds for $s=1$ too) with respect to \cite{AT}, where only the integral formulation and distributional solutions were treated (recall that in \cite{AT} the {main} achievement was a feasible strategy for the investigation of nonlinear delta perturbations).}

\lora{On the other hand, concerning the blow-up analysis, here it is necessary the introduction of a \emph{fractional} moment of inertia, as in the standard fractional NLSE. The study of such a fractional moment of inertia in the context of nonlinear deltas requires several technical modifications in the proofs (which follow, again, the same strategy of the integer problem), which also make the detection of the fractional critical power possible.}

\lora{Finally, in addition to \cite{AT}, the paper contains a complete classification of the standing waves of \eqref{eq-cauchy} and shows the connection between their energy level and the threshold for the blow-up.}

\begin{rem}
 \lora{It is worth recalling that the different strategy used here with respect to the case of the standard NLSE is due to the singular structure of the nonlinearity and to the consequent singular structure of the functions in the ``operator'' domain. Precisely, the Schr\"odinger equation with a nonlinear delta is not a semilinear equation (as the standard NLSE), since the nonlineariy is encoded in the domain of the ``operator'' and not in its action. However, the fact that the problem is the nonlinear version of a so-called \emph{solvable} model allows to trace back to a one-dimensional integral problem (the charge equation), whose discussion (although completely different) then plays the role of Strichartz estimates in the proof of local and global well-posedness for standard NLSE.}
\end{rem}
\begin{rem}
\doma{Notice that even if the fractional Laplacian can exhibit very different behavior with respect to the integer one in the stationary case, see e.g. \cite{DSV17}, this is not the case 
in our evolution problem. For instance $\propz$ shares some dispersive properties with the free propagator $e^{it\Delta}$, namely a $L^1-L^{\infty}$ estimate with a different exponent.
Moreover if we look at equation \eqref{eq-charge_eq}, whose analysis is the heart of this paper, for $s=1$ it is still a Volterra integral equation with a different Abel Kernel,
with similar qualitative behavior.}
\end{rem}

%%%%%%%%%%%%%%%%%%%%%%%%%%%%%%%%%%%%%%%%%%%%%%%%%%%%%%%%%%%%%%%%%%%%%%%%%%%%%%%%%%
%%%%%%%%%%%%%%%%%%%%%%%%%%%%%%%%%%%%%%%%%%%%%%%%%%%%%%%%%%%%%%%%%%%%%%%%%%%%%%%%%%
%%%%%%%%%%%%%%%%%%%%%%%%%%%%%%%%%%%%%%%%%%%%%%%%%%%%%%%%%%%%%%%%%%%%%%%%%%%%%%%%%%

\section{Proof of the main results}
\label{sec:proof}

This section is devoted to the proof of Theorem \ref{teo-nonlinear} and Theorem \ref{teo-standing}. The first four subsections discuss the four items of Theorem \ref{teo-nonlinear}, while the last one is completely focused on Theorem \ref{teo-standing}.

%%%%%%%%%%%%%%%%%%%%%%%%%%%%%%%%%%%%%%%%%%%%%%%%%%%%%%%%%%%%%%%%%%%%%%%%%%%%%%%%%%

\subsection{Local well-posedness}

In order to prove local well-posedness of \eqref{eq-cauchy} a detailed analysis of the charge equation \eqref{eq-charge_eq} is required. However, we need some preliminary results concerning fractional Sobolev spaces in $d=1$ and the regularizing properties of the $\f{1}{2s}$-Abel integral kernel $t^{-\f{1}{2s}}$.

\begin{lem}
 \label{lem-reg1}
 Let $s\in(\f{1}{2},1]$ and $\ell\in H^\mu(\R)$, with $\mu\geq0$ and $\mathrm{supp}\{\ell\}\subset[0,r]$, for some $r\in\R^+$. Then, the function
 \begin{equation}
  \label{eq-trans_func}
  \LL(t):=\int_0^t\frac{\ell(\tau)}{(t-\tau)^{\f{1}{2s}}}\dtau,\qquad t\in\R,
 \end{equation}
 belongs to $L_{loc}^2(\R)\cap \dot{H}^{\mu+1-\f{1}{2s}}(\R)$.
\end{lem}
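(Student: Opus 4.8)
The plan is to recognise $\LL$ as a convolution and then argue on the Fourier side. Since $\mathrm{supp}\{\ell\}\subset[0,r]\subset[0,\infty)$, for $t\le0$ the integrand in \eqref{eq-trans_func} vanishes, so $\LL(t)=0$, while for $t>0$ the integration effectively runs over $\tau\in[0,\min\{t,r\}]$. Hence, setting $K(t):=t^{-\f{1}{2s}}$ for $t>0$ and $K(t):=0$ for $t\le0$, one has $\LL=\ell*K$ on the whole of $\R$. Since $s>\f12$ yields $\f{1}{2s}<1$, the function $K$ is locally integrable. For the $L^2_{loc}$ claim I would fix $R>0$ and observe that, for $t\in[-R,R]$, only values $t-\tau\in(0,R]$ contribute, so $\LL=\ell*K_R$ on $[-R,R]$, where $K_R$ denotes $K$ restricted to $(0,R]$ and extended by $0$, which belongs to $L^1(\R)$; Young's inequality then gives $\|\LL\|_{L^2([-R,R])}\le\|\ell\|_{L^2(\R)}\,\|K_R\|_{L^1(\R)}<\infty$, i.e.\ $\LL\in L^2_{loc}(\R)$.

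For the homogeneous Sobolev regularity I would pass to Fourier variables. Because $\ell\in L^2(\R)$ is compactly supported, $\ell\in\mathcal E'(\R)$ and $\tr\ell$ extends to a smooth function of polynomial growth, while $K\in\mathcal S'(\R)$; thus $\LL=\ell*K\in\mathcal S'(\R)$ and $\tr\LL=\sqrt{2\pi}\,\tr\ell\,\tr K$, the product being well defined since $\tr\ell$ is a smooth multiplier. The Fourier transform of the one-sided power $K$ is classical: for $\alpha:=1-\f{1}{2s}\in(0,\f12]$ one has $\int_0^{+\infty}t^{\alpha-1}e^{-\imath kt}\dt=\Gamma(\alpha)(\imath k)^{-\alpha}$ (principal branch; this follows by the regularisation $t^{\alpha-1}e^{-\ep t}$, $\ep\downarrow0$), whence
\[
 \tr K(k)=\f{\Gamma\big(1-\f{1}{2s}\big)}{\sqrt{2\pi}}\,(\imath k)^{-(1-\f{1}{2s})},\qquad |\tr K(k)|=c(s)\,|k|^{-(1-\f{1}{2s})},
\]
with $c(s):=\Gamma(1-\tf{1}{2s})/\sqrt{2\pi}>0$; note that $1-\f{1}{2s}\in(0,1)$ makes $\tr K$ a genuine locally integrable function, with no hidden contribution at $k=0$. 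Consequently
\[
 |k|^{2(\mu+1-\f{1}{2s})}\,|\tr\LL(k)|^2=2\pi\,c(s)^2\,|k|^{2\mu}\,|\tr\ell(k)|^2\qquad\text{a.e.\ on }\R,
\]
and integrating over $\R$ gives $[\LL]_{\dot H^{\mu+1-\f{1}{2s}}(\R)}^2=2\pi\,c(s)^2\,[\ell]_{\dot H^{\mu}(\R)}^2\le2\pi\,c(s)^2\,\|\ell\|_{H^\mu(\R)}^2<\infty$ by \eqref{eq-defsob2}--\eqref{eq-gagliardo_fou}, that is $\LL\in\dot H^{\mu+1-\f{1}{2s}}(\R)$.

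A conceptual way to see the gain is that $\LL$ is, up to the constant $\Gamma(1-\tf{1}{2s})$, the Riemann--Liouville fractional integral of order $1-\f{1}{2s}$ of $\ell$, which is expected to raise the Sobolev regularity by precisely $1-\f{1}{2s}$. The only genuinely delicate point of the argument is the rigorous identity $\tr\LL=\sqrt{2\pi}\,\tr\ell\,\tr K$ together with the formula for $\tr K$: indeed $K$ belongs neither to $L^1(\R)$ nor to $L^2(\R)$, and at the endpoint $s=1$ one even has $\tr K\notin L^2_{loc}(\R)$ near the origin. This is dealt with exactly by the compact support of $\ell$ (so that $\tr\ell$ is a smooth multiplier and the convolution theorem applies to $\mathcal E'*\mathcal S'$), while the behaviour of the weighted integrand near $k=0$ is harmless because $|k|^{\mu+1-\f{1}{2s}}|\tr\LL(k)|\sim|k|^\mu|\tr\ell(k)|$ with $\mu\ge0$ and $\tr\ell$ continuous; everything else in the argument is elementary.
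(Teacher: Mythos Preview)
Your proof is correct and follows essentially the same route as the paper: write $\LL$ as the convolution of $\ell$ with the one-sided kernel $K(t)=H(t)t^{-1/(2s)}$, use a truncated kernel plus Young's inequality for the $L^2_{loc}$ bound, and compute $\tr K(k)\propto |k|^{-(1-1/(2s))}$ to obtain the $\dot H^{\mu+1-1/(2s)}$ estimate on the Fourier side. If anything, you are slightly more careful than the paper in justifying the convolution theorem (via $\ell\in\mathcal E'$ and the smoothness of $\tr\ell$), whereas the paper simply quotes the Fourier transform of $K$ from tables.
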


\begin{proof}
 Clearly,
 \begin{equation}
  \label{eq-conv1}
  \LL(t)=(h*\ell)(t), \quad \forall t\in\R,\qquad\text{where}\quad h(t):=\f{H(t)}{t^{\f{1}{2s}}}
 \end{equation}
 and $H$ is the Heaviside function. In addition, for every $t\in[0,r]$, $\LL(t)=\LL_r(t)$, where
 \[
  \LL_r(t):=(h_r*\ell)(t), \quad \forall t\in\R,\qquad\text{with}\quad h_r(t):=\f{H(t)-H(t-r)}{t^{\f{1}{2s}}}.
 \]
 As a consequence, simply recalling basic properties of the convolution product
 \[
  \|\LL\|_{L^2(0,r)}\leq\|\LL_r\|_{L^2(\R)}\leq\|h_r\|_{L^1(\R)}\|\ell\|_{L^2(\R)}=r^{1-\f{1}{2s}}\|\ell\|_{L^2(0,r)}<\infty.
 \]
 On the other hand, one can easily see that $\|\LL\|_{L^2(a,b)}<\infty$, for every $a,b\in\R$, since $\LL(t)=0$, for all $t\leq0$, and since, if $b>r$, then $\|\LL\|_{L^2(0,b)}$ can be estimated arguing as before (simply replacing $r$ with $b$).
 
 Therefore, it is left to prove that $\LL\in\dot{H}^{\mu+1-\f{1}{2s}}(\R)$. From \eqref{eq-conv1} and \cite[Eqs. 3.761.4 and 3.761.9]{GR} we see that
 \[
  \tr{\LL}(k)=\sqrt{2\pi}\,\tr{h}(k)\tr{\ell}(k)
 \]
 where
 \[
  \tr{h}(k)=\f{d(s)}{k^{1-\f{1}{2s}}},\qquad d(s):=\f{e^{-\imath\f{\pi}{2}\left(1-\f{1}{2s}\right)}}{\sqrt{2\pi}}\Gamma(1-\tf{1}{2s}).
 \]
 Hence
 \[
  \int_\R\big||k|^{\mu+1-\f{1}{2s}}\tr{\LL}(k)\big|^2\dk\leq C_s\int_\R\big||k|^\mu\tr{\ell}(k)\big|^2\dk<\infty,
 \]
 which concludes the proof.
\end{proof}

\begin{lem}
 \label{lem-reg2}
 Let $s\in(\f{1}{2},1]$ and $\ell\in H^\mu(0,r)$, with $\mu\geq0$ and $r\in\R^+$. \rafr{Then}{Therefore}:
 \begin{itemize}
  \item[(i)] if $\mu\in[0,\f{1}{2})$, then $\LL\in H^{\mu+1-\f{1}{2s}}(0,r)$;
  \item[(ii)] if $\mu\in(\f{1}{2},\f{3}{2})$ and $\ell(0)=0$, then $\LL\in H^{\mu+1-\f{1}{2s}}(0,r)$;
 \end{itemize}
 (with $\LL$ defined by \eqref{eq-trans_func}). In particular,
 \[
  \|\LL\|_{H^{\mu+1-\f{1}{2s}}(0,r)}\leq C_r\|\ell\|_{H^\mu(0,r)}.
 \]
\end{lem}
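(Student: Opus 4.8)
The plan is to derive Lemma~\ref{lem-reg2} from Lemma~\ref{lem-reg1} by an extension--restriction argument. First I would extend $\ell$ to a function $\widetilde{\ell}\in H^\mu(\R)$ which is supported in an interval of the form $[0,\rho]$ with $\rho\geq r$ and agrees with $\ell$ on $(0,r)$; then Lemma~\ref{lem-reg1} applies to $\widetilde{\ell}$ and produces the associated transform $\widetilde{\LL}$; and finally, since for $t\in(0,r)$ the value $\LL(t)$ depends only on $\ell\big|_{(0,t)}=\widetilde{\ell}\big|_{(0,t)}$, one has $\LL\equiv\widetilde{\LL}$ on $(0,r)$, so it remains to restrict. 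For the last step I would record the elementary fact that any $g\in\dot{H}^\nu(\R)\cap L^2_{loc}(\R)$ restricts, on any bounded interval $I$, to an element of $H^\nu(I)$ with norm controlled by $\|g\|_{L^2(I)}$ and $\|(-\Delta)^{\nu/2}g\|_{L^2(\R)}$: for $\nu\in(0,1)$ this is just the monotonicity of the Gagliardo seminorm under restriction, while for $\nu\in(1,2)$ (which occurs only in case~(ii)) one applies the same remark to $g'\in\dot{H}^{\nu-1}(\R)$, using the Sobolev embedding $\dot{H}^{\nu-1}(\R)\hookrightarrow L^2_{loc}(\R)$.

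The heart of the matter is the construction of $\widetilde{\ell}$, and this is where the two cases and the hypothesis $\ell(0)=0$ enter. In case~(i), $\mu\in[0,\f{1}{2})$: I take $\rho:=r$ and let $\widetilde{\ell}$ be the extension of $\ell$ by zero outside $(0,r)$; the boundedness of this extension by zero from $H^\mu(0,r)$ into $H^\mu(\R)$ for $\mu<\f{1}{2}$ is classical, being essentially a Hardy-type estimate on the Gagliardo seminorm (see, e.g., \cite{DPV}). In case~(ii), $\mu\in(\f{1}{2},\f{3}{2})$ and $\ell(0)=0$: I would first apply a bounded linear extension operator $E\colon H^\mu(0,r)\to H^\mu(\R)$, noting that $(E\ell)(0)=\ell(0)=0$ since for $\mu>\f{1}{2}$ the trace at the origin is meaningful and $E\ell$ is continuous; then multiply $E\ell$ by a cut-off $\chi\in C^\infty_c(\R)$ with $\chi\equiv1$ on $[0,r]$ and $\mathrm{supp}\{\chi\}\subset(-1,r+1)$; and finally extend $(\chi\,E\ell)\big|_{(0,+\infty)}$ by zero to the left of the origin. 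Since this restriction belongs to $H^\mu(0,+\infty)$, has vanishing trace at $0$ and $\mu<\f{3}{2}$, the last extension by zero is again bounded into $H^\mu(\R)$; with $\rho:=r+1$ this gives $\widetilde{\ell}\in H^\mu(\R)$ with $\mathrm{supp}\{\widetilde{\ell}\}\subset[0,\rho]$, $\widetilde{\ell}\equiv\ell$ on $(0,r)$ and $\|\widetilde{\ell}\|_{H^\mu(\R)}\leq C_r\|\ell\|_{H^\mu(0,r)}$.

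Granted this, the conclusion is immediate: Lemma~\ref{lem-reg1} gives $\widetilde{\LL}\in L^2_{loc}(\R)\cap\dot{H}^{\mu+1-\f{1}{2s}}(\R)$, and the quantitative estimates $\|\widetilde{\LL}\|_{L^2(0,\rho)}\leq\rho^{1-\f{1}{2s}}\|\widetilde{\ell}\|_{L^2(0,\rho)}$ and $\|(-\Delta)^{(\mu+1-\f{1}{2s})/2}\widetilde{\LL}\|_{L^2(\R)}\leq C_s\|(-\Delta)^{\mu/2}\widetilde{\ell}\|_{L^2(\R)}$ can be read off from its proof; combining the identity $\LL\equiv\widetilde{\LL}$ on $(0,r)$ with the restriction fact recalled above yields $\LL\in H^{\mu+1-\f{1}{2s}}(0,r)$ together with $\|\LL\|_{H^{\mu+1-\f{1}{2s}}(0,r)}\leq C_r\|\ell\|_{H^\mu(0,r)}$. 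The expected main obstacle is precisely the extension in case~(ii): one must manufacture an $H^\mu(\R)$-extension of $\ell$ vanishing on all of $(-\infty,0)$, and it is the failure of boundedness of the extension by zero across the threshold $\mu=\f{1}{2}$ (together with the Lions--Magenes obstruction at $\mu=\f{3}{2}$) that forces both the split into two cases and the requirement $\ell(0)=0$ in the second one.
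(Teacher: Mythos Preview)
Your proof is correct and follows the same extension--restriction scheme as the paper: extend $\ell$ to some $\widetilde{\ell}\in H^\mu(\R)$ supported in a bounded interval $[0,\rho]$, apply Lemma~\ref{lem-reg1}, and restrict back to $(0,r)$ using $\LL\equiv\widetilde{\LL}$ there. The only difference is in how $\widetilde{\ell}$ is constructed. The paper uses a single explicit device for both cases: the even reflection $\widetilde{\ell}(t):=\ell(2r-t)$ on $(r,2r]$, extended by zero outside $[0,2r]$, together with \cite[Lemma~2.1]{CFNT2} for the extension-by-zero bound; in case~(ii) the reflection automatically forces $\widetilde{\ell}(0)=\widetilde{\ell}(2r)=\ell(0)=0$, which is precisely the trace condition needed. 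Your construction---direct zero extension in case~(i), and abstract extension operator plus cutoff plus zero extension on the left in case~(ii)---is slightly longer but equally valid, and has the virtue of making transparent why the hypothesis $\ell(0)=0$ enters. You are also more explicit than the paper about the restriction step from $L^2_{loc}\cap\dot{H}^\nu(\R)$ to $H^\nu(0,r)$, which the paper leaves implicit.
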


\begin{rem}
 \label{reg-gain}
 Note that in Lemma \ref{lem-reg2} the regularity gain provided by the $\f{1}{2s}$-Abel kernel $1-\f{1}{2s}\in(0,\f{1}{2}]$, for all $s\in(\f{1}{2},1]$, and is independent of $\mu$. This two features are the core of the bootstrap argument used in the proof of Proposition \ref{prop-charge_reg}.
\end{rem}

\begin{proof}[Proof of Lemma \ref{lem-reg2}]
 Let
 \[
  \widetilde{\ell}(t):=\left\{
  \begin{array}{ll}
   \displaystyle \ell(t),    & \text{if}\quad t\in[0,r],              \\[.1cm]
   \displaystyle \ell(2r-t), & \text{if}\quad t\in(r,2r],             \\[.1cm]
   \displaystyle 0,          & \text{if}\quad t\in\R\backslash[0,2r].
  \end{array}
  \right.
 \]
 Easy computations yield
 \[
  \|\widetilde{\ell}\|_{H^{\mu}(0,2r)}\leq2\|\ell\|_{H^\mu(0,r)},
 \]
 and, on the other hand, \cite[Lemma 2.1]{CFNT2} implies
 \[
  \|\widetilde{\ell}\|_{H^{\mu}(\R)}\leq C\|\widetilde{\ell}\|_{H^{\mu}(0,2r)}.
 \]
 As a consequence $\widetilde{\ell}$ satisfies the assumptions of Lemma \ref{lem-reg1} (with support in $[0,2r]$, instead of $[0,r]$), so that
 \[
  \widetilde{\LL}(t):=\int_0^t\frac{\widetilde{\ell}(\tau)}{(t-\tau)^{\f{1}{2s}}}\dtau\in L_{loc}^2(\R)\cap \dot{H}^{\mu+1-\f{1}{2s}}(\R).
 \]
 Finally, observing that $\widetilde{\LL}(t)=\LL(t)$, for all $t\in[0,r]$, concludes the proof.
\end{proof}

\begin{rem}
 In Lemma \ref{lem-reg2}, $\mu=\f{1}{2}$ is excluded since in this case \cite[Lemma 2.1]{CFNT2} is not valid due to the failure of the Hardy inequality (see also \cite{KP}).  
\end{rem}

\begin{lem}
 \label{lem-ban_alg}
 Let $\ell\in H^\mu(0,r)\cap C^0[0,r]$, with $r\in\R^+$ and $\mu\in[0,1]$. Then, for every $\sigma\geq0$, $|\ell|^{2\sigma}\ell\in H^\mu(0,r)$.
\end{lem}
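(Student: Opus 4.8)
The plan is to prove that the map $u \mapsto |u|^{2\sigma}u$ sends $H^\mu(0,r)\cap C^0[0,r]$ into $H^\mu(0,r)$. Since $\mu\in[0,1]$, I will split the argument according to whether $\mu\le 1/2$ or $\mu>1/2$; the case $\mu=0$ is immediate because $\ell$ is bounded on the finite interval $[0,r]$, so $|\ell|^{2\sigma}\ell\in L^2(0,r)$. For $\mu\in(0,1]$ I will work directly with the Gagliardo seminorm, estimating
\[
 [\,|\ell|^{2\sigma}\ell\,]_{\dot H^\mu(0,r)}^2 = \int_{(0,r)^2}\frac{\big||\ell(x)|^{2\sigma}\ell(x)-|\ell(y)|^{2\sigma}\ell(y)\big|^2}{|x-y|^{1+2\mu}}\dx\dy .
\]
The key pointwise tool is the elementary inequality
\[
 \big|\,|a|^{2\sigma}a-|b|^{2\sigma}b\,\big|\le C_\sigma\big(|a|^{2\sigma}+|b|^{2\sigma}\big)\,|a-b|,\qquad a,b\in\C,
\]
which follows from the mean value theorem applied to $z\mapsto|z|^{2\sigma}z$ (a $C^1$ map on $\C\simeq\R^2$ when $\sigma\ge 1/2$, and Hölder-type for $\sigma\in(0,1/2)$, where one uses instead $\big|\,|a|^{2\sigma}a-|b|^{2\sigma}b\,\big|\le C_\sigma|a-b|^{2\sigma+1}$ together with $|a-b|^{2\sigma}\le(2\|\ell\|_\infty)^{2\sigma}$ on $[0,r]$). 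In either regime one arrives at
\[
 \big|\,|\ell(x)|^{2\sigma}\ell(x)-|\ell(y)|^{2\sigma}\ell(y)\,\big|\le C_\sigma\,\|\ell\|_{C^0[0,r]}^{2\sigma}\,|\ell(x)-\ell(y)|,
\]
and plugging this into the double integral gives
\[
 [\,|\ell|^{2\sigma}\ell\,]_{\dot H^\mu(0,r)}\le C_\sigma\,\|\ell\|_{C^0[0,r]}^{2\sigma}\,[\ell]_{\dot H^\mu(0,r)}<\infty .
\]
Combined with the trivial $L^2$ bound $\|\,|\ell|^{2\sigma}\ell\,\|_{L^2(0,r)}\le\|\ell\|_{C^0[0,r]}^{2\sigma}\|\ell\|_{L^2(0,r)}$, this yields $|\ell|^{2\sigma}\ell\in H^\mu(0,r)$ for all $\mu\in[0,1]$, which is the claim.

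I expect the main (mild) obstacle to be the low-regularity range $\sigma\in(0,1/2)$, where $z\mapsto|z|^{2\sigma}z$ is only Hölder-continuous of exponent $2\sigma+1<2$ rather than Lipschitz, so the naive mean-value estimate is unavailable. The remedy, as indicated above, is to use the Hölder bound $\big|\,|a|^{2\sigma}a-|b|^{2\sigma}b\,\big|\le C_\sigma|a-b|^{2\sigma+1}$ and then exploit that $\ell$ is bounded on the compact interval to absorb the extra $|a-b|^{2\sigma}$ factor into a constant; one still needs $|\ell(x)-\ell(y)|\le 2\|\ell\|_{C^0[0,r]}$ to close the estimate, which is exactly why the hypothesis $\ell\in C^0[0,r]$ (equivalently $\ell\in L^\infty$) is essential — without boundedness the statement would fail. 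A secondary point to be careful about is that the endpoint $\mu=1$ is allowed; there the Gagliardo seminorm characterization of $H^1(0,r)$ is replaced by $\|\partial_t\ell\|_{L^2}$, and one simply uses the chain rule $\partial_t(|\ell|^{2\sigma}\ell)=(\sigma+1)|\ell|^{2\sigma}\partial_t\ell + \sigma|\ell|^{2\sigma-2}\ell^2\,\partial_t\bar\ell$ (interpreting the coefficients appropriately when $\sigma<1$, again using boundedness of $\ell$) to bound $\|\partial_t(|\ell|^{2\sigma}\ell)\|_{L^2}\le C_\sigma\|\ell\|_{C^0[0,r]}^{2\sigma}\|\partial_t\ell\|_{L^2}$. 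Since all constants depend only on $\sigma$, $\mu$ and $r$, the proof also furnishes the quantitative bound $\|\,|\ell|^{2\sigma}\ell\,\|_{H^\mu(0,r)}\le C(\sigma,\mu,r)\,\|\ell\|_{C^0[0,r]}^{2\sigma}\|\ell\|_{H^\mu(0,r)}$, which is the form actually needed in the bootstrap of Proposition~\ref{prop-charge_reg}.
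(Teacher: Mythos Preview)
Your proof is correct and follows exactly the paper's approach: use the pointwise Lipschitz bound $\big||a|^{2\sigma}a-|b|^{2\sigma}b\big|\le C\|\ell\|_{C^0}^{2\sigma}|a-b|$ and insert it into the Gagliardo seminorm (with the endpoints $\mu=0,1$ handled directly). Your worry about $\sigma\in(0,1/2)$ is unnecessary, since $z\mapsto|z|^{2\sigma}z$ is in fact $C^1$ on $\C\simeq\R^2$ for every $\sigma>0$ (its differential is $O(|z|^{2\sigma})$ and hence continuous at the origin), so the local Lipschitz estimate holds for all $\sigma\ge0$ without a separate H\"older argument.
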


\begin{proof}
 The cases $\mu=0,1$ are trivial (recalling that $H^0=L^2$). Then, consider the case $\mu\in(0,1)$. Clearly, one sees that
 \[
  \big||\ell(t)|^{2\sigma}\ell(t)-|\ell(\tau)|^{2\sigma}\ell(\tau)\big|\leq C \|\ell\|_{C^0[0,r]}^{2\sigma}|\ell(t)-\ell(\tau)|,\qquad\forall t,\tau\in[0,r],
 \]
 which immediately concludes the proof.
\end{proof}

% \begin{rem}
%  Note that Lemma \ref{lem-ban_alg} is not true in general for $\mu>1$, without further assumptions on the power $\sigma$.
% \end{rem}

Now, we can focus on the charge equation. The \rafr{first}{former} step is proving that it admits a unique continuous solution on a sufficiently small interval.

\begin{pro}
 \label{pro-charge_cont}
 Let $s\in(\f{1}{2},1]$ and $\psi_0\in\dom(\hamn)$. Then, there exists $T\in\R^+$ such that \eqref{eq-charge_eq} has a unique solution $q\in C^0[0,T]$.
\end{pro}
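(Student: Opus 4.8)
The plan is to set up a fixed-point argument for the charge equation \eqref{eq-charge_eq} in the Banach space $X_T := C^0[0,T]$ equipped with the sup norm, for $T$ small. Define the map
\[
 \Phi(q)(t) := f(t) - \imath a(s)\beta\int_0^t\f{|q(\tau)|^{2\sigma}q(\tau)}{(t-\tau)^{\f{1}{2s}}}\dtau,
\]
and observe that a fixed point of $\Phi$ in $X_T$ is exactly a continuous solution of \eqref{eq-charge_eq} on $[0,T]$. First I would record that $f\in C^0[0,T]$: since $\psi_0\in\dom(\hamn)$ we have $\psi_0 = \phi_{\lambda,0} - \beta|\psi_0(0)|^{2\sigma}\psi_0(0)\green$ with $\phi_{\lambda,0}\in H^{2s}(\R)$, so $\propz(t)\psi_0 = \propz(t)\phi_{\lambda,0} - \beta|\psi_0(0)|^{2\sigma}\psi_0(0)\,\propz(t)\green$; the first term lies in $C^0([0,T];H^{2s}(\R))\hookrightarrow C^0([0,T];C^0(\R))$ by the free evolution theory (recall $s>\f12$, so $H^{2s}\hookrightarrow C^0$), and the second term evaluated at $x=0$ can be treated directly from the Fourier representation, giving continuity of $t\mapsto(\propz(t)\green)(0)$ on $[0,T]$ (it is continuous on $(0,\infty)$ and one checks the limit as $t\to0^+$). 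Hence $\|f\|_{C^0[0,T]}=:M_0<\infty$.

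The core estimate is the mapping property of the Abel-type operator $(Kg)(t):=\int_0^t (t-\tau)^{-\f1{2s}}g(\tau)\dtau$ on $C^0[0,T]$. Since $s>\f12$ we have $\f1{2s}<1$, so $\tau\mapsto(t-\tau)^{-\f1{2s}}\in L^1(0,t)$ with $\int_0^t(t-\tau)^{-\f1{2s}}\dtau = \f{2s}{2s-1}t^{1-\f1{2s}}$; this shows $\|Kg\|_{C^0[0,T]}\le \f{2s}{2s-1}T^{1-\f1{2s}}\|g\|_{C^0[0,T]}$, and a standard uniform-continuity argument (splitting the integral near $\tau=t$) shows $Kg$ is continuous. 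Combining this with the elementary Lipschitz bound $\big||x|^{2\sigma}x-|y|^{2\sigma}y\big|\le C(2\sigma+1)\max(|x|,|y|)^{2\sigma}|x-y|$ (the same inequality used in Lemma \ref{lem-ban_alg}), I get, on the closed ball $B_R:=\{q\in X_T:\|q\|_{C^0[0,T]}\le R\}$ with $R:=2M_0$,
\[
 \|\Phi(q)\|_{C^0[0,T]} \le M_0 + |a(s)\beta|\,\tf{2s}{2s-1}\,T^{1-\f1{2s}}R^{2\sigma+1},
\]
\[
 \|\Phi(q_1)-\Phi(q_2)\|_{C^0[0,T]} \le C\,|a(s)\beta|\,\tf{2s}{2s-1}\,T^{1-\f1{2s}}R^{2\sigma}\,\|q_1-q_2\|_{C^0[0,T]}.
\]
Choosing $T$ small enough (depending on $s,\beta,\sigma$ and $M_0$, hence on $\psi_0$) makes the second term in the first line at most $M_0$, so $\Phi(B_R)\subset B_R$, and makes the constant in the second line strictly less than $1$, so $\Phi$ is a contraction. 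The Banach fixed-point theorem then yields a unique $q\in B_R$ with $\Phi(q)=q$; uniqueness in the full space $C^0[0,T]$ follows by a Gr\"onwall-type argument using the singular kernel (or by iterating the contraction estimate on successively shorter intervals and using that any continuous solution is a priori bounded on compact intervals).

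The main obstacle — the only genuinely non-routine point — is establishing the continuity of $f$ at $t=0$, i.e. handling the contribution $(\propz(t)\green)(0)=\f1{2\pi}\int_\R \f{e^{-\imath|k|^{2s}t}}{|k|^{2s}+\lambda}\dk$ as $t\to0^+$. Since $\green\in L^\infty(\R)\cap C^0(\R)$ for $s>\f12$ and $\green\in L^2(\R)$, and $\propz(t)$ is unitary on $L^2$ and converges strongly to the identity, one expects $\propz(t)\green\to\green$ in $L^2$; to get pointwise convergence at $x=0$ one uses that $\green\in H^s(\R)$ with $s>\f12$ so that $\propz(t)\green\to\green$ in $H^s$, hence uniformly, hence at $x=0$. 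Modulo this, everything else is the standard contraction-mapping machinery, and the exponent $1-\f1{2s}>0$ (valid precisely because $s>\f12$) is exactly what supplies the small factor $T^{1-\f1{2s}}$ that closes the argument.
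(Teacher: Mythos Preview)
Your proof is correct and takes a genuinely different, more elementary route than the paper's. The paper does not run the contraction argument explicitly; instead it invokes \cite[Corollary~2.7]{M} (Miller's existence/uniqueness theorem for nonlinear Volterra equations) and verifies its four hypotheses, with the only substantial one being the continuity of $f$. For that, the paper works on the time-Fourier side: writing $f=f_1+f_2$ as you do, it shows $f_1\in H^{\f{3}{2}-\f{1}{4s}}(0,T)$ via the change of variable $\omega=k^{2s}$, and similarly $f_3=(\propz(t)\green)(0)\in H^\mu(\R)$ for all $\mu<\f{3}{2}-\f{1}{2s}$; since these exponents exceed $\f12$, continuity follows. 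Your argument for the continuity of $f$ --- strong continuity of $\propz(t)$ on $H^{2s}(\R)$ and $H^s(\R)$ together with the Sobolev embedding $H^s(\R)\hookrightarrow C^0(\R)$ for $s>\f12$ --- is shorter and fully sufficient for this proposition.

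The trade-off is that the paper's more laborious Fourier computation is not wasted: the regularity $f_1\in H^{\f{3}{2}-\f{1}{4s}}(0,T)$ (and the analogous statement for $\widetilde{f}_3$) is precisely what feeds the bootstrap in Proposition~\ref{prop-charge_reg}, which upgrades $q$ from $C^0$ to $H^{\f{3}{2}-\f{1}{4s}}$. Your approach gets you the present proposition more cheaply but postpones that Sobolev analysis; the paper front-loads it here. Either organization is fine, and your self-contained contraction (with the explicit factor $T^{1-\f{1}{2s}}$ coming from the Abel kernel) together with the iterated-interval uniqueness argument is entirely sound.
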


\begin{proof}
 Define preliminarily
 \[
  h(t,\tau,q):=\imath a(s)\beta\frac{|q|^{2\sigma}q}{(t-\tau)^{\f{1}{2s}}}.
 \]
 By \cite[Corollary 2.7]{M}, in order to conclude it is sufficient to prove that:
 \begin{itemize}
  \item[(i)] $f$ is continuous on $[0,\infty)$;
  \item[(ii)] for every $\widetilde{t}\in\R^+$ and every bounded set $B\subset\C$, there exists a measurable function $m(t,\tau)$ such that
  \begin{gather*}
   |h(t,\tau,q)|\leq m(t,\tau)\qquad\forall \,0\leq\tau\leq t\leq\widetilde{t},\quad\forall q\in B,\\[.3cm]
   \sup_{t\in[0,\widetilde{t}]}\int_0^tm(t,\tau)\dtau<\infty\qquad\text{and}\qquad\int_0^tm(t,\tau)\dtau\underset{t\to0}{\longrightarrow}0;
  \end{gather*}
  \item[(iii)] for every compact interval $I\subset[0,\infty)$, every continuous function $\varphi:I\to\C$ and every $t_0\in\R^+$
  \[
   \lim_{t\to t_0}\int_I\big(h(t,\tau,\varphi(\tau)-h(t_0,\tau,\varphi(\tau))\big)\dtau=0;
  \]
  \item[(iv)] for every $\widetilde{t}\in\R^+$ and every bounded set $B\subset\C$, there exists a measurable function $n(t,\tau)$ such that
  \begin{gather*}
   |h(t,\tau,q_1)-h(t,\tau,q_2)|\leq n(t,\tau)|q_1-q_2|\qquad\forall \,0\leq\tau\leq t\leq\widetilde{t},\quad\forall q_1,q_2\in B,\\[.3cm]
   n(t,\,\cdot\,)\in L^1(0,t),\quad\forall t\in[0,\widetilde{t}],\qquad\text{and}\qquad\int_t^{t+\ep}n(t+\ep,\tau)\dtau\underset{\ep\to0}{\longrightarrow}0.
  \end{gather*}
 \end{itemize}
 However, (ii)--(iv) can be easily proved setting $m,\,n$ equal to the $\f{1}{2s}$-Abel kernel, up to some suitable multiplicative constants, and exploiting its integrability properties (see, for instance, \cite{ADFT,AT,CCT}).
 
 Hence, it is left to show (i). Preliminarily, note that, as $\psi_0\in\dom(\hamn)$ (and recalling that $\psi_0(0)=q(0)$),
 \begin{equation}
  \label{eq-decomposition}
  f(t)=(\propz(t)\phi_{\lambda,0})(0)-\beta|q(0)|^{2\sigma}q(0)(\propz(t)\green)(0)=:f_1(t)+f_2(t).
 \end{equation}
 Let us discuss the two terms separately. First, simply using Cauchy-Schwarz inequality, we find that for all $T\in\R^+$
 \begin{align}
  \label{eq-estfl2}
  \|f_1\|_{L^2(0,T)}^2 \leq & \, C_s\int_0^T\left(\int_\R|\rafr{\widehat{\phi_{\lambda,0}}(k)}{\phi_\lambda(k)}\dk|\right)^2\dt\nonumber\\[.2cm]
                       \leq & \, C_s\int_0^T\left(\int_\R(1+|k|^2)^s|\rafr{\widehat{\phi_{\lambda,0}}(k)}{\phi_\lambda(k)}|^2\dk\right)\left(\int_\R\f{1}{(|k|^2+1)^s}\dk\right)\dt\nonumber\\[.2cm]
                       \leq & \, C_sT\|\phi_{\lambda,0}\|_{H^s(\R)}^2<\infty.
 \end{align}
 On the other hand, denoting by $\tr{\phi}_p$ the even part of $\tr{\phi}_{\lambda,0}$, and setting $\omega=\rafr{k^{2\, s}}{k^{2\sigma}}$, with some computations one obtains
 \begin{equation}
  \label{eq-f1}
  f_1(t)=\f{1}{\pi}\int_0^\infty e^{-\imath k^{2s}t}\tr{\phi}_p(k)\dk=\f{1}{2\pi s}\int_\R e^{-\imath\omega t}\underbrace{H(\omega)\f{|\omega|^{\f{1}{2s}}\tr{\phi}_p\big(|\omega|^{\f{1}{2s}}\big)}{|\omega|}}_{=:G(\omega)}\dome=\f{\tr{G}(t)}{s\sqrt{2\pi}}=\f{\widecheck{G}(-t)}{s\sqrt{2\pi}}
 \end{equation}
 (where again $H$ denotes the Heaviside function), that is
 \[
  \tr{f}_1(\omega)=\f{G(-\omega)}{s\sqrt{2\pi}}.
 \]
 As a consequence (using the same change of variable as before)
 \begin{align}
  \label{eq-estfh}
  [f_1]_{\dot{H}^{\f{3}{2}-\f{1}{4s}}(\R)}^2 = & \, \f{1}{2\pi s^2}\int_\R|\omega|^{3-\f{1}{2s}}|G(-\omega)|^2\dome\leq C_s\int_0^\infty\f{\omega^{3+\f{1}{2s}}}{\omega^2}\big|\tr{\phi}_p\big(|\omega|^{\f{1}{2s}}\big)\big|^2\dome\nonumber\\[.2cm]
                                          \leq & \, C_s\int_0^\infty k^{4s}|\tr{\phi}_p(k)|^2\dk\leq C_s \int_0^\infty |k|^{4s}|\tr{\phi}_{\lambda,0}(k)|^2\dk<\infty,
 \end{align}
 since $\phi_{\lambda,0}\in H^{2s}(\R)$ by assumption, so that, combining with \eqref{eq-estfl2}, there results $f_1\in H^{\f{3}{2}-\f{1}{4s}}(0,T)$, for all $T\in\R^+$.
 
 Let us consider, now, $f_2$. Precisely, we focus on
 \begin{equation}
  \label{eq-f3}
  f_3(t):=-\frac{f_2(t)}{\beta|q(0)|^{2\sigma}q(0)}.
 \end{equation}
 As $\green\in H^s(\R)$, arguing as \rafa{in} \eqref{eq-estfl2}, one sees that $f_3\in L_{loc}^2([0,\infty))$. On the other hand, arguing as in \eqref{eq-f1}, there results
 \begin{equation}
  \label{eq-f3bis}
  f_3(t)=C_s\int_\R e^{-\imath\omega t}\underbrace{H(\omega)\f{|\omega|^{\f{1}{2s}}}{|\omega|(|\omega|+\lambda)}}_{=:F(\omega)}\dome=C_s\widecheck{F}(-t),
 \end{equation}
 so that
 \begin{equation}
  \label{eq-f3sob}
  [f_3]_{\dot{H}^\mu}^2(\R)=C_s\int_0^\infty\f{\omega^{2(\mu-1)+\f{1}{s}}}{(\omega+\lambda)^2}<\infty,\qquad \forall \mu\in[0,\tf{3}{2}-\tf{1}{2s}).
 \end{equation}
 Hence, as $\f{3}{2}-\f{1}{2s}\in(\f{1}{2},1]$ for $s\in(\f{1}{2},1]$, this implies that $f_3$ is continuous on $[0,\infty)$, which concludes the proof.
\end{proof}

\begin{rem}
 \label{rem-f3l2}
 Equation \eqref{eq-f3bis} also shows that $f_3\in L^2(\R)$, not only in $L_{loc}^2([0,\infty))$.
\end{rem}

The \rafr{last}{latter} step of the discussion of the charge equation is proving a suitable Sobolev regularity for the solution $q$. To this aim, it is also convenient to define the maximal existence interval $[0,T^*)$ for \eqref{eq-charge_eq}, i.e.
\begin{equation}
 \label{eq-maximal}
  T^*:=\sup\{T>0:\text{ there exists a unique solution }q\in C^0[0,T]\text{ of \eqref{eq-charge_eq}}\}.
\end{equation}

\begin{pro}
 \label{prop-charge_reg}
 Let $s\in(\f{1}{2},1]$ and $\psi_0\in\dom(\hamn)$. Then, the solution $q$ of \eqref{eq-charge_eq}, provided by Proposition \ref{pro-charge_cont}, belongs to $H^{\f{3}{2}-\f{1}{4s}}(0,T)$ for every $T\in(0,T^*)$.
\end{pro}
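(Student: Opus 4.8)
The plan is to write the solution of \eqref{eq-charge_eq} as $q=f_1+g$, where $f_1$ has already been proved to belong to $H^{\f{3}{2}-\f{1}{4s}}(0,T)$ within the proof of Proposition~\ref{pro-charge_cont} (see \eqref{eq-estfl2} and \eqref{eq-estfh}), and
\[
 g(t):=f_2(t)-\imath a(s)\beta\,\LL(t),\qquad \LL(t):=\int_0^t\frac{|q(\tau)|^{2\sigma}q(\tau)}{(t-\tau)^{\f{1}{2s}}}\dtau .
\]
Since $T<T^*$ forces $q\in C^0[0,T]$ by Proposition~\ref{pro-charge_cont}, it suffices to show $g\in H^{\f{3}{2}-\f{1}{4s}}(0,T)$. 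The argument has two parts: a cancellation of the leading singularities of $f_2$ and $\LL$ at $t=0$, leaving a genuinely more regular remainder; and then a bootstrap based on the uniform regularity gain $1-\f{1}{2s}$ of Lemma~\ref{lem-reg2} (see Remark~\ref{reg-gain}) together with the algebra estimate of Lemma~\ref{lem-ban_alg}.

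For the cancellation, set $\gamma:=1-\tf{1}{2s}\in(0,\tf12]$, $\ell(0):=|q(0)|^{2\sigma}q(0)$ and $\ell_0(\tau):=|q(\tau)|^{2\sigma}q(\tau)-\ell(0)$ (so $\ell_0(0)=0$), and split $\LL(t)=\f{\ell(0)}{\gamma}t^\gamma+\LL_0(t)$ with $\LL_0(t):=\int_0^t(t-\tau)^{-\f{1}{2s}}\ell_0(\tau)\dtau$. Since $f_2(t)=-\beta\,\ell(0)\,(\propz(t)\green)(0)$, a direct Fourier-side computation based on $\flap\green=\delta-\lambda\green$ (which follows from \eqref{eq-green_eq}) and on $\propz(t,0)=a(s)t^{-\f{1}{2s}}$ gives
\[
 (\propz(t)\green)(0)=\green(0)-\frac{\imath a(s)}{\gamma}\,t^\gamma+\imath\lambda\int_0^t(\propz(\tau)\green)(0)\dtau .
\]
Substituting, the $t^\gamma$-terms of $f_2$ and of $\LL$ cancel, so
\[
 g(t)=-\beta\,\ell(0)\Big[\green(0)+\imath\lambda\int_0^t(\propz(\tau)\green)(0)\dtau\Big]-\imath a(s)\beta\,\LL_0(t).
\]
By \eqref{eq-f3bis}--\eqref{eq-f3sob}, $t\mapsto(\propz(t)\green)(0)=f_3(t)\in H^\mu(0,T)$ for every $\mu<\tf32-\tf1{2s}$, hence its primitive lies in $H^\nu(0,T)$ for every $\nu<\tf52-\tf1{2s}$; as $\tf52-\tf1{2s}>\tf32-\tf1{4s}$, the bracketed term already belongs to $H^{\f{3}{2}-\f{1}{4s}}(0,T)$, and the claim is reduced to $\LL_0\in H^{\f{3}{2}-\f{1}{4s}}(0,T)$.

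For the bootstrap, $q\in C^0[0,T]\subset H^0(0,T)$, so Lemma~\ref{lem-ban_alg} gives $\ell_0\in H^0(0,T)$ and Lemma~\ref{lem-reg2}(i) gives $\LL_0\in H^{\gamma}(0,T)$, hence $q\in H^{\gamma}(0,T)$. One then iterates: whenever $q\in H^\mu(0,T)\cap C^0$ with $\mu\in[0,\tf12)\cup(\tf12,1]$, Lemma~\ref{lem-ban_alg} yields $\ell_0\in H^\mu(0,T)$ — with vanishing trace at $0$ when $\mu>\tf12$ — so Lemma~\ref{lem-reg2} (item (i) or (ii) according to the range of $\mu$) gives $\LL_0\in H^{\mu+\gamma}(0,T)$, whence $q\in H^{\min\{\mu+\gamma,\,\f32-\f1{4s}\}}(0,T)$. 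The gain $\gamma$ being fixed and positive, and $\tf12+\tf1{4s}<1$ for $s\in(\tf12,1]$, after finitely many steps the regularity of $q$ passes $\tf12+\tf1{4s}$, and one more application of Lemmas~\ref{lem-ban_alg}--\ref{lem-reg2}(ii) (the former invoked at level $\min\{\cdot,1\}$) gives $\LL_0\in H^{\f{3}{2}-\f{1}{4s}}(0,T)$, hence $q=f_1+g\in H^{\f{3}{2}-\f{1}{4s}}(0,T)$, which concludes. The only subtle point of the iteration is that $\mu=\tf12$ is excluded in Lemma~\ref{lem-reg2}; this is handled by first deducing $q\in\bigcap_{\mu<1/2}H^\mu(0,T)$ and then gaining to $\bigcap_{\nu<3/2-1/(2s)}H^\nu(0,T)$, which re-enters the admissible range $(\tf12,1]$ of item~(ii).

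The main obstacle is exactly the cancellation. Treating $f_2$ and $\LL$ separately is useless, since each only lies in $H^\mu(0,T)$ for $\mu<\tf32-\tf1{2s}$, strictly below the target $\tf32-\tf1{4s}$: the point is that $\ell_0(0)=\ell(0)$ does not vanish in general, and the $\tf1{2s}$-Abel kernel then fails to regularize across $\mu=\tf12$. What restores the gain is that the coefficient $-\imath a(s)/\gamma$ of $t^\gamma$ in $(\propz(t)\green)(0)$ matches exactly the one generated by $\LL$, a matching dictated by the defining relation $(\flap+\lambda)\green=\delta$. Making the Fourier identity for $(\propz(t)\green)(0)$ rigorous (essentially, justifying the differentiation under the integral sign by a regularization of the oscillatory integral, or checking the integrated form directly) and tracking the Sobolev exponents through the bootstrap are the remaining, essentially routine, technical tasks.
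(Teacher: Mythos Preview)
Your proof is correct and follows essentially the same approach as the paper: both hinge on the same cancellation that rearranges the charge equation into one with forcing term in $H^{\f{3}{2}-\f{1}{4s}}(0,T)$ and integrand $|q(\tau)|^{2\sigma}q(\tau)-|q(0)|^{2\sigma}q(0)$ vanishing at $\tau=0$, followed by the identical bootstrap via Lemmas~\ref{lem-reg2} and~\ref{lem-ban_alg} with the same provisos at $\mu=\tf12$ and $\mu>1$. The only cosmetic difference is that you derive the key identity for $(\propz(t)\green)(0)$ from the equation $\flap\green=\delta-\lambda\green$, whereas the paper computes the constant $b(s)=-\imath a(s)/\gamma$ by a direct Fourier integral; these are equivalent, and your bracketed term $\green(0)+\imath\lambda\int_0^t f_3$ is exactly the paper's $\green(0)+\widetilde{f}_3(t)$.
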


\begin{rem}
 Observe that, as $s\in(\f{1}{2},1]$, then $\f{3}{2}-\f{1}{4s}\in(1,\f{5}{4}]$, so that (in particular) the solution of the charge equation is absolutely continuous on all closed and bounded subintervals of $[0,T^*)$. This justifies any integration of the derivative of the charge present throughout the paper.
\end{rem}

\begin{proof}[Proof of Proposition \ref{prop-charge_reg}]
 Fix an arbitrary $T\in(0,T^*)$. The proof can be divided \rafr{into}{in} two parts.
 
 \emph{Part(i): regularity of the forcing term.} The first point is to show that $f\in H^{\f{3}{2}-\f{1}{4s}}(0,T)$. However, as we already proved in \eqref{eq-estfl2}-\eqref{eq-estfh} that $f_1(t):=(\propz(t)\phi_{\lambda,0})(0)$ belongs to $H^{\f{3}{2}-\f{1}{4s}}(0,T)$, according to the decomposition of $f$ pointed out in \eqref{eq-decomposition}, we focus on $f_2$.
 
 In \eqref{eq-f3sob}, we proved that $f_3$, and whence $f_2$ (see \eqref{eq-f3}), is in $H^\mu(0,T)$ for all $\mu\in[0,\tf{3}{2}-\tf{1}{2s})$, which is not sufficient to our purposes. However, suitably manipulating $f_3$, one can find an equivalent formulation of \eqref{eq-charge_eq}, which presents a forcing term with the proper regularity.
 
 First, letting
 \[
  b(s):=\f{1}{2\pi}\int_\R\f{e^{-\imath|\rho|^{2s}}-1}{|\rho|^{2s}}\dro,
 \]
 we can define
 \[
  \widetilde{f}_3(t):=f_3(t)-f_3(0)-t^{1-\f{1}{2s}}b(s).
 \]
 Using the change of variable $\omega=k^{2s}$ and fundamental theorem of calculus, one finds that
 \[
  \widetilde{f}_3(t)=-\f{\lambda}{\pi}\int_0^\infty\f{e^{-\imath k^{2s}t}-1}{k^{2s}(\rafr{k^{2s}}{K^{2s}}+\lambda)}\dk=-\f{\lambda}{\pi}\int_0^\infty\f{(e^{-\imath \omega t}-1)\omega^{\f{1}{2s}}}{\omega^2(\omega+\lambda)}\dome=\f{\imath\lambda}{\pi}\int_0^\infty\f{\omega^{\f{1}{2s}}}{\omega(\omega+\lambda)}\int_0^t \rafr{e^{-\imath z \omega}}{e^{-\imath zt}} \dz\dome.
 \]
 As a consequence
 \[
  \dot{\widetilde{f}}_3(t)=\f{\imath\lambda}{\pi}\int_0^\infty e^{-\imath\omega t}H(\omega)\f{|\omega|^{\f{1}{2s}}}{|\omega|(|\omega|+\lambda)}\dome
 \]
 and hence, arguing as in \eqref{eq-f3bis}-\eqref{eq-f3sob} (in view of Remark \ref{rem-f3l2}), one has that $\dot{\widetilde{f}}_3\in H^\nu(\R)$ for all $\nu\in[0,\tf{3}{2}-\tf{1}{2s})$. Then, $\widetilde{f}_3\in H^\mu(\R)$ for all $\mu\in[0,\tf{5}{2}-\tf{1}{2s})$, which clearly implies $\widetilde{f}_3\in H^{\f{3}{2}-\f{1}{4s}}(\R)$.
 
 Summing up, (observing that $f_3(0)=\green(0)$)
 \[
  f_2(t)=-\beta|q(0)|^{2\sigma}q(0)\green(0)-\beta|q(0)|^{2\sigma}q(0)b(s)t^{1-\f{1}{2s}}-\beta|q(0)|^{2\sigma}q(0)\widetilde{f}_3(t).
 \]
 In addition, we find that
 \[
  t^{1-\f{1}{2s}}=\f{2s-1}{2s}\int_0^t\f{1}{(t-\tau)^{\f{1}{2s}}}\dtau
 \]
 and that (recalling \eqref{eq-a})
 \[
  b(s)=\f{-\imath}{\pi}\int_0^\infty\int_0^1e^{-\imath z k^{2s}}\dz\dk=\f{-\imath}{\pi}\int_0^1z^{-\f{1}{2s}}\int_0^\infty e^{-\imath\rho^{2s}}\dro\dz=\f{-\imath a(s)2s}{2s-1}.
 \]
 Consequently, one can suitably rearrange the terms in \eqref{eq-charge_eq} in order to obtain
 \begin{equation}
  \label{eq-charge_new}
  q(t)=\widetilde{f}(t)\underbrace{-\imath a(s)\beta\int_0^t\f{|q(\tau)|^{2\sigma}q(\tau)-|q(0)|^{2\sigma}q(0)}{(t-\tau)^{\f{1}{2s}}}\dtau}_{=:\eta(t)},
 \end{equation}
 where now
 \[
  \widetilde{f}(t)=f_1(t)-\beta|q(0)|^{2\sigma}q(0)\green(0)-\beta|q(0)|^{2\sigma}q(0)\widetilde{f}_3(t)
 \]
 belongs to $H^{\f{3}{2}-\f{1}{4s}}(0,T)$.
 
 \emph{Part(ii): bootstrap argument}. Now, we can apply a bootstrap argument on \eqref{eq-charge_new} in order to obtain that $q\in H^{\f{3}{2}-\f{1}{4s}}(0,T)$.
 
 We know that the unique solution of $q$ of \eqref{eq-charge_new} belongs to $L^2(0,T)\cap C^0[0,T]$. Hence, from Lemma \ref{lem-ban_alg}, there results that $|q(\,\cdot\,)|^{2\sigma}q(\,\cdot\,)-|q(0)|^{2\sigma}q(0)\in L^2(0,T)\cap C^0[0,T]$ and (consequently), from Lemma \ref{lem-reg2} (item (i)), that $\eta \in H^{1-\f{1}{2s}}(0,T)\cap C^0[0,T]$, so that $q \in H^{1-\f{1}{2s}}(0,T)\cap C^0[0,T]$ (in turn).
 
 Repeating the same argument one can easily prove, with an iterative process, that $q \in H^{\f{3}{2}-\f{1}{4s}}(0,T)\cap C^0[0,T]$, which concludes the proof.
 
 However, some provisos are required:
 \begin{itemize}
  \item[(1)] one uses item (i) of Lemma \ref{lem-reg2} until the starting index of the iteration is smaller than $\f{1}{2}$ and item (ii) when the starting index becomes greater than $\f{1}{2}$;
  \item[(2)] if at some iteration one runs into $H^{\f{1}{2}}(0,T)$, which is not covered by Lemma \ref{lem-reg2}, then it is sufficient to observe that $q\in H^{\mu}(0,T)$, with $\mu=\f{1}{2}-\ep$ and (for instance) $\ep<\f{1}{2}-\f{1}{4s}$, so that one can use Lemma \ref{lem-reg2} to leap over $\f{1}{2}$ (since $\mu+1-\f{1}{2s}>\f{1}{2}$) and move on;
  \item[(3)] if at some iteration one find $q\in H^\mu(0,T)$ with $\mu\in(1,\f{3}{2}-\f{1}{4s})$, which is not covered by Lemma \ref{lem-ban_alg}, then it is sufficient to observe that $q\in H^{\f{1}{2}+\f{1}{4s}}(0,T)$, and since $\f{1}{2}+\f{1}{4s}<1$ one can use again Lemma \ref{lem-ban_alg} and Lemma \ref{lem-reg2} to obtain that $q\in H^{\f{3}{2}-\f{1}{4s}}(0,T)$.
 \end{itemize}
 Note also that the iterative process must end in a finite number of steps because the regularity gain at each step is always the same (as highlighted in Remark \ref{reg-gain}).
\end{proof}

We can now prove the first point of Theorem \ref{teo-nonlinear}. It is worth pointing out that the following proof holds for every $T\in(0,T^*)$.

\begin{proof}[Proof of Theorem \ref{teo-nonlinear}: item (i).]
 From Propositions \ref{pro-charge_cont} and \ref{prop-charge_reg}, there is $T\in(0,T^*)$ for which there exists a unique solution of \eqref{eq-charge_eq} belonging to $H^{\f{3}{2}-\f{1}{4s}}(0,T)$. As a consequence, the function
 \begin{equation}
  \label{eq-r}
  r(t):=\beta|q(t)|^{2\sigma}q(t)
 \end{equation}
 belongs to $H^1(0,T)$ (by Lemma \ref{lem-ban_alg}, observing that $\f{3}{2}-\f{1}{4s}>1$). Then, we can split the proof in three parts.
 
 \emph{Part 1): regularity of $\psi$.} Using the Fourier transform in \eqref{eq-ansatz} and recalling definition \eqref{eq-domn1}, we have
 
 \begin{equation}
 \label{eq-psi_tr}
  \tr{\psi}(t,k)=e^{-\imath|k|^{2s}t}\tr{\phi}_{\lambda,0}(k)-\f{r(0)e^{-\imath|k|^{2s}t}}{\sqrt{2\pi}(|k|^{2s}+\lambda)}-\f{\imath}{\sqrt{2\pi}}\int_0^te^{-\imath|k|^{2s}(t-\tau)}r(\tau)\dtau.
 \end{equation}
 In addition, an integration by parts yields
 \begin{equation}
  \label{eq-psibyparts}
  \tr{\psi}(t,k)=\underbrace{e^{-\imath|k|^{2s}t}\bigg\{\tr{\phi}_{\lambda,0}(k)+\f{1}{\sqrt{2\pi}(|k|^{2s}+\lambda)}\int_0^te^{\imath|k|^{2s}\tau}(\dot{r}(\tau)-\imath\lambda r(\tau))\dtau\bigg\}}_{=:\tr{\phi}_\lambda(t,k)}-\f{r(t)}{\sqrt{2\pi}(|k|^{2s}+\lambda)},
 \end{equation}
 so that
 \[
  \psi(t,x)=\phi_\lambda(t,x)-r(t)\green(x).
 \]
 Hence, in order to prove that $\psi(t,\,\cdot\,)\in\dom(\hamn)$ for all $t>0$ it suffices to prove that $\tr{\phi}_\lambda(t,\,\cdot\,)\in L^2(\R,|k|^{4s}\dk)$ (since it is clearly in $L^2(\R)$).
 
 First, we easily see that $e^{-\imath|k|^{2s}t}\tr{\phi}_{\lambda,0}\in L^2(\R,|k|^{4s}\dk)$, by the properties of the free propagator. Concerning the remaining part, as $r$ is trivially more regular than $\dot{r}$, it is sufficient to sow that
 \[
  \f{1}{|k|^{2s}+\lambda}\int_0^te^{\imath|k|^{2s}\tau}\dot{r}(\tau)\dtau\in L^2(\R,|k|^{4s}\dk),
 \]
 namely, that
 \[
  g(t,k):=\int_0^te^{\imath|k|^{2s}\tau}\dot{r}(\tau)\dtau\in L^2(\R)
 \]
 (as functions of $k$). Setting $\rho_t(\tau)=\dot{r}(\tau)\chi_{[0,t]}(\tau)$ and observing that $\tr{\rho}_t\in L^2(\R)\cap L^\infty(\R)$ (since $\rho_t\in L^2(\R)\cap L^1(\R)$), one obtains
 \begin{align*}
  \int_\R|g(t,k)|^2\dk= & \, 2\int_0^\infty\bigg|\int_0^t e^{\imath k^{2s}\tau}\dot{r}(\tau)\dtau\bigg|^2\dk=2\int_0^\infty\f{1}{\omega^{1-\f{1}{2s}}}\bigg|\int_0^t e^{\imath \omega\tau}\dot{r}(\tau)\dtau\bigg|^2\dome\\[.3cm]
                     = & \, 2\int_0^\infty\f{|\tr{\rho}_t(-\omega)|^2}{\omega^{1-\f{1}{2s}}}\dome \leq C(1+\sqrt{t})(1+\|\dot{r}\|_{L^2(0,t)}^2)
 \end{align*}
 (as $1-\f{1}{2s}\in(0,\f{1}{2}]$), which proves the claim. Furthermore, one can prove with analogous computations that
 \begin{equation}
  \label{eq-regphi}
  \phi_\lambda\in C^0([0,T];H^{2s}(\R))
 \end{equation}
 (for details see \cite{CFNT2}). However, if we endow the domain $\dom(\hamn)$ with the graph norm $\|\,\cdot\,\|_{\dom(\hamn)}$, then
 \[
  \|\psi\|_{\dom(\hamn)}:=\|\psi\|+\|\hamn \psi\|\leq C\left(\|\psi\|+\|(\hamz+\lambda)\phi_\lambda\|+|r|\right)\leq C\left(\|\phi_\lambda\|_{H^{2s}(\R)}+|q|^{2\sigma+1}\right)
 \]
 and thus, combining with \eqref{eq-regphi},
 \[
  \psi\in C^0([0,T];\dom(\hamn)).
 \]
 
 \emph{Part 2): regularity of $\f{\partial\psi}{\partial t}$.} Let us compute, then, the time derivative of $\tr{\psi}$. We have that
 \[
  \f{\partial\tr{\psi}}{\partial t}(t,k)=\f{\partial\tr{\phi}_{\lambda}}{\partial t}(t,k)-\dot{r}(t)\tr{\green}.
 \]
 In addition,
 \[
  \f{\partial\tr{\phi}_{\lambda}}{\partial t}(t,k)=-\imath|k|^{2s}\tr{\phi}_\lambda(t,k)+\f{\dot{r}(t)-\imath\lambda r(t)}{\sqrt{2\pi}(|k|^{2s}+\lambda)},
 \]
 so that
 \begin{equation}
  \label{eq-psider}
  \f{\partial\tr{\psi}}{\partial t}(t,k)=-\imath|k|^{2s}\tr{\phi}_\lambda(t,k)-\f{\imath\lambda r(t)}{\sqrt{2\pi}(|k|^{2s}+\lambda)}.
 \end{equation}
 As a consequence, arguing as in Part 1), one can easily see that
 \[
  \tr{\psi}\in C^1([0,T];L^{2}(\R)),
 \]
 which then proves that $\psi\in C^1([0,T];L^{2}(\R))$.
 
 \emph{Part 3): solution of \eqref{eq-cauchy}.}  Now, since the initial condition is clearly satisfied by construction, we have to prove that
 \begin{equation}
  \label{eq-NLSeq}
  \imath\f{\partial\psi}{\partial t}=\hamn\psi\qquad\text{in }L^2(\R), \qquad\forall t\in[0,T].
 \end{equation}
 By \eqref{eq-psider}, 
 \[
  \imath \tr{\f{\partial\psi}{\partial t}}(t,k)=|k|^{2s}\tr{\phi}_\lambda(t,k)+\lambda r(t)\tr{\green}(k).
 \]
 On the other hand,
 \[
  \tr{\hamn\psi}(t,k)=\tr{(\hamz+\lambda)\phi_\lambda}(t,k)-\lambda\tr{\psi}(t,k)=|k|^{2s}\tr{\phi}_\lambda(t,k)+\lambda(\tr{\phi}_\lambda(t,k)-\tr{\psi}(t,k)),
 \]
 which concludes the proof.
\end{proof}

%%%%%%%%%%%%%%%%%%%%%%%%%%%%%%%%%%%%%%%%%%%%%%%%%%%%%%%%%%%%%%%%%%%%%%%%%%%%%%%%%%

\subsection{Conservation laws} This section is devoted to the proofs of the conservation laws associated to \eqref{eq-cauchy}. In particular, for the mass conservation we will prove that
\[
 \f{dM^2}{dt}(t)=0,\qquad\forall t\in[0,T^*),
\]
whereas we will show the energy conservation by a direct inspection of the equality
\[
 E(t)=E(0),\qquad\forall t\in[0,T^*).
\]

\begin{proof}[Proof of Theorem \ref{teo-nonlinear}: item (ii).]
 The proof can be divided in two parts.
 
 \emph{Part 1): mass conservation.} First we observe that
 \[
  \f{dM^2}{dt}(t)=2\mathrm{Re}\bigg\{\underbrace{\int_\R\ov{\psi(t,x)}\f{\partial\psi}{\partial t}(t,x)\dx}_{=:A}\bigg\},
 \]
 so that we have to prove that $A$ is purely imaginary.
 
 First, from \eqref{eq-NLSeq}
 \[
  \f{\partial\psi}{\partial t}(t,x)=-\imath\hamz\phi_\lambda(t,x)+\imath\lambda r(t)\green(x)
 \]
 (with $r$ defined by \eqref{eq-r}). As a consequence
 \[
  A=\imath\big(\underbrace{(\phi_\lambda(t,\,\cdot\,),\hamz\phi_\lambda(t,\,\cdot\,))-\lambda|r(t)|^2\|\green\|^2}_{=:A_1}\big)+\imath\big(\underbrace{\ov{r(t)}(\green,\hamz\phi_\lambda(t,\,\cdot\,))+r(t)(\phi_\lambda(t,\,\cdot\,),\green)}_{=:A_2}\big),
 \]
 where we easily see that, since $\hamz$ is self-adjoint, $A_1$ is real-valued. Now, a computation shows that
 \[
  \imath A_2=\imath\bigg(\underbrace{\ov{r(t)}(\green,(\hamz+\lambda)\phi_\lambda(t,\,\cdot\,)}_{=:B_1}-\underbrace{2\lambda\re{\ov{r(t)}(\green,\phi_\lambda(t,\,\cdot\,))}}_{=:B_2}\bigg).
 \]
 Finally, as $B_2$ is clearly real-valued, it is left to prove that $B_1$ is real-valued too. Exploiting the definition of the \rafr{Green's}{green} function \eqref{eq-green_eq} (which is also a real-valued function), the decomposition of the domain $\dom(\hamn)$ and \eqref{eq-r}, we have
 \[
  B_1=\ov{r(t)}((\hamz+\lambda)\green,\phi_\lambda(t,\,\cdot\,))=\ov{r(t)}\phi_\lambda(t,0)=|q(t)|^{2\sigma+2}+|r(t)|^2\green(0),
 \]
 that is in fact real-valued, which concludes the proof.
 
 \emph{Part 2): energy conservation.} From \eqref{eq-gagliardo_fou} and \eqref{eq-energy}, the energy at time $t$ reads
 \[
   E(t)=\int_\R|k|^{2s}|\tr{\psi}(t,k)|^{2}\dk+\frac{\beta}{\sigma+1}|q|^{2\sigma+2}.
 \]
 Using the representation of $\tr{\psi}(t,k)$ given by \eqref{eq-psi_tr}, the kinetic part of the energy turns out to be
 \begin{multline}
  \label{energia2}
  \int_\R|k|^{2s}|\widehat{\psi}(t,k)|^{2}\dk=\int_\R|k|^{2s}\left|\widehat{\psi}_{0}(t,k)-\frac{\imath}{\sqrt{2\pi}}\int_0^te^{\imath |k|^{2s}\tau}r(\tau)\dtau\right|^{2}\dk\\[.3cm]
  =\int_\R|k|^{2s}|\widehat{\psi}_{0}(t,k)|^{2}\dk+2\mathrm{Re}\bigg\{\underbrace{\frac{\imath}{\sqrt{2\pi}}\int_\R|k|^{2s}\widehat{\psi}_{0}(k)\int_{0}^{t}e^{-\imath |k|^{2s}\tau}\overline{r(\tau)}\dtau\dk}_{=:E_{1}}\bigg\}+\\
  +\underbrace{\frac{1}{2\pi}\int_\R|k|^{2s}\int_{0}^{t}\int_{0}^{t}e^{\imath |k|^{2s}(\tau_{1}-\tau_{2})}r(\tau_{1})\overline{r(\tau_{2})}\dtau_2\dtau_1\dk}_{=:E_{2}}.
 \end{multline}
 An integration by parts, yields
 \begin{align*}
  \label{eq-E1}
   E_1= & \, -\frac{1}{\sqrt{2\pi}}\int_\R\widehat{\psi}_{0}(k)\left(e^{-\imath|k|^{2s}t}\overline{r(t)}-\overline{r(0)}-\int_{0}^{t}e^{-\imath|k|^{2s}\tau}\overline{\dot{r}(\tau)}\dtau\right)\dk\\[.3cm]
     = &-\overline{r(t)}(\propz(t)\psi_0)(0)+\ov{r(0)}(\propz(0)\psi_0)(0)+\int_{0}^{t}\overline{\dot{r}(\tau)}(\propz(\tau)\psi_0)(0)\dtau,
 \end{align*}
 and recalling that $(\propz(0)\psi_0)(0)=\psi_0(0)=q(0)$,
 \begin{equation}
  \label{eq-E1}
  E_1=-\overline{r(t)}(\propz(t)\psi_0)(0)+\beta|q(0)|^{2\sigma+2}+\int_{0}^{t}\overline{\dot{r}(\tau)}(\propz(\tau)\psi_0)(0)\dtau.
 \end{equation}
 On the other hand, observing that
 \[
  \int_0^t\int_0^te^{\imath|k|^{2s}(\tau_1-\tau_2)}r(\tau_1)\ov{r(\tau_2)}\dtau_1\dtau_2=2\re{\int_0^t\int_0^{\tau_1}e^{\imath|k|^{2s}(\tau_1-\tau_2)}r(\tau_1)\ov{r(\tau_2)}\dtau_1\dtau_2},
 \]
 one has
 \[
  E_2=\f{1}{\pi}\re{\int_\R|k|^{2s}\int_0^te^{\imath|k|^{2s}\tau_1}r(\tau_1)\int_0^{\tau_1}e^{-\imath|k|^{2s}\tau_2}\ov{r(\tau_2)}\dtau_2\dtau_1\dk}.
 \]
 Now, another integration by parts shows that
 \begin{multline*}
  \int_0^t\imath|k|^{2s}e^{\imath|k|^{2s}\tau_1}r(\tau_1)\int_0^{\tau_1}e^{-\imath|k|^{2s}\tau_2}\ov{r(\tau_2)}\dtau_2\dtau_1\\[.2cm]
  =e^{\imath|k|^{2s}t}r(t)\int_0^te^{-\imath|k|^{2s}\tau}\ov{r(\tau)}\dtau-\int_0^te^{\imath|k|^{2s}\tau_1}\dot{r}(\tau_1)\int_0^{\tau_1}e^{-\imath|k|^{2s}\tau_2}\ov{r(\tau_2)}\dtau_2\dtau_1-\int_0^t|r(\tau)|^2\dtau,
 \end{multline*}
 so that (with some computations)
 \begin{equation}
  \label{eq-E2}
  E_2=2\re{\imath\ov{r(t)}\int_0^t\propz(t-\tau,0)r(\tau)\dtau+i\int_0^t\dot{r}(\tau_1)\int_0^{\tau_1}\propz(\tau_2-\tau_1,0)\ov{r(\tau_2)}\dtau_2\dtau_1}
 \end{equation}

Plugging \eqref{eq-E1} and \eqref{eq-E2} in (\ref{energia2}), we get
 \begin{multline*}
  \int_\R|k|^{2s}|\tr{\psi}(t,k)|^{2}\dk=\int_\R|k|^{2s}|\widehat{\psi}_{0}(k)|^{2}\dk+2\beta|q(0)|^{2\sigma+2}-2\re{\ov{r(t)}(\propz(t)\psi_0)(0)\dtau}\\[.3cm]
  +2\re{\int_0^t\ov{\dot{r}(\tau)}(\propz(\tau)\psi_0)(0)\dtau}+2\re{\imath\ov{r(t)}\int_0^t\propz(t-\tau,0)r(\tau)\dtau}+\\[.3cm]
  +2\re{\imath\int_0^t\dot{r}(\tau_1)\int_0^{\tau_1}\propz(\tau_2-\tau_1,0)\dtau_2\dtau_1}.
 \end{multline*}
 Consequently, from \eqref{eq-charge_def} and \eqref{eq-ansatz}, there results
 \[
  \ov{r(t)}\left(\imath\int_0^t\propz(t-\tau,0)r(\tau)-(\propz(t)\psi_0)(0)\right)=-\ov{r(t)}q(t)=-\beta|q(t)|^{2\sigma+2}
 \]
 and
 \[
  \int_0^t\ov{\dot{r}(\tau)}(\propz(\tau)\psi_0)(0)\dtau=\int_0^t\ov{\dot{r}(\tau)}q(\tau)\dtau+\ov{\bigg(-\imath\int_0^t\dot{r}(\tau)\int_0^\tau\propz(\eta-\tau,0)\ov{r(\eta)}\deta\dtau\bigg)},
 \]
 and thus (with a further integration by parts)
 \begin{multline*}
  \int_\R|k|^{2s}|\tr{\psi}(t,k)|^{2}\dk=\int_\R|k|^{2s}|\widehat{\psi}_{0}(k)|^{2}\dk+2\beta\big(|q(0)|^{2\sigma+2}-|q(t)|^{2\sigma+2}\big)+\\[.3cm]
  +2\re{\int_0^t\ov{\dot{r}(\tau)}q(\tau)\dtau}=-2\int_0^t\re{\ov{r(\tau)}\dot{q}(\tau)}\dtau.
 \end{multline*}
 Finally, as
 \[
  \f{\beta}{\sigma+1}\f{d}{d\tau}|q(\tau)|^{2\sigma+2}=2\beta|q(\tau)|^{2\sigma}\re{\ov{q(\tau)}\dot{q}(\tau)}=2\re{\ov{r(\tau)}\dot{q}(\tau)},
 \]
 then
 \[
  \int_\R|k|^{2s}|\tr{\psi}(t,k)|^{2}\dk=\int_\R|k|^{2s}|\widehat{\psi}_{0}(k)|^{2}\dk-\f{\beta}{\sigma+1}|q(t)|^{2\sigma+2}+\f{\beta}{\sigma+1}|q(0)|^{2\sigma+2},
 \]
 which (recalling \eqref{eq-charge_def}) concludes the proof.
\end{proof}

%%%%%%%%%%%%%%%%%%%%%%%%%%%%%%%%%%%%%%%%%%%%%%%%%%%%%%%%%%%%%%%%%%%%%%%%%%%%%%%%%%

\subsection{Global well-posedness} Exploiting conservation laws, and in particular the energy conservation, it is possible to prove that the solution of \eqref{eq-cauchy} provided by \eqref{eq-ansatz}-\eqref{eq-charge_eq} is global in time, namely
\[
 T^*=+\infty
\]
(with $T^*$ defined by \eqref{eq-maximal}), in the defocusing and in the sub-critical/critical focusing cases.

In the focusing case the main ingredient is a fractional version of the Gagliardo-Nirenberg \rafr{inequality}{inequlity}, namely
\begin{equation}
 \label{eq-GN}
 \|f\|_{L^\infty(\R)}\leq C_s\|f\|^{1-\f{1}{2s}}[f]_{\dot{H}^s(\R)}^{\f{1}{2s}}
\end{equation}
(for the proof see, for instance, \cite{BM,E,MP}).

\begin{proof}[Proof of Theorem \ref{teo-nonlinear}: item (iii).]Consider, first, the defocusing case, i.e. $\beta>0$. From energy conservation, there results that
 \begin{equation}
  \label{eq-qlim}
  \limsup_{t\to T^*}|q(t)|=C<\infty
 \end{equation}
 and, by \cite[Theorem 2.3]{M}, this entails that $T^*=+\infty$.
 
 On the other hand, in the sub-critical focusing case, i.e. $\beta<0$ and $\sigma<\sigma_c(s)$, from \eqref{eq-GN} and \eqref{eq-charge_def},
 \begin{equation}
  \label{eq-coercivity}
  E(0)=E(t)\geq[\psi(t,\,\cdot\,)]_{\dot{H}^s(\R)}^2-\f{|\beta|}{\sigma+1}C_s^{2\sigma+2}\|\psi_0\|^{\f{(2s-1)(2\sigma+2)}{2s}}[\psi(t,\,\cdot\,)]_{\dot{H}^s(\R)}^{\f{2\sigma+2}{2s}}.
 \end{equation}
 Since $\tf{2\sigma+2}{2s}<2$ whenever $\sigma<\sigma_c(s)$, one obtains again \eqref{eq-qlim} and therefore the claim follows arguing as before.
 
 Finally, if $\sigma=\sigma_c(s)$, then \eqref{eq-coercivity} reads
 \begin{equation}
  \label{eq-epos}
  E(0)=E(t)\geq[\psi(t,\,\cdot\,)]_{\dot{H}^s(\R)}^2\left(1-\f{|\beta|C_s^{4s}}{2s}\|\psi_0\|^{2(2s-1)}\right)
 \end{equation}
 and hence \eqref{eq-qlim} is satisfied whenever the quantity in brackets is bigger than  $0$, namely whenever
 \[
  \|\psi_0\|<\left(\f{2s}{|\beta|C_s^{4s}}\right)^{\f{1}{2(2s-1)}}=:C(s,\beta),
 \]
 which concludes the proof.
\end{proof} 

%%%%%%%%%%%%%%%%%%%%%%%%%%%%%%%%%%%%%%%%%%%%%%%%%%%%%%%%%%%%%%%%%%%%%%%%%%%%%%%%%%

\subsection{Blow-up solutions} In order to prove the rise of blow-up solutions, we use the classical Glassey method (see, e.g., \cite{G}) based on the definition of a \emph{moment of inertia} and on the proof of the so-called \emph{Virial Identity}.

Due to the different scaling properties of the fractional Laplacian, when $s<1$, it is necessary to slightly modify the standard definition of the moment of inertia. Precisely, we set
\begin{equation}
 \label{eq-Idef}
 I(t)=I(\psi(t,\,\cdot\,)):=\big\|(-\Delta)^{\f{1-s}{2}}x\psi(t,\,\cdot\,)\big\|^2=\big\||k|^{1-s}\partial_k\tr{\psi}(t,\,\cdot\,)\big\|^2
\end{equation}
where $\psi$ (henceforth) is the solution of \eqref{eq-cauchy} provided by \eqref{eq-charge_eq} and \eqref{eq-ansatz} (see for instance \cite{BHL} and the references therein).
\doma{We notice that the fractional momentum operator defined in the appendix in \cite{DdPDV15} bears some similarities with \eqref{eq-Idef}. 
It is unclear at the moment if there is a deeper connection between the two objects.}

The first point is to prove that $I$ is well defined on the maximal existence time of $\psi$, i.e. $[0,T^*)$ (with $T^*$ defined by \eqref{eq-maximal}).

\begin{lem}
 \label{lem-Iwell}
 Let $s\in(\f{1}{2},1]$, $\psi_0\in\dom(\hamn)$ and $I(\psi_0)<\infty$. Then, for every $T<T^*$,
 \begin{equation}
  \label{eq-Iwell}
  I(t)\leq C_T<\infty,\qquad\forall t\in[0,T].
 \end{equation}
\end{lem}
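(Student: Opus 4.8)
The idea is to use the representation \eqref{eq-ansatz} for $\psi$, pass to Fourier side via \eqref{eq-psi_tr}, and control $\||k|^{1-s}\partial_k\tr\psi(t,\cdot)\|$ by splitting $\tr\psi$ into the free evolution of the regular part, the Green's function contribution, and the Duhamel integral involving $r(\tau)=\beta|q(\tau)|^{2\sigma}q(\tau)$. Since we already know from Proposition \ref{prop-charge_reg} and Lemma \ref{lem-ban_alg} that $r\in H^1(0,T)$ for every $T<T^*$ (and in particular $r,\dot r\in L^2(0,T)$, $r$ bounded), the three pieces should each be handled by elementary estimates. Concretely, I would first differentiate \eqref{eq-psi_tr} in $k$, keeping in mind $\partial_k|k|^{2s}=2s|k|^{2s-1}\sgn(k)$ and $\partial_k\big(|k|^{2s}+\lambda\big)^{-1}=-2s|k|^{2s-1}\sgn(k)\big(|k|^{2s}+\lambda\big)^{-2}$, so that the derivative hitting the exponentials produces factors $t$ or $(t-\tau)$ times $|k|^{2s-1}$.

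The plan is then to estimate three contributions to $\||k|^{1-s}\partial_k\tr\psi(t,\cdot)\|^2$. First term: $|k|^{1-s}\partial_k\big(e^{-\imath|k|^{2s}t}\tr\phi_{\lambda,0}(k)\big)$, which splits into $|k|^{1-s}e^{-\imath|k|^{2s}t}\partial_k\tr\phi_{\lambda,0}(k)$ — bounded by $I(\psi_0)<\infty$ using that $\psi_0$ has regular part $\phi_{\lambda,0}$ and the singular part is smooth away from the Green's function — plus $2s\,t\,|k|^{s}\sgn(k)e^{-\imath|k|^{2s}t}\tr\phi_{\lambda,0}(k)$, whose $L^2$ norm is $\le C\,t\,[\phi_{\lambda,0}]_{\dot H^s}$, finite since $\phi_{\lambda,0}\in H^{2s}(\R)$. (Here one must check $I(\psi_0)<\infty$ together with $\phi_{\lambda,0}\in H^{2s}$ forces $|k|^{1-s}\partial_k\big(\tr\green\big)\in L^2$, i.e. $|k|^{1-s}|k|^{2s-1}(|k|^{2s}+\lambda)^{-2}\in L^2$, which near $0$ behaves like $|k|^{s}$ and at infinity like $|k|^{-3s}$: both integrable for $s\in(\f12,1]$.) Second term: the Green's-function piece $-\,r(0)(2\pi)^{-1/2}\partial_k\big(e^{-\imath|k|^{2s}t}(|k|^{2s}+\lambda)^{-1}\big)$ gives $|k|^{1-s}$ times either $|k|^{2s-1}(|k|^{2s}+\lambda)^{-2}$ or $t|k|^{2s-1}(|k|^{2s}+\lambda)^{-1}$, both in $L^2(\R)$ by the same near-zero/near-infinity check, with constant $\le C(1+t)|q(0)|^{2\sigma+1}$. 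Third term, the Duhamel integral: differentiating $-\imath(2\pi)^{-1/2}\int_0^t e^{-\imath|k|^{2s}(t-\tau)}r(\tau)\dtau$ in $k$ brings down $-\imath\cdot(-\imath)\cdot 2s|k|^{2s-1}\sgn(k)\int_0^t(t-\tau)e^{-\imath|k|^{2s}(t-\tau)}r(\tau)\dtau$; one integrates by parts in $\tau$ to trade the factor $(t-\tau)$ against a $\tau$-antiderivative of $r$ plus a $\dot r$ term, obtaining a bound like $C\,|k|^{-1}\big(|r(t)|+|r(0)|+\|r\|_{L^1(0,t)}+\|\dot r\|_{L^1(0,t)}\big)$ for the oscillatory integral; multiplied by $|k|^{1-s}\cdot|k|^{2s-1}=|k|^{s}$ this is not yet in $L^2$ near infinity, so one must instead argue as in Part 1 of the proof of item (i): write the oscillatory integral as a Fourier transform, use the change of variables $\omega=k^{2s}$, $\dk = \frac1{2s}\omega^{\frac1{2s}-1}\dome$, so that $\||k|^s(\text{osc. int.})\|_{L^2(\R)}^2 = C\int_0^\infty \omega^{1+\frac1{2s}-1}|\widehat{\rho_t}(\omega)|^2\dome$ with $\rho_t$ built from $r\chi_{[0,t]}$ or $\dot r\chi_{[0,t]}$; after the $\tau$-integration by parts the relevant density has one more $\omega^{-1}$, so the exponent $\frac1{2s}-1\in(-\f12,0]$ makes the weight integrable against $\widehat{\rho_t}\in L^2\cap L^\infty$, giving a bound $\le C(1+\sqrt t)(1+\|r\|_{H^1(0,t)}^2)$.

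The main obstacle is precisely this third, Duhamel term: the naive pointwise-in-$k$ estimate of the oscillatory integral loses just enough decay in $|k|$ to fail $L^2$-integrability at infinity, so one genuinely needs the Fourier/Plancherel reformulation with the $\omega=k^{2s}$ substitution (exploiting $1-\tf1{2s}\in(0,\tf12]$ for $s\in(\tf12,1]$), exactly the mechanism already used in Part 1 of the proof of item (i); the bookkeeping of the integration by parts in $\tau$ to control the factor $(t-\tau)$ produced by $\partial_k$ is the delicate point. Once all three pieces are bounded by constants of the form $C_T(1+\|r\|_{H^1(0,T)}^2)(1+I(\psi_0))$, uniform over $t\in[0,T]$, one concludes \eqref{eq-Iwell}, using that $\|r\|_{H^1(0,T)}<\infty$ for every $T<T^*$ by Proposition \ref{prop-charge_reg} and Lemma \ref{lem-ban_alg}.
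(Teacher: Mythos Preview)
Your approach follows the same broad strategy as the paper---work on the Fourier side, differentiate \eqref{eq-psi_tr} in $k$, and control each resulting piece via integration by parts and the $\omega=k^{2s}$ change of variable---but the decomposition you use is less efficient, and your handling of the Duhamel term has a bookkeeping error.

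The paper regroups the $k$-derivative more cleverly. Differentiating \eqref{eq-psi_tr} and recalling that $\tr\psi_0=\tr\phi_{\lambda,0}-r(0)(2\pi)^{-1/2}(|k|^{2s}+\lambda)^{-1}$, one obtains
\[
|k|^{1-s}\partial_k\tr\psi(t,k)=-\underbrace{\imath 2st|k|^s\sgn(k)\,\tr\psi(t,k)}_{A_1}
+\underbrace{\tfrac{2s|k|^s\sgn(k)}{\sqrt{2\pi}}\int_0^te^{-\imath|k|^{2s}(t-\tau)}\tau r(\tau)\,d\tau}_{A_2}
+\underbrace{e^{-\imath|k|^{2s}t}|k|^{1-s}\partial_k\tr\psi_0(k)}_{A_3}.
\]
The point is that the factor $t$ arising from $\partial_k e^{-\imath|k|^{2s}t}$ combines across \emph{all three} terms of \eqref{eq-psi_tr} to give $A_1$, which is controlled directly by $\|\psi\|_{C^0([0,T];H^s)}$; the remaining factor $-\tau$ from the Duhamel exponential leaves $A_2$ with $r_1(\tau):=\tau r(\tau)$ (so $r_1(0)=0$); and the derivatives hitting non-exponential factors reassemble into $\partial_k\tr\psi_0$, i.e.\ $A_3$, controlled directly by $I(\psi_0)$. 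For $A_2$, since $r_1(0)=0$, the integration by parts cleanly produces the factor $(|k|^{2s}+\lambda)^{-1}$ with no dangerous boundary term, and one then argues exactly as in Part~1 of the proof of item~(i).

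In your term-by-term splitting the Duhamel piece instead carries $(t-\tau)r(\tau)$, which does \emph{not} vanish at $\tau=0$, and your exponent claim is off: after the $\tau$-integration by parts the oscillatory integral acquires a full factor $\omega^{-1}$, so upon squaring the weight becomes $\omega^{\tfrac{1}{2s}-2}$, not $\omega^{\tfrac{1}{2s}-1}$, and this is not integrable near $\omega=0$. The gap is fixable (split low/high frequency, or simply note that the density $\nu_t(\sigma)=\sigma r(t-\sigma)\chi_{[0,t]}(\sigma)$ lies in $H^{\tfrac{1}{4s}}(\R)$ since $\tfrac{1}{4s}<\tfrac12$, so no IBP is needed), but as written the estimate does not close. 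The paper's regrouping into $A_1,A_2,A_3$ avoids this issue entirely.
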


\begin{proof}
 Let $T<T^*$. \lorr{Recalling that $\tr{\psi}_0(k)=\tr{\phi}_{\lambda,0}-r(0)/\sqrt{2\pi}(|k|^{2s}+\lambda)$ and differentiating \eqref{eq-psi_tr} in $k$,}{From \eqref{eq-psi_tr}, an easy computation yields}
 \begin{multline*}
  |k|^{1-s}\f{\partial\tr{\psi}}{\partial k}(t,k)=-\underbrace{\imath2st|k|^s\sgn(k)\tr{\psi}(t,k)}_{=:A_1(t,k)}+\underbrace{\f{2s|k|^s\sgn(k)}{\sqrt{2\pi}}\int_0^te^{-\imath|k|^{2s}(t-\tau)}\tau r(\tau)\dtau}_{=:A_2(t,k)}+\\[.3cm]
  +\underbrace{e^{-\imath|k|^{2s}t}|k|^{1-s}\f{\partial\tr{\psi}_0}{\partial k}(k)}_{=:A_3(t,k)}.
 \end{multline*}
 Now, as $\psi\in C^0([0,T];\dom(\hamn))$, $\psi\in C^0([0,T];H^s(\R))$ as well, so that
 \[
  \|A_1(t,\,\cdot\,)\|^2\leq C T\|\psi\|_{C^0([0,T];H^s(\R))}^2<\infty.
 \]
 On the other hand
 \[
  \|A_3(t,\,\cdot\,)\|^2= I(0)<\infty,
 \]
 by assumption. It is, then, left to discuss $A_2$. An integration by parts shows that
 \[
  A_2(t,k)=\f{\imath2s|k|^s\sgn(k)}{\sqrt{2\pi}(|k|^{2s}+\lambda)}\left(-r_1(t)+\int_0^te^{-\imath|k|^{2s}(t-\tau)}(\dot{r}_1(\tau)-\imath\lambda r_1(\tau))\dtau\right)
 \]
 with $r_1(\tau):=\tau r(\tau)$ (which is clearly at least as regular as $r(\tau)$). Consequently, arguing as in the proof of Theorem \ref{teo-nonlinear} (precisely, item (i), Part 1)), one immediately sees that 
 \eqref{eq-Iwell} is satisfied.
\end{proof}

As a second point, we have to prove the fractional  Virial Identity.

\begin{pro}
 \label{pro-first_der}
 Let $s\in(\f{1}{2},1]$, $\psi_0\in\dom(\hamn)$ and $I(\psi_0)<\infty$. Then, $I(\,\cdot\,)\in C^1[0,T^*)$ and
 \begin{equation}
  \label{eq-first_der}
  \dot{I}(t)=4s\im{\int_\R k\tr{\psi}(t,k)\ov{\f{\partial\tr{\psi}}{\partial k}(t,k)}\dk},\qquad\forall t\in[0,T^*).
 \end{equation}
\end{pro}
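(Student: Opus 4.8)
The plan is to differentiate the representation $I(t) = \||k|^{1-s}\partial_k\tr{\psi}(t,\,\cdot\,)\|^2$ under the integral sign, justifying the interchange of $\partial_t$ and $\int_\R \dk$ by a dominated-convergence argument that rests on the a priori bounds already established in Lemma \ref{lem-Iwell}. First I would fix $T<T^*$ and recall from that lemma the decomposition $|k|^{1-s}\partial_k\tr{\psi}(t,k) = A_1(t,k)+A_2(t,k)+A_3(t,k)$, each piece lying in $C^0([0,T];L^2(\R))$; the content of the present proposition is that this map is in fact $C^1$ in $t$ with values in $L^2(\R)$, whence $I\in C^1[0,T^*)$ by the chain rule applied to $t\mapsto \|\,\cdot\,\|^2$. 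So the first real step is to show that $\partial_t\big(|k|^{1-s}\partial_k\tr{\psi}(t,k)\big)$ exists in $L^2(\R)$ and depends continuously on $t$; I would obtain this by commuting $\partial_t$ with $\partial_k$ on $\tr{\psi}$ (legitimate since, by \eqref{eq-psi_tr}, $\tr{\psi}$ is an explicit smooth function of $(t,k)$ off $k=0$ built from $r\in H^1(0,T)$ and $\tr{\phi}_{\lambda,0}\in H^{2s}$), using \eqref{eq-psider} for $\partial_t\tr{\psi}$, and then repeating the $A_1,A_2,A_3$ estimates of Lemma \ref{lem-Iwell} with $r$ replaced by $\dot r$ (and the extra $|k|^{2s}$ factors from \eqref{eq-psider} absorbed exactly as in Part 1) of the proof of Theorem \ref{teo-nonlinear}(i)).

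Once differentiation under the integral is justified, the computation itself is short:
\begin{equation*}
 \dot I(t) = 2\,\re\!\left\{\int_\R |k|^{2(1-s)}\,\ov{\partial_k\tr{\psi}(t,k)}\;\partial_t\partial_k\tr{\psi}(t,k)\dk\right\}.
\end{equation*}
Here I would substitute $\partial_t\tr{\psi}(t,k) = -\imath|k|^{2s}\tr{\phi}_\lambda(t,k) - \imath\lambda r(t)\tr{\green}(k)$ from \eqref{eq-psider}, differentiate this identity in $k$, and simplify. The leading term produces $-\imath\,\partial_k\big(|k|^{2s}\tr{\psi}\big)$ up to lower-order pieces that are real after taking $\re\{\cdot\}$ against $|k|^{2(1-s)}\ov{\partial_k\tr{\psi}}$; combining $|k|^{2(1-s)}\cdot|k|^{2s} = |k|^2$ and regrouping $|k|^2\partial_k\tr{\psi}$ versus $k\,\partial_k(k\,\tr{\psi})$-type terms via an integration by parts in $k$ (the boundary terms at $k=0$ and $k=\pm\infty$ vanish because $\tr{\psi}(t,\,\cdot\,)\in H^{2s}\cap L^2(\R,|k|^{2(1-s)}\dk)$ for $s\in(\f12,1]$), one is left precisely with $4s\,\im\big\{\int_\R k\,\tr{\psi}\,\ov{\partial_k\tr{\psi}}\dk\big\}$. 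The $\lambda$-dependent term drops out in the end — as it must, since $I$ does not depend on the regularizing parameter — because $\lambda r(t)\tr{\green}(k)$ is, up to a real constant times $r(t)$, the Fourier transform of a real function, and its contribution to $\re\{\cdot\}$ cancels against the $\lambda\tr{\phi}_\lambda$ correction hidden in $|k|^{2s}\tr{\phi}_\lambda = \tr{\hamn\psi}+\lambda\tr{\psi}-\lambda\tr{\phi}_\lambda$.

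The main obstacle I expect is not the algebra of the Virial identity but the rigorous justification of differentiating $|k|^{1-s}\partial_k\tr{\psi}$ in $t$ in the $L^2$ sense uniformly on $[0,T]$: the factor $|k|^s\sgn(k)$ appearing in $A_1$ and $A_2$ means that after differentiating in $t$ one picks up weights $|k|^{1+s}$, and one must check these are still controlled by $\|\dot r\|_{L^2(0,t)}$ and $\|\phi_{\lambda,0}\|_{H^{2s}}$ — this is exactly the borderline integrability $1-\f1{2s}\in(0,\f12]$ that was used in Part 1) of the proof of Theorem \ref{teo-nonlinear}(i), and it is what forces $s>\f12$. A secondary, milder point is verifying that the $k$-integration by parts in the main computation has no boundary contribution at the origin, which again uses $1-s<\f12$ so that $|k|^{2(1-s)}|\tr{\psi}|^2$ is integrable near $0$; both points are handled by the same weighted-$L^2$ estimates already deployed in the proof of Lemma \ref{lem-Iwell}.
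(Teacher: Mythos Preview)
Your plan has a genuine gap at exactly the point you flag as the ``main obstacle'': the map $t\mapsto |k|^{1-s}\partial_k\tr{\psi}(t,\cdot)$ is \emph{not} $C^1$ with values in $L^2(\R)$, so you cannot get $I\in C^1$ by the chain rule for $\|\cdot\|^2$. Indeed, computing $\partial_t\partial_k\tr{\psi}$ (most cleanly from the form $\partial_t\tr{\psi}=-\imath|k|^{2s}\tr{\psi}-\imath r(t)/\sqrt{2\pi}$, which is equivalent to \eqref{eq-psider}) gives
\[
 \partial_t\partial_k\tr{\psi}(t,k)=-\imath\,2s\,|k|^{2s-1}\sgn(k)\,\tr{\psi}(t,k)-\imath\,|k|^{2s}\,\partial_k\tr{\psi}(t,k),
\]
so $|k|^{1-s}\partial_t\partial_k\tr{\psi}$ contains the piece $|k|^{1+s}\partial_k\tr{\psi}$. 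From \eqref{eq-dekpsi} the singular part of $\partial_k\tr{\psi}$ carries a factor $|k|^{2s-1}\tr{\green}\sim |k|^{-1}$ at infinity, and $|k|^{1+s}\cdot|k|^{-1}=|k|^{s}\notin L^2(\R)$. The borderline estimate $1-\f{1}{2s}\in(0,\f{1}{2}]$ from Part~1) of Theorem~\ref{teo-nonlinear}(i) is about the oscillatory integral $\int_0^t e^{\imath|k|^{2s}\tau}\dot r(\tau)\dtau$ and does not rescue this term.

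The paper avoids this by never asserting $L^2$-differentiability of $|k|^{1-s}\partial_k\tr{\psi}$. It differentiates the \emph{density} $A(t,k):=|k|^{2-2s}|\partial_k\tr{\psi}(t,k)|^2$ in $t$ and observes that the dangerous contribution $-\imath|k|^{2s}\partial_k\tr{\psi}$ enters $\partial_t A$ as
\[
 2|k|^{2-2s}\re{-\imath\,|k|^{2s}\,|\partial_k\tr{\psi}|^2}=0
\]
\emph{pointwise}. What survives is exactly $\partial_t A(t,k)=4s\,\im{k\,\tr{\psi}\,\ov{\partial_k\tr{\psi}}}$, with no integration by parts in $k$ and no $\lambda$-cancellation to arrange (the $r(t)/\sqrt{2\pi}$ term is constant in $k$ and drops under $\partial_k$). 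One then exhibits, via the $\tr{\phi}_\lambda$-decomposition and an integration by parts in $\tau$ in the $A_2$-piece (as in Lemma~\ref{lem-Iwell}), an $L^1(\R)$ majorant for $|\partial_t A(t,\cdot)|$ uniform on $[0,T]$; dominated convergence then yields \eqref{eq-first_der} and the continuity of $\dot I$. Your outline should be reorganized around this pointwise cancellation in $\partial_t A$ rather than around $C^1$-in-$L^2$ regularity of $|k|^{1-s}\partial_k\tr{\psi}$.
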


\begin{proof}
 We start by computing the derivative of the integrand of $I(t)$ (given by \eqref{eq-Idef}), namely $A(t,k):=|k|^{2-2s}|\partial_k\tr{\psi}(t,k)|^2$. First we note that
 \begin{equation}
  \label{eq-A}
  \f{\partial A}{\partial t}(t,k)=2|k|^{2-2s}\re{\ov{\f{\partial\tr{\psi}}{\partial k}(t,k)}\f{\partial^2\tr{\psi}}{\partial t\partial k}(t,k)}.
 \end{equation}
 Now, recalling that
 \begin{multline}
  \label{eq-dekpsi}
  \f{\partial\tr{\psi}}{\partial k}(t,k)=-\imath2st|k|^{2s-1}\sgn(k)\tr{\psi}(t,k)+\\[.3cm]
  +\f{2s|k|^{2s-1}\sgn(k)}{\sqrt{2\pi}}\int_0^te^{-\imath|k|^{2s}(t-\tau)}\tau r(\tau)\dtau+e^{-\imath|k|^{2s}t}\f{\partial\tr{\psi}_0}{\partial k}(k)
 \end{multline}
 and, \lora{differentiating in $t$, with some computations there results}
 \begin{multline*}
  \lora{\f{\partial}{\partial t}\f{\partial\tr{\psi}}{\partial k}(t,k)=-\imath|k|^{2s-1}\sgn(k)\tr{\psi}(t,k)}+\\[.3cm]
  \lora{-\imath|k|^{2s}\bigg(e^{-\imath|k|^{2s}t}\f{\partial\tr{\psi}_0}{\partial k}(k)+\f{2s|k|^{2s-1}\sgn{k}}{\sqrt{2\pi}}\int_0^te^{-\imath|k|^{2s}(t-\tau)}\tau r(\tau)\dtau\bigg)}\\[.3cm]
  \lora{\f{2s|k|^{2s-1}\sgn(k)tr(t)}{\sqrt{2\pi}}-\imath2st|k|^{2s-1}\sgn(k)\f{\partial\tr{\psi}}{\partial t}(t,k).}
 \end{multline*}
 \lora{Then, since by \eqref{eq-psider}}
 \[
  \f{\partial\tr{\psi}}{\partial t}(t,k)=-\imath|k|^{2s}\tr{\psi}(t,k)-\f{\imath r(t)}{\sqrt{2\pi}},
 \]
 we find
 \begin{multline*}
  \lora{\f{\partial}{\partial t}\f{\partial\tr{\psi}}{\partial k}(t,k)=-\imath|k|^{2s-1}\sgn(k)\tr{\psi}(t,k)}+\\[.3cm]
  \lora{-\imath|k|^{2s}\bigg(-\imath2st|k|^{2s-1}\sgn(k)\tr{\psi}(t,k)+\f{2s|k|^{2s-1}\sgn{k}}{\sqrt{2\pi}}\int_0^te^{-\imath|k|^{2s}(t-\tau)}\tau r(\tau)\dtau+e^{-\imath|k|^{2s}\tau}\f{\partial\tr{\psi}_0}{\partial k}(k)\bigg)}
 \end{multline*}
 \lora{and, hence, using again \eqref{eq-dekpsi},}
 \[
  \f{\partial^2\tr{\psi}}{\partial t\partial k}(t,k)=-\imath 2s|k|^{2s-1}\sgn(k)\tr{\psi}(t,k)-\imath|k|^{2s}\f{\partial\tr{\psi}}{\partial k}(t,k).
 \]
 Thus, plugging into \eqref{eq-A}, there results
 \begin{equation}
  \label{eq-A2}
  \f{\partial A}{\partial t}(t,k)=4s\im{k\tr{\psi}(t,k)\ov{\f{\partial\tr{\psi}}{\partial k}(t,k)}}.
 \end{equation}
 
 On the other hand, fix an arbitrary $T<T^*$. An integration by parts in \eqref{eq-dekpsi} shows that
 \begin{multline*}
  k\f{\partial\tr{\psi}}{\partial k}(t,k)=-\imath2st|k|^{2s}\sgn(k)\tr{\phi}_\lambda(t,k)+\\[.3cm]
  +\f{\imath2s|k|^{2s}\sgn(k)}{\sqrt{2\pi}(|k^{2s}+\lambda|)}\int_0^te^{-\imath|k|^{2s}(t-\tau)}(\dot{r}_1(\tau)-i\lambda r_1(\tau))\dtau+e^{-\imath|k|^{2s}t}k\f{\partial\tr{\psi}_0}{\partial k}(k),
 \end{multline*}
 (where again $r_1(\,\cdot\,):=\tau r(\,\cdot\,)\in H^1(0,T)$). Hence
 \[
  \bigg|k\f{\partial\tr{\psi}}{\partial k}(t,k)\bigg|\leq C_T\bigg(|k|^{2s}|\phi_\lambda(t,k)|+1+\bigg|k\f{\partial\tr{\psi}_0}{\partial k}(k)\bigg|\bigg),\qquad\forall t\in[0,T].
 \]
 In addition, since by \eqref{eq-psibyparts}
 \[
  |\tr{\psi}(t,k)|\leq C_T\left(|\phi_\lambda(t,k)|+\f{1}{|k|^{2s}+\lambda}\right),\qquad\forall t\in[0,T],
 \]
 there results
 \begin{multline*}
  \left|\f{\partial A}{\partial t}(t,k)\right|\leq C_T\bigg(|k|^{2s}|\tr{\phi}_\lambda(t,k)|^2+|\tr{\phi}_\lambda(t,k)|+\f{1}{|k|^{2s}+\lambda}+\bigg.\\[.3cm]
  +|\tr{\phi}_\lambda(t,k)|\bigg|k\f{\partial\tr{\psi}_0}{\partial k}(k)\bigg|+\f{|k|}{|k|^{2s}+\lambda}\bigg|\f{\partial\tr{\psi}_0}{\partial k}(k)\bigg|\bigg),\qquad\forall t\in[0,T].
 \end{multline*}
Now as,
 \[
  |\tr{\phi}_\lambda(t,k)|\leq C_T\bigg(|\tr{\phi}_{\lambda,0}(k)|+\f{1}{|k|^{2s}+\lambda}\bigg),\qquad\forall t\in[0,T],
 \]
there results
 \begin{multline*}
  \left|\f{\partial A}{\partial t}(t,k)\right|\leq C_T\bigg(|k|^{2s}|\tr{\phi}_{\lambda,0}(k)|^2+|\tr{\phi}_{\lambda,0}(k)|+\f{1}{|k|^{2s}+\lambda}+\bigg.\\[.3cm]
  +|\tr{\phi}_{\lambda,0}(k)|\bigg|k\f{\partial\tr{\psi}_0}{\partial k}(k)\bigg|+\f{|k|}{|k|^{2s}+\lambda}\bigg|\f{\partial\tr{\psi}_0}{\partial k}(k)\bigg|\bigg),\qquad\forall t\in[0,T],
 \end{multline*}
 which is clearly integrable by the regularity of $\phi_{\lambda,0}$ and since $I(\psi_0)<\infty$. Hence, by \eqref{eq-A2} and dominated convergence, one gets \eqref{eq-first_der} (and the continuity of $\dot{I}$ on $[0,T^*)$).
\end{proof}

As a third point, we can compute the second derivative of the moment of inertia.

\begin{pro}
 \label{pro-second_der}
 Let $s\in(\f{1}{2},1]$ and $\psi_0\in\dom(\hamn)$ with $\phi_{\lambda,0}\in\mathcal{S}(\R)$. Then, $I(\,\cdot\,)\in C^2[0,T^*)$ and
 \begin{equation}
  \label{eq-second_der}
  \ddot{I}(t)=8s^2E(0)+\f{4s\beta\big(\sigma-\sigma_c(s)\big)}{\sigma+1}|q(t)|^{2\sigma+2},\qquad\forall t\in[0,T^*).
 \end{equation}
% with $\sigma_c(s):=2s-1$.
\end{pro}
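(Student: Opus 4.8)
The plan is to differentiate in $t$ the expression for $\dot I$ furnished by Proposition \ref{pro-first_der}, namely $\dot I(t)=4s\,\im{\int_\R k\tr\psi(t,k)\ov{\partial_k\tr\psi(t,k)}\dk}$, and to move the derivative inside the integral. To legitimise the interchange I would argue exactly as in the proof of Proposition \ref{pro-first_der}: differentiating the integrand produces, besides terms already estimated there, expressions of the type $|k|^{2s+1}\tr\psi\,\ov{\partial_k\tr\psi}$ and $|k|^{2s}|\tr\psi|^2$, and one needs a time-uniform $L^1(\R)$ majorant on every $[0,T]$ with $T<T^*$. This is the single point where the extra hypothesis $\phi_{\lambda,0}\in\mathcal{S}(\R)$ is used: together with the bounds for $\tr\phi_\lambda$ and $k\,\partial_k\tr\psi_0$ already exploited in Proposition \ref{pro-first_der}, the rapid decay of $\tr\phi_{\lambda,0}$ makes all these quantities dominated by a fixed integrable function, which at the same time yields the continuity of $\ddot I$ on $[0,T^*)$.

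Granted the interchange, I would expand $\partial_t\big(k\tr\psi\,\ov{\partial_k\tr\psi}\big)=k\,\partial_t\tr\psi\,\ov{\partial_k\tr\psi}+k\,\tr\psi\,\ov{\partial_t\partial_k\tr\psi}$ using the two identities already available from the proof of Proposition \ref{pro-first_der}, i.e. $\partial_t\tr\psi=-\imath|k|^{2s}\tr\psi-\imath r(t)/\sqrt{2\pi}$ and $\partial_t\partial_k\tr\psi=-\imath2s|k|^{2s-1}\sgn(k)\tr\psi-\imath|k|^{2s}\partial_k\tr\psi$. Substituting, and using $k\,|k|^{2s-1}\sgn(k)=|k|^{2s}$, the two contributions proportional to $\imath|k|^{2s}k\tr\psi\,\ov{\partial_k\tr\psi}$ cancel and one is left with
\[
 \partial_t\big(k\tr\psi\,\ov{\partial_k\tr\psi}\big)=-\f{\imath r(t)}{\sqrt{2\pi}}\,k\,\ov{\partial_k\tr\psi(t,k)}+2\imath s\,|k|^{2s}\,|\tr\psi(t,k)|^2.
\]

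It then remains to take $4s$ times the imaginary part of the integral over $\R$ of the right-hand side. Since $\int_\R|k|^{2s}|\tr\psi(t,k)|^2\dk=[\psi(t,\cdot)]_{\dot{H}^{s}(\R)}^2$ is real, the second summand contributes $8s^2[\psi(t,\cdot)]_{\dot{H}^{s}(\R)}^2$, which by \eqref{eq-energy} and the energy conservation (item (ii) of Theorem \ref{teo-nonlinear}) equals $8s^2E(0)-\f{8s^2\beta}{\sigma+1}|q(t)|^{2\sigma+2}$. For the first summand I would integrate by parts in $k$ (the boundary terms vanishing by the decay of $\tr\psi$) to obtain $\int_\R k\,\ov{\partial_k\tr\psi}\dk=-\ov{\int_\R\tr\psi(t,k)\dk}=-\sqrt{2\pi}\,\ov{q(t)}$, by Fourier inversion at $x=0$ and \eqref{eq-charge_def}; hence this summand yields $4s\,\im{\imath r(t)\ov{q(t)}}=4s\,\re{r(t)\ov{q(t)}}=4s\beta|q(t)|^{2\sigma+2}$ by \eqref{eq-r}. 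Summing and collecting the $|q(t)|^{2\sigma+2}$ terms gives $\beta|q(t)|^{2\sigma+2}\big(4s-\tf{8s^2}{\sigma+1}\big)=\tf{4s\beta(\sigma-\sigma_c(s))}{\sigma+1}|q(t)|^{2\sigma+2}$, which is \eqref{eq-second_der}; finally, continuity of $q$ makes $t\mapsto|q(t)|^{2\sigma+2}$ continuous, so $\ddot I\in C^0[0,T^*)$ and $I\in C^2[0,T^*)$. The main obstacle is the very first step — securing the uniform integrable domination that justifies differentiation under the integral — and it is precisely this that makes the smoothness assumption on $\phi_{\lambda,0}$ expedient.
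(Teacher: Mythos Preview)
Your algebraic computation of $\partial_t\big(k\tr\psi\,\ov{\partial_k\tr\psi}\big)$ is correct and matches the paper exactly, as does the identification of the two contributions $8s^2[\psi]_{\dot H^s}^2$ and $4s\beta|q|^{2\sigma+2}$. The gap is in the justification of the interchange, and in particular in \emph{where} you locate the role of the Schwartz hypothesis.

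You claim that the rapid decay of $\tr\phi_{\lambda,0}$ yields a uniform $L^1(\R)$ majorant for $\partial_t B(t,k)$. But after the cancellation you correctly perform, $\partial_t B$ contains the term $-(2\pi)^{-1/2}\re{r(t)\,k\,\ov{\partial_k\tr\psi(t,k)}}$, and from \eqref{eq-dekpsi} the expression $k\,\partial_k\tr\psi(t,k)$ carries the piece
\[
 \f{2s|k|^{2s}}{\sqrt{2\pi}}\int_0^t e^{-\imath|k|^{2s}(t-\tau)}\tau\, r(\tau)\,d\tau,
\]
which (even after the integration by parts in $\tau$ used in Proposition~\ref{pro-first_der}) is merely $O(1)$ as $|k|\to\infty$ and therefore is \emph{not} in $L^1(\R)$. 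This piece does not involve $\phi_{\lambda,0}$ at all, so the Schwartz assumption cannot help here; consequently, no time-uniform integrable majorant for $\partial_t B$ on the whole line is available, and your ``granted the interchange'' step is not granted.

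The paper circumvents this by a truncation: one sets $\dot I_R(t)=4s\int_{-R}^R B(t,k)\dk$, differentiates under the (compact) integral, integrates by parts in $k$ on $[-R,R]$ \emph{before} passing to the limit, which converts the non-integrable $k\,\partial_k\tr\psi$ into $-\int_{-R}^R\tr\psi\dk$ plus the boundary terms $R\big(\tr\psi(t,R)+\tr\psi(t,-R)\big)$, and finally lets $R\to\infty$ using dominated convergence in $t$. It is precisely in showing that these boundary terms tend to $0$ and stay bounded uniformly in $t\in[0,T]$ that the hypothesis $\phi_{\lambda,0}\in\mathcal{S}(\R)$ is used (cf.\ Remark~\ref{rem-second_der}). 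So the Schwartz assumption enters at the boundary-term step of the integration by parts in $k$, not as a device to dominate $\partial_t B$ on all of $\R$.
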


\begin{proof}
 Fix an arbitrary $T<T^*$ and focus on the interval $[0,T]$. Preliminarily, we observe that the assumptions entail that $I(\psi_0)<\infty$ (so that Lemma \ref{lem-Iwell} and Proposition \ref{pro-first_der} are valid). On the other hand, setting
 \[
  B(t,k):=\im{k\tr{\psi}(t,k)\ov{\f{\partial\tr{\psi}}{\partial k}(t,k)}},
 \]
 one sees that
 \begin{equation}
  \label{eq-derB}
  \f{\partial B}{\partial t}(t,k)=2s|k|^{2s}|\tr{\psi}(t,k)|^2-\f{1}{\sqrt{2\pi}}\re{r(t)k\ov{\f{\partial\tr{\psi}}{\partial k}(t,k)}}.
 \end{equation}
 Arguing as in the proof of Proposition \ref{pro-first_der}, for every fixed $R>0$,
 \[
  \bigg|\f{\partial B}{\partial t}(t,k)\bigg|\leq g_{R,T}(k)\in L^1(-R,R),\qquad\forall t\in[0,T],
 \]
 so that, if we set
 \[
  \dot{I}_R(t):=4s\int_{-R}^RB(t,k)\dk,
 \]
 then by dominated convergence
 \[
  \ddot{I}_R(t)=4s\int_{-R}^R\f{\partial B}{\partial t}(t,k)\dk,
 \]
 which is, in addition, a continuous function in $[0,T]$. As a consequence, \lorr{if one can prove that $\ddot{I}_R$ converges pointwise a.e. in $[0,T]$ and that $|\ddot{I}_R(t)|\leq f_T(t)$, for a.e. $t\in[0,T]$, with $f_T(t)\in L^1(0,T)$,}{whether one can prove that $\ddot{I}_R$ converges pointwise a.e. in $[0,T]$ and that $|\ddot{I}_R(t)|\leq f_T(t)\in L^1(0,T)$, for a.e. $t\in[0,T]$,} then (since clearly $\dot{I}_R(t)\to\dot{I}(t)$ pointwise) by dominated convergence there results
 \begin{equation}
  \label{eq-dom_conv}
  \dot{I}(t)=\dot{I}(0)+\lim_{R\to\infty}\int_0^t\ddot{I}_R(\tau)\dtau=\dot{I}(0)+\int_0^t\lim_{R\to\infty}\ddot{I}_R(\tau)\dtau,
 \end{equation}
 whence
 \begin{equation}
  \label{eq-der_lim}
  \lim_{R\to\infty}\ddot{I}_R(\tau)=\ddot{I}(\tau).
 \end{equation}
 
 From \eqref{eq-derB}
 \[
  \ddot{I}_R(\tau)=\underbrace{8s^2\int_{-R}^R|k|^{2s}|\tr{\psi}(\tau,k)|^2\dk}_{=:\Phi(\tau,R)}-\underbrace{\f{4s}{\sqrt{2\pi}}\re{r(\tau)\int_{-R}^Rk\ov{\f{\partial\tr{\psi}}{\partial k}(\tau,k)}\dk}}_{=:\Psi(\tau,R)}.
 \]
 First we see that $\Phi(\tau,R)\to8s^2[\psi(\tau,\,\cdot\,)]_{H^s(\R)}^2$ and that $|\Phi(\tau,R)|\leq C\|\psi\|_{C^0([0,T];H^s(\R))}$, for every $\tau\in[0,T]$. Furthermore, an integration by parts yields
 \[
  \int_{-R}^Rk\f{\partial\tr{\psi}}{\partial k}(\tau,k)\dk=\underbrace{R(\tr{\psi}(\tau,R)+\tr{\psi}(\tau,-R))}_{=:A_1(\tau,R)}-\underbrace{\int_{-R}^R\tr{\psi}(t,k)\dk}_{=:A_2(\tau,R)}.
 \]
 Now,
 \[
  A_1(\tau,R)\to0,\qquad\text{as}\quad R\to\infty,\quad\forall \tau\in[0,T],
 \]
 and
 \[
  |A_1(\tau,R)|\leq C_T,\qquad\forall R>0,\quad\forall \tau\in[0,T],
 \]
 from \eqref{eq-psibyparts} and the assumptions on $\phi_{\lambda,0}$. On the other hand,
 \[
  A_2(\tau,R)\to\sqrt{2\pi}q(\tau),\qquad\text{as}\quad R\to\infty,\quad\forall \tau\in[0,T]
 \]
 and
 \[
  |A_2(\tau,R)|\leq C\|\psi\|_{C^0([0,T];H^s(\R))},\qquad\forall R>0,\quad\forall \tau\in[0,T].
 \]
 Thus we can pass to the limit in \eqref{eq-dom_conv} by dominated convergence, and from \eqref{eq-der_lim} we obtain
 \[
  \ddot{I}(\tau)=8s^2[\psi(t,\,\cdot\,)]_{H^s(\R)}^2+4s\beta|q(\tau)|^{2\sigma+2}.
 \]
 Finally, this immediately implies that $\ddot{I}$ is continuous on $[0,T]$ and, with some easy computations, that \eqref{eq-second_der} is satisfied.
\end{proof}

\begin{rem}
 \label{rem-second_der}
 In the proof of Proposition \ref{pro-second_der}, the only point where the assumption $\phi_{\lambda,0}\in\mathcal{S}(\R)$ is required is in the discussion of $A_1(\tau,R)$. It is then clear that it is not the minimal one. We refer to Section \ref{subsec:main} for the reason of such a choice.
\end{rem}

Finally, we can show the proof of item (iv) of Theorem \ref{teo-nonlinear}.

\begin{proof}[Proof of Theorem \ref{teo-nonlinear}: item (iv).]
 Let $\beta<0$. Hence \eqref{eq-second_der} reads
 \[
  \ddot{I}(t)=8s^2E(0)-\f{4s|\beta|\big(\sigma-\sigma_c(s)\big)}{\sigma+1}|q(t)|^{2\sigma+2},\qquad\forall t\in[0,T^*).
 \]
 If $\sigma\geq\sigma_c(s)$ and $E(0)<0$, then $I$ is uniformly concave in $[0,T^*)$, namely
 \[
  \ddot{I}(t)\leq C<0,\qquad\forall t\in[0,T^*).
 \]
 As a consequence
 \[
  I(t)\leq I(0)+\dot{I}(0)t+\f{Ct^2}{2},\qquad\forall t\in[0,T^*).
 \]
 Assume, therefore, by contradiction that $T^*=+\infty$. Then $\lim_{t\to T^*}\lora{I(t)}=-\infty$ but this is prevented by the fact that $I(t)\geq0$, for all $t\in[0,T^*)$.
\end{proof}

%%%%%%%%%%%%%%%%%%%%%%%%%%%%%%%%%%%%%%%%%%%%%%%%%%%%%%%%%%%%%%%%%%%%%%%%%%%%%%%%%%

\subsection{Stationary states} This last part of the paper is devoted to the proof of Theorem \ref{teo-standing}. Preliminarily, we recall some computations that descend from some easy changes of variables and \cite[Eq. 3.194.3]{GR}.

Let $\omega>0$ and $s\in(\f{1}{2},1]$. Denoting by $B(\,\cdot\,,\,\cdot\,)$ the Beta function \rafa{(\cite{Er})}
\[
 \rafa{B(x,y)=\int_{0}^{1}t^{x-1}(1-t)^{y-1}dt, \quad \textrm{Re}\,x>0,\,\,\textrm{Re}\,y>0, }
 \]
and by $\Gamma(\,\cdot\,)$ the Euler Gamma function \rafa{(\cite{Er})}
\[
 \rafa{\Gamma(z)=\int_{0}^{\infty}e^{-t}t^{z-1}dt, \quad \textrm{Re}\,z>0,}
 \]
 there results
\begin{multline}
 \label{eq-conto1}
 \f{1}{\pi}\int_0^\infty\f{1}{k^{2s}+\omega}\dk=\f{1}{2\pi s\omega}\int_0^\infty\f{\eta^{\f{1}{2s}-1}}{1+\f{\eta}{\omega}}\deta=\f{\omega^{\f{1}{2s}}B(\f{1}{2s},1-\f{1}{2s})}{2\pi s\omega}=\\[.3cm]
 =\f{\omega^{\f{1}{2s}-1}\Gamma(\f{1}{2s})\Gamma(1-\f{1}{2s})}{2\pi s}=\f{\omega^{\f{1}{2s}-1}}{2s\sin(\f{\pi}{2s})}>0.
\end{multline}
and
\begin{multline}
 \label{eq-conto2}
 \f{1}{\pi}\int_0^\infty\f{1}{(k^{2s}+\omega)^2}\dk=\f{1}{2\pi s\omega^2}\int_0^\infty\f{\eta^{\f{1}{2s}-1}}{(1+\f{\eta}{\omega})^2}\deta=\f{\omega^{\f{1}{2s}}B(\f{1}{2s},2-\f{1}{2s})}{2\pi s\omega^2}=\\[.3cm]
 =\f{\omega^{\f{1}{2s}-2}\Gamma(\f{1}{2s})\Gamma(2-\f{1}{2s})}{2\pi s}=\f{(2s-1)\omega^{\f{1}{2s}-2}\Gamma(\f{1}{2s})\Gamma(1-\f{1}{2s})}{4\pi s^2}=\f{(2s-1)\omega^{\f{1}{2s}-2}}{4s^2\sin(\f{\pi}{2s})}>0.
\end{multline}

\begin{proof}[Proof of Theorem \ref{teo-standing}]
 We divide the proof in three parts.
 
 \emph{Part 1): $\omega>0$.} Assume that $u^\omega$ is a standing wave. As it belongs to $\dom(\hamn)$,
 \[
  u^\omega=\phi_\lambda^\omega-\underbrace{\beta u^\omega(0)|u^\omega(0)|^{2\sigma}}_{=:r^\omega}\green,\qquad\forall \lambda>0,
 \]
 and, since it must satisfy \eqref{eq-stationary}, there results
 \begin{equation}
  \label{eq-standing_2}
  (\flap+\lambda)\phi_\lambda^\omega=(\lambda-\omega)u^\omega,
 \end{equation}
 whence
 \[
  (\flap+\omega)\phi_\lambda^\omega=-(\lambda-\omega)r^\omega\green.
 \]
 Now, by means of the Fourier transform, the previous equality reads
 \[
  \tr{\phi}_\lambda^\omega=\f{r^\omega(\omega-\lambda)}{\sqrt{2\pi}(|k|^{2s}+\lambda)(|k|^{2s}+\omega)}
 \]
 and, since $r^\omega\neq0$ and $\phi_\lambda^\omega\in H^{2s}(\R)$ for all $\lambda>0$, there results $\omega>0$.
 
 \emph{Part 2): proof of \eqref{eq-standing}.} As $\omega>0$, let us choose $\omega=\lambda$, so that \eqref{eq-standing_2} reads
 \[
  (\flap+\omega)\phi_\omega^\omega=0\qquad\Leftrightarrow\qquad\phi_\omega^\omega\equiv0.
 \]
 As a consequence, $u^\omega$ has to satisfy
 \begin{equation}
  \label{eq-standing_3}
  u^\omega(x)=-\beta u^\omega(0)|u^\omega(0)|^{2\sigma}\greeno(x)
 \end{equation}
 and thus
 \[
  u^\omega(0)=-\beta u^\omega(0)|u^\omega(0)|^{2\sigma}\greeno(0),
 \]
 or, equivalently (since $u^\omega(0)\neq0$),
 \begin{equation}
  \label{eq-standing_zero}
  1=-\beta|u^\omega(0)|^{2\sigma}\greeno(0).
 \end{equation}
 Since $\greeno(0)>0$, clearly if $\beta>0$, then \eqref{eq-standing_zero} cannot be satisfied. Therefore, there cannot exist any standing wave in the defocusing case.
 
 Let us consider the focusing case (where \eqref{eq-standing_zero} can be fulfilled). It is clear that, up to the multiplication of a constant phase factor, $u^\omega(0)>0$ and that
 \begin{equation}
  \label{eq-u_zero}
  u^\omega(0)=\left(\f{1}{|\beta|\greeno(0)}\right)^{\f{1}{2\sigma}}.
 \end{equation}
 Now, recalling that
 \begin{equation}
  \label{eq-green_zero}
  \greeno(0)=\f{1}{2\pi}\int_\R\f{1}{|k|^{2s}+\omega}\dk=\f{1}{\pi}\int_0^\infty\f{1}{k^{2s}+\omega}\dk
 \end{equation}
 and combining with \eqref{eq-conto1} and \eqref{eq-standing_3}, \eqref{eq-standing} follows.
 
 \emph{Part 3): proof of \emph{(i)}, \emph{(ii)} and \emph{(iii)}.} The energy of a standing wave (in the focusing case) is given by
 \[
  E(u^\omega)=\underbrace{[u^\omega]_{\dot{H}^{s}(\R)}^2}_{=:K(u^\omega)}-\underbrace{\frac{|\beta|}{\sigma+1}|u^\omega(0)|^{2\sigma+2}}_{=:P(u^\omega)}.
 \]
 Combining \eqref{eq-u_zero}, \eqref{eq-green_zero} and \eqref{eq-conto1}, with some computations one sees that
 \[
  P(u^\omega)=\f{(2s\sin(\f{\pi}{2s}))^{1+\f{1}{\sigma}}\omega^{\f{(\sigma+1)(2s-1)}{2s\sigma}}}{|\beta|^{\f{1}{\sigma}}(\sigma+1)}.
 \]
 On the other hand, combining \eqref{eq-standing_3}, \eqref{eq-conto2}, \eqref{eq-green_zero} and \eqref{eq-u_zero}
 \begin{align*}
  C(u^\omega)= & \, \f{|\beta|^2|u^\omega(0)|^{4\sigma+2}}{2\pi}\int_R\f{|k|^{2s}}{(|k|^{2s}+\omega)^2}\dk\\[.3cm]
             = & \, |\beta|^2|u^\omega(0)|^{4\sigma+2}\left(\greeno(0)-\f{\omega}{\pi}\int_R\f{1}{(|k|^{2s}+\omega)^2}\dk\right)\\[.3cm]
             = & \, |\beta|^2|u^\omega(0)|^{4\sigma+2}\left(\greeno(0)-\f{(2s-1)\omega^{\f{1}{2s}-1}}{4s^2\sin(\f{\pi}{2s})}\right)\\[.3cm]
             = & \, |\beta|^2|u^\omega(0)|^{4\sigma+2}\f{\omega^{\f{1}{2s}-1}}{4s^2\sin(\f{\pi}{2s})}=\f{(2s)^{\f{1}{\sigma}}(\sin(\f{\pi}{2s}))^{1+\f{1}{\sigma}}\omega^{\f{(\sigma+1)(2s-1)}{2s\sigma}}}{|\beta|^{\f{1}{\sigma}}}.
 \end{align*}
 Summing up,
 \[
  E(u^\omega)=\f{(2s)^{\f{1}{\sigma}}(\sin(\f{\pi}{2s}))^{1+\f{1}{\sigma}}\omega^{\f{(\sigma+1)(2s-1)}{2s\sigma}}}{|\beta|^{\f{1}{\sigma}}}\left(1-\f{2s}{\sigma+1}\right),
 \]
 which clearly proves (i), (ii) and (iii).
\end{proof}

%%%%%%%%%%%%%%%%%%%%%%%%%%%%%%%%%%%%%%%%%%%%%%%%%%%%%%%%%%%%%%%%%%%%%%%%%%%%%%%%%%
%%%%%%%%%%%%%%%%%%%%%%%%%%%%%%%%%%%%%%%%%%%%%%%%%%%%%%%%%%%%%%%%%%%%%%%%%%%%%%%%%%
%%%%%%%%%%%%%%%%%%%%%%%%%%%%%%%%%%%%%%%%%%%%%%%%%%%%%%%%%%%%%%%%%%%%%%%%%%%%%%%%%%

\appendix
\section{Proof of Proposition \ref{pro-linear}}
\label{sec:linear}

In order to prove Proposition \ref{pro-linear} some further information on the regularity properties of Green's function is required.

\begin{lem}
 Let $s\in(\f{1}{2},1]$ and $\lambda>0$. Then
 \begin{gather}
  \label{eq-green_prop1}
  D^{2s-1}\green\in H^1(\R\backslash\{0\}),\\[.3cm]
  \label{eq-green_prop2}
  [D^{2s-1}\green](0)=-1,\\[.3cm]
  \label{eq-green_prop3}
  \green(0)=\lambda\|\green\|^2+\|\flaph\green\|^2.
 \end{gather}
\end{lem}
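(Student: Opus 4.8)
The plan is to carry out everything on the Fourier side except for one explicit representation of $D^{2s-1}\green$ on the line, from which the jump is read off. By \eqref{eq-green} and \eqref{eq-der},
\[
 \tr{D^{2s-1}\green}(k)=\f{\imath\,|k|^{2s-1}\sgn(k)}{\sqrt{2\pi}\,(|k|^{2s}+\lambda)} .
\]
Since $|k|^{4s-2}(|k|^{2s}+\lambda)^{-2}$ behaves like $|k|^{4s-2}$ near $k=0$ and like $|k|^{-2}$ at infinity, it is integrable on $\R$ for every $s>\tf{1}{4}$, so $D^{2s-1}\green\in L^2(\R)$ at once; moreover this symbol is odd and conjugate-symmetric, hence $D^{2s-1}\green$ is a real odd function and inverting the Fourier transform (as a sine transform) gives
\[
 D^{2s-1}\green(x)=-\f{1}{\pi}\int_0^\infty\f{k^{2s-1}\sin(kx)}{k^{2s}+\lambda}\dk,\qquad x\in\R .
\]

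For \eqref{eq-green_prop2} I would split $\tf{k^{2s-1}}{k^{2s}+\lambda}=\tf{1}{k}-\tf{\lambda}{k(k^{2s}+\lambda)}$ and use the Dirichlet integral $\int_0^\infty k^{-1}\sin(kx)\dk=\tf{\pi}{2}\sgn(x)$ to write, for $x\neq0$,
\[
 D^{2s-1}\green(x)=-\tf{1}{2}\sgn(x)+R(x),\qquad R(x):=\f{\lambda}{\pi}\int_0^\infty\f{\sin(kx)}{k(k^{2s}+\lambda)}\dk ,
\]
the manipulation being justified by reading all three integrals as improper Riemann integrals. The integral defining $R$ converges locally uniformly in $x$ (dominate $|\sin(kx)|$ by $\min(|kx|,1)$ and recall $(k^{2s}+\lambda)^{-1}\in L^1(0,\infty)$ for $s>\tf12$), so $R\in C^0(\R)$ with $R(0)=0$; letting $x\to0^\pm$ then gives $D^{2s-1}\green(0^\pm)=\mp\tf12$, i.e. $[D^{2s-1}\green](0)=-1$.

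For \eqref{eq-green_prop1} it remains to see that the distributional derivative of $D^{2s-1}\green$, restricted to $\R\setminus\{0\}$, is in $L^2$. Differentiating the representation of $D^{2s-1}\green$ under the integral sign — legitimate because $\partial_x\big(\f{\sin(kx)}{k(k^{2s}+\lambda)}\big)=\f{\cos(kx)}{k^{2s}+\lambda}$ is dominated by $(k^{2s}+\lambda)^{-1}\in L^1(0,\infty)$ — yields $R'(x)=\f{\lambda}{\pi}\int_0^\infty\f{\cos(kx)}{k^{2s}+\lambda}\dk$, and by the Fourier inversion formula for $\green$ this equals $\lambda\green(x)$. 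Hence $(D^{2s-1}\green)'=\lambda\green$ on $\R\setminus\{0\}$, which lies in $L^2(\R)$ (indeed in $L^\infty\cap C^0$ for $s>\tf12$); combined with $D^{2s-1}\green\in L^2(\R)$ this gives $D^{2s-1}\green\in H^1\big((0,\infty)\big)\cap H^1\big((-\infty,0)\big)=H^1(\R\setminus\{0\})$. Equivalently, on the Fourier side $\imath k\,\tr{D^{2s-1}\green}(k)=-|k|^{2s}\tr{\green}(k)=-\tf{1}{\sqrt{2\pi}}+\lambda\tr{\green}(k)$, so $(D^{2s-1}\green)'=-\delta+\lambda\green$ in $\mathcal S'(\R)$ — which re-proves \eqref{eq-green_prop2} as well, the coefficient of the Dirac mass being the jump of $D^{2s-1}\green$ at the origin. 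Finally, \eqref{eq-green_prop3} is Plancherel together with pointwise Fourier inversion: $\|\flaph\green\|^2+\lambda\|\green\|^2=\int_\R(|k|^{2s}+\lambda)\,|\tr{\green}(k)|^2\dk=\f{1}{2\pi}\int_\R\f{1}{|k|^{2s}+\lambda}\dk$, while $\tr{\green}\in L^1(\R)$ for $s>\tf12$ gives $\green(0)=\f{1}{\sqrt{2\pi}}\int_\R\tr{\green}(k)\dk=\f{1}{2\pi}\int_\R\f{1}{|k|^{2s}+\lambda}\dk$, and the two right-hand sides coincide.

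The step I expect to be the most delicate is the splitting behind \eqref{eq-green_prop2}: the integral representing $D^{2s-1}\green(x)$ converges only conditionally at infinity, so one must be careful to treat it, and the Dirichlet integral peeled off from it, as improper Riemann integrals and verify that the termwise passage to the limit is legitimate. The distributional-derivative route sketched above bypasses this entirely, at the price of invoking the (standard) identification of the $\delta$-coefficient of $u'$ with the jump of $u$.
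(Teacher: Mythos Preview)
Your argument is correct. Parts \eqref{eq-green_prop1} and \eqref{eq-green_prop3} are essentially the same as in the paper: the paper too computes the distributional derivative on the Fourier side to get $(D^{2s-1}\green)'=\lambda\green-\delta$ (the paper writes this with a spurious factor $\tfrac{1}{\sqrt{2\pi}}$, but the conclusion is the same) and then restricts to test functions supported away from $0$; and for \eqref{eq-green_prop3} the paper writes $\green(0)=(\green,(\hamz+\lambda)\green)$ and expands, which is your Plancherel identity phrased in operator language.

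The genuine difference is in the jump computation \eqref{eq-green_prop2}. The paper does \emph{not} split off the Dirichlet integral; instead it substitutes $p=kx$ in the inversion integral, separates $|p|\le1$ from $|p|>1$, and uses dominated convergence to show that $\lim_{x\downarrow0}D^{2s-1}\green(x)$ is \emph{independent of $s$}, whence it may be evaluated at $s=1$, where $\green=\tfrac{1}{2\sqrt\lambda}e^{-\sqrt\lambda|x|}$ and the one-sided derivatives are classical. Your Dirichlet-integral splitting is more explicit and self-contained (no appeal to the $s=1$ case), and it has the side benefit of exhibiting the global decomposition $D^{2s-1}\green=-\tfrac12\sgn+R$ with $R\in C^1(\R)$, $R'=\lambda\green$; the paper's route is shorter to write and has the conceptual appeal of reducing to the known Laplacian case. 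Your distributional-derivative shortcut is the cleanest of the three: it avoids the conditionally convergent inversion entirely, and since you have already proved $D^{2s-1}\green\in H^1(\R\setminus\{0\})$, the identification of the $\delta$-coefficient with the jump is immediate.

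One small clarification worth adding: the pointwise sine-transform representation you start from is only valid a priori as an $L^2$-identity, since $\tr{D^{2s-1}\green}\notin L^1(\R)$. Your splitting effectively \emph{proves} the pointwise formula for $x\neq0$ (the right-hand side $-\tfrac12\sgn+R$ has the correct Fourier transform and is odd, hence equals $D^{2s-1}\green$ a.e.), but it would be good to say this explicitly rather than leave the inversion as the starting point.
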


\begin{proof}
 We divide the proof in three parts.
 
 \emph{Part (i): proof of \eqref{eq-green_prop1}}. Combining \eqref{eq-green} and \eqref{eq-der} yields
 \[
  D^{2s-1}\green(x)=\f{\imath}{2\pi}\int_\R e^{\imath kx}\f{|k|^{2s-1}\sgn(k)}{|k|^{2s}+\lambda}\dk,
 \]
 which clearly belongs to $L^2(\R)$. Hence, one can easily check (using the Fourier transform) that
 \[
  \f{d}{dx}D^{2s-1}\green=\f{1}{\sqrt{2\pi}}(\lambda\green-\delta)\qquad\text{in}\quad\mathcal{D}'(\R),
 \]
 so that
 \[
  (D^{2s-1}\green,\varphi')=-\lambda(\green,\varphi)\qquad\forall\varphi\in C_0^\infty(\R\backslash\{0\}),
 \]
 which then proves \eqref{eq-green_prop1}.
 
 \emph{Part (ii): proof of \eqref{eq-green_prop2}}. Let us compute, then,
 \[
  D^{2s-1}\green(0^+):=\lim_{x\downarrow0}\f{\imath}{2\pi}\int_\R e^{\imath kx}\f{|k|^{2s-1}\sgn(k)}{|k|^{2s}+\lambda}\dk.
 \]
 First, for every $x>0$, setting $p:=xk$ and observing that
 \[
  \int_{|p|\leq1}\f{|p|^{2s}\sgn(p)}{|p|\left(|p|^{2s}+\lambda x^{2s}\right)}\dpi=0,
 \]
 one obtains
 \begin{multline*}
  D^{2s-1}\green(x)=\f{\imath}{2\pi}\int_{|p|\leq1}\f{e^{ip}|p|^{2s}}{p\left(|p|^{2s}+\lambda x^{2s}\right)}\dpi+\f{\imath}{2\pi}\int_{|p|>1}\f{e^{ip}|p|^{2s}}{p\left(|p|^{2s}+\lambda x^{2s}\right)}\dpi\\[.3cm]
  = \underbrace{\f{\imath}{2\pi}\int_{|p|\leq1}\f{(e^{ip}-1)|p|^{2s}}{p\left(|p|^{2s}+\lambda x^{2s}\right)}\dpi}_{=:I_1}+\underbrace{\f{\imath}{2\pi}\int_{|p|>1}\f{e^{ip}}{p}\dpi}_{I_2}-\underbrace{\f{\imath\lambda x^{2s}}{2\pi}\int_{|p|>1}\f{e^{ip}}{p\left(|p|^{2s}+\lambda x^{2s}\right)}}_{I_3}.
 \end{multline*}
 Now, clearly $I_2$ is independent of $x$ and $s$ and is finite as a Fresnel integral (see, for details, \cite[Eqs. 5.2.1 and 5.2.2]{AS} and \cite[Eqs. 3.722.1 and 3.722.3]{GR}). Furthermore,
 \[
  \left|\f{(e^{ip}-1)|p|^{2s}}{p\left(|p|^{2s}+\lambda x^{2s}\right)}\right|\leq C,\qquad\forall p\in[-1,1],
 \]
 and
 \[
  \left|\f{e^{ip}}{p\left(|p|^{2s}+\lambda x^{2s}\right)}\right|\leq\f{C}{|p|^{2s+1}}\in L^1(\R\backslash[-1,1]),\qquad\forall p\in\R\backslash[-1,1],
 \]
 so that
 \[
  I_1\to\f{\imath}{2\pi}\int_{|p|\leq1}\f{e^{ip}-1}{p}\dpi,\qquad I_3\to0,\qquad\text{as}\quad x\downarrow0,
 \]
 which are independent of $s$ as well. Consequently, 
 \[
  D^{2s-1}\green(0^+)=\lim_{x\downarrow0}\left.D^{2s-1}\green(x)\right|_{s=1}=\lim_{x\downarrow0}\frac{d}{dx}\mathcal{G}_1^\lambda(x)=-\frac{1}{2}.
 \]
 In the very same way one can prove that $D^{2s-1}\green(0^-)=\frac{1}{2}$, and thus \eqref{eq-green_prop2} follows immediately.
 
 \emph{Part (iii): proof of \eqref{eq-green_prop3}}. First we note that, from \eqref{eq-flaptr} and \eqref{eq-green}, $\flaph\green\in L^2(\R)$. In addition, by definition one finds
 \[
  \green(0)=(\green,(\hamz+\lambda)\green)=\lambda\|\green\|^2+(\green,\hamz\green)=\lora{\lambda\|\green\|^2+}(\flaph\green,\flaph\green),
 \]
 which then concludes the proof.
\end{proof}

\begin{proof}[Proof of Proposition \ref{pro-linear}]
 The proof can be divided in two parts.
 
 \emph{Part (i): proof of \eqref{eq-dom2} and \eqref{eq-act2}.} First we focus on the inclusion
 \begin{equation}
  \label{eq-inc1}
  \dom(\ham)\subset\left\{\psi\in H^s(\R):D^{2s-1}\psi\in H^1(\R\backslash\{0\}),\,[D^{2s-1}\psi](0)=\alpha\psi(0)\right\}.
 \end{equation}
 If $\psi\in\dom(\ham)$, then
 \[
  \psi(x)=\phi_\lambda(x)-\alpha\psi(0)\green(x),\qquad\phi_\lambda\in H^{2s}(\R),\quad\lambda>0.
 \]
 As a consequence, since $\green\in H^s(\R)$, one immediately finds that $\psi\in H^s(\R)$. On the other hand, as $D^{2s-1}\phi_\lambda\in H^1(\R)$, recalling \eqref{eq-green_prop1} and \eqref{eq-green_prop2}, one obtains that $D^{2s-1}\psi\in H^1(\R\backslash\{0\})$ and that
 \begin{equation}
  \label{eq-salti}
  [D^{2s-1}\psi](0)=-\alpha\psi(0)[D^{2s-1}\green](0)=\alpha\psi(0),
 \end{equation}
 thus proving \eqref{eq-inc1}.
 
 On the other hand, in order to prove
 \begin{equation}
  \label{eq-inc2}
  \dom(\ham)\supset\left\{\psi\in H^s(\R):D^{2s-1}\psi\in H^1(\R\backslash\{0\}),\,[D^{2s-1}\psi](0)=\alpha\psi(0)\right\}
 \end{equation}
 it is sufficient to show that, if $\psi$ belongs to the r.h.s. of \eqref{eq-inc2}, then
 \[
  \phi_\lambda:=\psi+\alpha\psi(0)\green\in H^{2s}(\R).
 \]
 Preliminarily, we note that $\phi_\lambda\in H^s(\R)$ and that $D^{2s-1}\phi_\lambda\in H^1(\R\backslash\{0\})$. However, \eqref{eq-salti} immediately entails that $[D^{2s-1}\phi_\lambda](0)=0$ and hence $D^{2s-1}\phi_\lambda\in H^1(\R)$, which completes the proof.
 
 Finally, one easily sees that for $x\neq0$, $\hamz\green=-\lambda\green$, and thus
 \[
  \ham\psi=(\hamz+\lambda)\phi_\lambda-\lambda\psi=\hamz\phi_\lambda+\alpha\lambda\psi(0)\green=\hamz\psi,
 \]
 which proves \eqref{eq-act2}.
 
 \emph{Part(ii): proof of \eqref{eq-form}}. From \eqref{eq-green_eq} and \eqref{eq-dom1}, with some (easy) computations one has
 \[
  (\psi,\ham\psi)=(\phi_\lambda,\hamz\phi_\lambda)+2\alpha\lambda\re{\ov{\psi(0)}(\green,\phi_\lambda)}-\alpha\phi_\lambda\ov{\psi(0)}-\alpha^2\lambda|\psi(0)|^2\|\green\|^2
 \]
 for all $\psi\in\dom(\ham)$. On the other hand, we first observe that $\|\flaph\psi\|^2<\infty$ as $\psi\in H^s(\R)$, and then, arguing in an analogous way, there results
 \begin{multline*}
  \|\flaph\psi\|^2+\alpha|\psi(0)|^2=(\phi_\lambda,\hamz\phi_\lambda)+2\alpha\lambda\re{\ov{\psi(0)}(\green,\phi_\lambda)}\\[.3cm]
  -2\alpha\re{\phi_\lambda(0)\ov{\psi(0)}}+\alpha^2|\psi(0)|^2\|\flaph\green\|^2+\alpha|\psi(0)|^2.
 \end{multline*}
 As a consequence, recalling that the boundary conditions imply
 \[
  \phi_\lambda(0)=(1+\alpha\green(0))\psi(0),
 \]
 one sees that \eqref{eq-form} is satisfied if and only if
 \[
  \green(0)|\psi(0)|^2-\lambda\|\green\|^2|\psi(0)|^2=\|\flaph\green\|^2|\psi(0)|^2\qquad\forall\psi\in\dom(\ham).
 \]
 However, since this is clearly true by means of \eqref{eq-green_prop3}, one obtains that \eqref{eq-form} holds for all $\psi\in\dom(\ham)$. Finally, one can easily check that the set of the functions in $L^2(\R)$ such that $\form(\psi)<\infty$ is $H^s(\R)$, thus concluding the proof.
\end{proof}

%%%%%%%%%%%%%%%%%%%%%%%%%%%%%%%%%%%%%%%%%%%%%%%%%%%%%%%%%%%%%%%%%%%%%%%%%%%%%%%%%%
%%%%%%%%%%%%%%%%%%%%%%%%%%%%%%%%%%%%%%%%%%%%%%%%%%%%%%%%%%%%%%%%%%%%%%%%%%%%%%%%%%
%%%%%%%%%%%%%%%%%%%%%%%%%%%%%%%%%%%%%%%%%%%%%%%%%%%%%%%%%%%%%%%%%%%%%%%%%%%%%%%%%%

\end{document}